\setlist[itemize]{leftmargin=*}
\useunder{\uline}{\ul}{}
\newtheorem{thm}{Theorem}
\newtheorem{coro}{Corollary}
\newtheorem{lemma}{Lemma}
\DeclareMathOperator{\diag}{diag}
\DeclareMathOperator{\supp}{supp}
\DeclareMathOperator{\trace}{tr}
\DeclareMathOperator{\vect}{vec}
\DeclareAcronym{BEM}{
  short = BEM ,
  long  = basis expansion model ,
  short-plural = s , 
  long-plural  = s 
}
\DeclareAcronym{OMP}{
  short = OMP ,
  long  = orthogonal matching pursuit ,
  short-plural = s , 
  long-plural  = s 
}
\DeclareAcronym{SBL}{
  short = SBL ,
  long  = sparse Bayesian learning ,
  short-plural = s , 
  long-plural  = s 
}
\begin{document}

\title{Efficient Off-Grid Bayesian Parameter Estimation for Kronecker-Structured Signals}

\author{Yanbin He and Geethu Joseph
        % <-this % stops a space ~\IEEEmembership{Staff,~IEEE,}
\thanks{
The material in this paper was presented
in part in~\cite{he2023kronecker}.

The authors are with the Signal Processing Systems group, Electrical Engineering, Mathematics, and Computer Science faculty, at the Delft University of Technology, The Netherlands. Emails:$\{\text{y.he-1, g.joseph}\}\text{@tudelft.nl}$.}% <-this % stops a space
% \thanks{Manuscript received April 19, 2021; revised August 16, 2021.}
}

% The paper headers
% \markboth{Journal of \LaTeX\ Class Files,~Vol.~14, No.~8, August~2021}%
% {Shell \MakeLowercase{\textit{et al.}}: A Sample Article Using IEEEtran.cls for IEEE Journals}

% \IEEEpubid{0000--0000/00\$00.00~\copyright~2021 IEEE}
% Remember, if you use this you must call \IEEEpubidadjcol in the second
% column for its text to clear the IEEEpubid mark.

\maketitle

\begin{abstract}
%The multidimensional signals refer to the signal jointly from multiple dimensions, e.g., spatial, temporal, and frequency. Such signals inherit information from multiple dimensions and can be used to estimate unknown parameters in these dimensions, and generally involve the Kronecker product. 
This work studies the problem of jointly estimating unknown parameters from Kronecker-structured multidimensional signals, which arises in applications like intelligent reflecting surface (IRS)-aided channel estimation. Exploiting the Kronecker structure, we decompose the estimation problem into smaller, independent subproblems across each dimension. Each subproblem is posed as a sparse recovery problem using basis expansion and solved using a novel off-grid sparse Bayesian learning (SBL)-based algorithm. Additionally, we derive probabilistic error bounds for the decomposition, quantify its denoising effect, and provide convergence analysis for off-grid SBL. Our simulations show that applying the algorithm to IRS-aided channel estimation improves accuracy and runtime compared to state-of-the-art methods through the low-complexity and denoising benefits of the decomposition step and the high-resolution estimation capabilities of off-grid SBL.
%Our simulation results evaluate the decomposition step and off-grid SBL separately, highlighting the computational efficiency and denoising benefits of the decomposition step and the high-resolution estimation capabilities of off-grid SBL. Finally, applying our algorithm to IRS-aided channel estimation demonstrates improved accuracy and reduced runtime compared to the state-of-the-art methods, driven by the combined effects of decomposition and grid optimization. 
\end{abstract}

\begin{IEEEkeywords}
Sparse Bayesian learning, higher-order SVD,  intelligent reflecting surface, channel estimation, basis expansion
\end{IEEEkeywords}

\section{Introduction}\label{sec.intro}

\IEEEPARstart{M}{ultidimensional} signals arise in several engineering applications such as image processing \cite{caiafa2012block,caiafa2013computing,caiafa2013multidimensional} and wireless communications \cite{zhou2017low,wang2024tensor,he2023bayesian}. In these contexts, the data is represented as a function of different dimensions, each conveying a specific physical quantity. For example, in the uplink narrowband intelligent reflecting surface (IRS)-aided system, the received signal at the base station (BS) from the mobile station (MS) is a function of angle-of-departure (AoD) at MS, the difference of angle-of-arrival (AoA) and AoD at the IRS, and AoA at BS~\cite{he2022structure}. Considering the angular domain of each array as a separate dimension, this signal is multidimensional \cite{he2022structure,he2023bayesian}. This structure is captured by the Kronecker product \cite{ortiz2019sparse}, leading to the fundamental model,
\begin{equation}\label{eq.kro_basic}
    \bar{\bm y} = \bm y_1 \otimes \bm y_2 \otimes \cdots \otimes \bm y_I +\bar{\bm n}= \otimes_{i=1}^I \bm y_i +\bar{\bm n},
\end{equation}
where $\otimes_{i=1}^I \bm y_i\in \mathbb{C}^{\bar{M}}$ is an $I$-dimensional signal, $\bm y_i\in \mathbb{C}^{M_i}$ represents the signal in each dimension, $\bar{\bm n}$ is the noise, $\bar{M}=\prod_{i=1}^IM_i$, and $\otimes$ is the Kronecker product. Each $\bm y_i$ encapsulates the signal in the corresponding dimension (e.g., AoA and AoD) and is expressed as a weighted sum of nonlinear parametric functions,
\begin{equation}\label{eq.dimen_linear}
  \bm y_i = \sum_{s=1}^{S_i}\bm h_i(\bar{\psi}_{i,s}) \bar{x}_{i,s},
\end{equation}
with parameters $\bar{\psi}_{i,s}$ and weights $\bar{x}_{i,s}$, for $s=1,2,\ldots,S_i$, where $\bm h_i(\cdot)\in\mathbb{C}^{M_i}$ is the nonlinear function. In the IRS-aided system example, the parameter $\bar{\psi}_{i,s}$ can be AoAs or AoDs, the nonlinear function $\bm h_i$ is related to the steering vector, and $\bar{x}_{i,s}$ represents the path gain corresponding to each AoA or AoD. Thus, the channel estimation problem reduces to estimating all $\bar{\psi}_{i,s}$'s and $\bar{x}_{i,s}$'s from the received signal $\bar{\bm y}$, where $S_i$'s are also unknown. Hence, this paper focuses on the general problem of estimating the parameters and weights $\{\bar{\psi}_{i,s},\bar{x}_{i,s}\}_{s=1}^{S_i}$ from measurements $\bar{\bm y}$ and the function $\bm h_i(\cdot)$ for all $I$ dimensions. \IEEEpubidadjcol

A popular approach for parameter estimation from~\eqref{eq.kro_basic} and~\eqref{eq.dimen_linear} is multidimensional \ac{BEM}~ \cite{caiafa2012block,caiafa2013computing,caiafa2013multidimensional,chang2021sparse,xu2022sparse,he2022structure}. It evaluates the nonlinear function over pre-sampled \emph{grids} of unknown parameters in each dimension to express the multidimensional signal as the product of a known overcomplete Kronecker-structured dictionary of the basis functions and an unknown sparse coefficient vector as
\begin{equation}\label{eq.problem_basic}
    \bar{\bm y} = \left(\otimes_{i=1}^I \bm H_i\right) \bm x+\bar{\bm n}.
\end{equation}
Here, $\bar{\bm y} \in \mathbb{C}^{\bar{M}}$ is the measurement, $\bm H_i \in \mathbb{C}^{M_i \times N_i}$ is the overcomplete basis for $\bm h_i(\cdot)$, and $\otimes_{i=1}^I \bm H_i \in \mathbb{C}^{\bar{M} \times \bar{N}}$ is the overall dictionary with $\bar{M} < \prod_{i=1}^I N_i = \bar{N}$. Also, $\bm x \in \mathbb{C}^{\bar{N}}$ is the unknown sparse vector, and $\bar{\bm n}$ is the measurement noise. The model in \eqref{eq.dimen_linear} leads to a Kronecker-structure in $\bm x$ given by
\begin{equation}\label{eq.vect_kro}
    \bm x = \otimes_{i=1}^I \bm x_i,
\end{equation}
where $\bm x_i \in \mathbb{C}^{N_i}$ is the weight in the $i$th dimension. The overcomplete dictionary makes $\bm x_i$ sparse, with only $S_i$ nonzero entries corresponding to the true parameters. Thus, estimating parameters and coefficients is a sparse recovery problem. 

Solving \eqref{eq.problem_basic} for a sparse vector $\bm x$ with Kronecker-structured support has been discussed in \cite{chang2021sparse,xu2022sparse,he2023bayesian,caiafa2013computing}. A greedy method, Kronecker-\ac{OMP}, generalizes the traditional \ac{OMP} to multidimensional \ac{BEM}~\cite{caiafa2013computing}. It has low complexity but requires hand-tuning of a sensitive stopping threshold~\cite{he2023bayesian}. Another approach with improved accuracy and no hand-tuning relies on \ac{SBL}. It adopts a fictitious Gaussian prior on the sparse vector with a Kronecker-structured covariance matrix to enforce a Kronecker-structured support~\cite{chang2021sparse,xu2022sparse,he2023bayesian}. %However, the derivations in \cite{chang2021sparse,xu2022sparse} rely on approximations, leading to compromised performance. \cite{he2023bayesian} further devises Kronecker-\ac{SBL} (KroSBL), which not only effectively incorporates the prior knowledge of the Kronecker-structured support but also has theoretically guaranteed performance. 
Although Kronecker-\ac{SBL} (KroSBL) can be readily applied to our problem, it has two main drawbacks, as elaborated below.

First, KroSBL does not fully exploit prior knowledge~\eqref{eq.vect_kro}. 
%KroSBL imposes a fictitious Gaussian prior on $\bm x$ with Kronecker-structured covariance matrix to mimic the Kronecker structure in~\eqref{eq.vect_kro}. However, since 
While KroSBL employs a Kronecker-structured covariance matrix, 
the variance only determines whether an entry is nonzero; in effect, KroSBL only exploits the Kronecker structure of the support vector. Consequently, KroSBL directly estimates the high-dimensional vector $\bm x$. Although the state-of-the-art KroSBL algorithm employs some complexity reduction techniques~\cite{he2023bayesian}, it still faces high overall complexity due to the high dimensionality of the Kronecker product. So, we seek a method that can fully exploit the Kronecker structure while significantly reducing complexity compared to KroSBL. 

Second, KroSBL relies on multidimensional \ac{BEM}, formulating the dictionary using predefined grids. However, the true parameters may not fall on these grids, causing a \emph{grid mismatch} issue \cite{zhu2011sparsity}, which can degrade the estimation performance and cannot be resolved with finer grids \cite{tang2013compressed}. Addressing mismatch is challenging due to the \emph{non-linearity} of the \ac{SBL} cost function. 
%To handle non-linearity, linearization and marginal likelihood maximization are commonly adopted. Linearization approximates the non-linear function around the grids, often with Taylor expansion \cite{zhu2011sparsity,yang2012off,you2020parametric}. Off-grid optimization is then achieved by seeking the first-order coefficient of the grids rather than directly optimizing the grids. However, such an approximation is valid only in the vicinity of grid points \cite{ibrahim2014estimation}. Finer grids can alleviate but cannot fully eliminate this issue~\cite{mao2021marginal,dai2018fdd}. 
Linearization methods, such as Taylor expansion~\cite{zhu2011sparsity,yang2012off,you2020parametric}, approximate the non-linear function near grids and optimize first-order coefficients, but their validity is limited to grid vicinities~\cite{mao2021marginal,dai2018fdd}. Alternatively, some works use marginal likelihood maximization to sequentially optimize each grid's contribution to the maximum likelihood (ML) cost ~\cite{faul2001analysis}, to either refine~\cite{liu2012efficient,mao2021marginal} or determine the grid points~\cite{pote2023light,ament2021sparse}. 
%The second approach, marginal likelihood maximization, allows isolation of each grid's contribution to the maximum likelihood (ML) cost function, enabling alternating optimization for each grid rather than simultaneous optimization of all grids~\cite{faul2001analysis}. This method can either refine the grid based on a rough estimate \cite{liu2012efficient,mao2021marginal} or directly determine which grid points to include or exclude in a forward or backward regression manner \cite{pote2023light,ament2021sparse}. 
However, this approach exhibits a greedy nature~\cite{ament2021sparse} and can lead to performance degradation as the number of unknowns increases \cite{lin2022covariance}. The drawbacks of existing multidimensional \ac{BEM} and grid-less SBL algorithms motivate novel approaches to solving our parameter estimation problem. 

%This work focuses on estimating the parameters $\{\bar{\psi}_{i,s}\}_{s=1}^{S_i}$ and weights $\{\bar{x}_{i,s}\}_{s=1}^{S_i}$ for all $I$ dimensions using $\bar{\bm y}$ in~\eqref{eq.kro_basic} and~\eqref{eq.dimen_linear}. 
We aim to develop a method for estimating the parameters and weights $\{\bar{\psi}_{i,s},\bar{x}_{i,s}\}_{s=1}^{S_i}$ for all $I$ dimensions using $\bar{\bm y}$ in~\eqref{eq.kro_basic} and~\eqref{eq.dimen_linear} with three key features: $(i)$ fully utilizing the Kronecker structure in~\eqref{eq.kro_basic}; $(ii)$ overcoming the grid mismatch of linearization and marginal likelihood optimization; and $(iii)$ achieving lower complexity compared to KroSBL. Our algorithm follows the \ac{BEM} paradigm using \ac{SBL} and enjoys the theoretical guarantees. Our main contributions are as follows:
\begin{itemize}
    \item \emph{Decomposition-based Algorithm}: We decompose the measurement $\bar{\bm y}$ into multiple low-dimensional measurements, fully utilizing the prior information of the Kronecker structure in Sec.~\ref{sec.decomposition}. It transforms the joint multidimensional unknown parameters estimation into multiple separate subproblems in each dimension, leading to reduced complexity.

    \item \emph{Off-grid Algorithm}: We use \ac{BEM} for parameters estimation in each dimension and cast it into a sparse vector recovery problem solved using the expectation-maximization (EM)-based \ac{SBL} in Sec.~\ref{sec.mbem}. We further incorporate a grid optimization step in the EM iterations to avoid grid mismatch.

    \item \emph{Algorithm Analyses and Extensions}: We study the decomposition step and the iterative grid optimization in Sec.~\ref{sec.theoretical}. We theoretically quantify the error bound of the decomposition step in the presence of noise and the denoising effect which we attribute to the better estimation performance. We discuss the convergence property of our algorithm and its potential extensions to related measurement structures arising from other scenarios in Sec.~\ref{sec.extension}.

    \item \emph{Application}: In Sec.~\ref{sec.channelmodel}, we analyze the signal model of a prototypical IRS-aided wireless communication system and explain the implementation of our algorithm for uplink cascaded IRS channel estimation.

    \item \emph{Numerical Results}: 
We evaluate our schemes in three scenarios in Sec.~\ref{sec.simu}. The first scenario highlights the computational efficiency and denoising benefits of the decomposition method. The second scenario demonstrates the high-resolution estimation capabilities of off-grid SBL. The third scenario focuses on IRS channel estimation, showcasing improved accuracy and reduced runtime, driven by the combined effects of decomposition and off-grid SBL.
     %where decomposition and off-grid SBL are collectively tested. Our approach offers better accuracy and lower runtime.
\end{itemize}
In short, our algorithm estimates parameters from Kronecker-structured multidimensional signals, tackling grid mismatch and high complexity through two key techniques: decomposition and off-grid SBL. These techniques are of independent interest and can be applied separately, depending on the specific signal model. Compared to our preliminary work \cite{he2023kronecker}, in this work, we make novel contributions: $(i)$ a new off-grid Bayesian algorithm to handle the grid mismatch issue, $(ii)$ theoretical analyses for both decomposition and off-grid SBL, and $(iii)$ comprehensive numerical results that jointly evaluate the decomposition and off-grid for IRS channel estimation. 

\emph{Notation:} We use $[I]$ to denote the set $\{1,2,\cdots,I\}$ and the symbols $\otimes$, $\odot$, $\circ$, and $\times_i$ to denote Kronecker, Khatri-Rao, tensor outer product, and tensor $i$th mode product, respectively.

%Boldface small letters denote vectors, and boldface capital letters denote matrices. The symbol $\|\cdot\|_p$, $\|\cdot\|_\mathrm{F}$, and $\|\cdot\|$ denote the vector $\ell_p$ norm, the matrix Frobenius norm, and the matrix spectral norm, respectively. The symbols $(\cdot)^\mathsf{T}$, $(\cdot)^*$, $(\cdot)^\mathsf{H}$, and $\trace(\cdot)$ are the matrix operations of transpose, conjugate, conjugate transpose, and trace, respectively. If the argument is a vector, operator $\diag(\cdot)$ returns a diagonal matrix with the argument along the diagonal, and it returns a vector of its diagonal entries if the argument is a matrix. Also, $\otimes$, $\odot$, and $\circ$ represent the Kronecker product, the Khatri-Rao product, and the tensor outer product, respectively. We use $\mathcal{N}(\bm a,\bm B)$ and $\mathcal{CN}(\bm a,\bm B)$ to denote the real Gaussian distribution and the complex Gaussian distribution with mean $\bm a$ and covariance $\bm B$, respectively. The set of real, complex matrices of size $M\times N$ is represented by $\mathbb{R}^{M\times N}$ and $\mathbb{C}^{M\times N}$, respectively. We use $[I]$ to denote the set $\{1,2,\cdots,I\}$ and $i\in[I]$ means $\forall \; i=1,2,\cdots,I$.

\section{Off-grid Sparse Recovery Algorithm for Kronecker-structured Measurements}\label{sec.kronecker_re}

In this section, we study the parameter estimation problem with Kronecker-structured measurements. The signal model is
\begin{equation}\label{eq.Kro_generate_data}
    \bar{\bm y} = \otimes_{i=1}^I\bar{\bm H}_{i,\bar{\bm \psi}_i} \bar{\bm x}_i+\bar{\bm n} = \otimes_{i=1}^I \bm y_i +\bar{\bm n},
\end{equation}
where the noise term in $\bar{\bm n}$ need not be Kronecker-structured. For $i\in [I]$, the matrix $\bar{\bm H}_{i,\bar{\bm\psi}_i}$ is parameterized by $\bar{\bm \psi}_i\coloneq\left[\bar{\psi}_{i,1},\ldots,\bar{\psi}_{i,S_i}\right]^\mathsf{T} \in \mathbb{R}^{S_i}$ as follows,
\begin{equation*}
    \bar{\bm H}_{i,\bar{\bm\psi}_i}\coloneq\begin{bmatrix}
        \bm h_i(\bar{\psi}_{i,1})&\cdots&\bm h_i(\bar{\psi}_{i,S_i})
    \end{bmatrix}\in \mathbb{C}^{M_i \times S_i},
\end{equation*}
where $\bm h_i\in\mathbb{C}^{M_i}$ is a \emph{known} and \emph{continuous} column function. The scalar $S_i$ is the number of unknowns in $\bar{\bm H}_{i,\bar{\bm\psi}_i}$. We assume 
%$-\infty< \psi_{i,\mathrm{l}} \leq \bar{\psi}_{i,1} < \cdots < \bar{\psi}_{i,S_i} \leq \psi_{i,\mathrm{r}} < +\infty$, where 
$\bar{\psi}_{i,s}\in[\psi_{i,\mathrm{l}},\psi_{i,\mathrm{r}}]$, a \emph{known} compact range of the unknown parameters, and the goal is to estimate $\bar{\bm\psi}_i$ and $\bar{\bm x}_i$ from \eqref{eq.Kro_generate_data}.

To ensure identifiability of $\bar{\bm \psi}_i$, we assume $\bm h_i(\psi_p)\neq \bm h_i(\psi_q)$ for any $\psi_p \neq \psi_q$. Identificability of $\bar{\bm x}_i$ is limited by the Kronecker structure, i.e., for scalars $\{\alpha_i\}_{i=1}^I$ with $\prod_{i=1}^I\alpha_i=1$, the set of vectors $\{\bar{\bm x}_i\}_{i=1}^I$ and $\{\alpha_i\bar{\bm x}_i\}_{i=1}^I$ both result in $\bar{\bm y}$ when combined with a given noise vector $\bar{\bm n}$. However, in many applications (e.g., channel estimation \cite{he2022structure,zhou2017low}), the goal is to recover the solution up to a scaling factor, as we later elaborate in Sec.\ref{sec.channelmodel}. Therefore, we aim to jointly obtain $\bar{\bm \psi}_i$ and the coefficient $\bar{\bm x}_i$ up to scaling ambiguities, given $i)$ measurement $\bar{\bm y}$, $ii)$ vector function $\bm h_i$, and $iii)$ range $[\psi_{i,\mathrm{l}},\psi_{i,\mathrm{r}}]$ for $i\in[I]$. 

We devise a two-step solution: the first step decomposes~\eqref{eq.Kro_generate_data} into $I$ subproblems, each estimating $\bar{\bm \psi}_i$ and $\bar{\bm x}_i$, and the second step solves these subproblems using a gridless approach. 

\subsection{Step 1: Decomposition-based Algorithm}\label{sec.decomposition}

To develop the decomposition algorithm, 
%we start with the noiseless set of linear equations by setting $\bar{\bm n}=\bm 0$,
% \begin{equation}\label{eq.problem_basic_noiseless}
%     \bar{\bm y} = \otimes_{i=1}^I\bar{\bm H}_{i,\bar{\bm \psi}_i} \bar{\bm x}_i.
% \end{equation}
% We 
we use Lemma~\ref{lmm.kron_equations_separable} for the noiseless set of linear equations, $\bar{\bm y} = \otimes_{i=1}^I\bar{\bm H}_{i,\bar{\bm \psi}_i} \bar{\bm x}_i$. 
\begin{lemma}\label{lmm.kron_equations_separable}
    \cite[Lemma 4]{he2023bayesian} Consider linear equations $\bm y_1 \otimes \bm y_2=\left(\bm H_1 \otimes \bm H_2 \right)\left(\bm x_1\otimes \bm x_2 \right) \neq \bm 0$. Solving for $\bm x_1\otimes\bm x_2$ from the equations is equivalent to solving for  $\bm x_1$ and $\bm x_2$ from $\bm H_1 \left(\alpha \bm x_1\right) = \bm y_1$ and $\bm H_2 \left(\alpha^{-1}\bm x_2\right) = \bm y_2 $, for any scalar $\alpha\neq 0$.%, accounting for the scaling ambiguity.
\end{lemma}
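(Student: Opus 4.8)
The plan is to reduce the Kronecker-structured equation to a pair of per-factor equations by exploiting the mixed-product property $(\bm H_1 \otimes \bm H_2)(\bm x_1 \otimes \bm x_2) = (\bm H_1 \bm x_1) \otimes (\bm H_2 \bm x_2)$. Applying this to the hypothesis rewrites the original system as $(\bm H_1 \bm x_1) \otimes (\bm H_2 \bm x_2) = \bm y_1 \otimes \bm y_2 \neq \bm 0$, and the claimed equivalence then amounts to showing that the solution set of this identity coincides with that of the decomposed pair $\bm H_1(\alpha \bm x_1) = \bm y_1$ and $\bm H_2(\alpha^{-1}\bm x_2) = \bm y_2$. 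I would establish the two inclusions separately.

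The forward inclusion is immediate: if $\bm x_1,\bm x_2$ satisfy the decomposed pair for some $\alpha \neq 0$, then $\bm H_1 \bm x_1 = \alpha^{-1}\bm y_1$ and $\bm H_2 \bm x_2 = \alpha \bm y_2$, so $(\bm H_1 \bm x_1) \otimes (\bm H_2 \bm x_2) = (\alpha^{-1}\alpha)\,(\bm y_1 \otimes \bm y_2) = \bm y_1 \otimes \bm y_2$, recovering the original equation. The reciprocal scalars cancel, which is precisely why any fixed $\alpha \neq 0$ works.

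For the reverse inclusion I would invoke the uniqueness of the rank-one factorization of a nonzero Kronecker product, which is the crux of the argument. Setting $\bm a = \bm H_1 \bm x_1$ and $\bm b = \bm H_2 \bm x_2$, the identity $\bm a \otimes \bm b = \bm y_1 \otimes \bm y_2 \neq \bm 0$ is equivalent, via the correspondence $\bm u \otimes \bm v = \vect(\bm v \bm u^\mathsf{T})$, to the rank-one matrix equation $\bm b \bm a^\mathsf{T} = \bm y_2 \bm y_1^\mathsf{T}$. Since both sides are nonzero and of rank one, their outer factors agree up to a nonzero scalar: matching column and row spaces gives $\bm y_2 = \gamma \bm b$ and $\bm y_1 = \delta \bm a$ with $\gamma\delta = 1$, hence $\bm y_1 = \beta \bm H_1 \bm x_1$ and $\bm y_2 = \beta^{-1}\bm H_2 \bm x_2$ for $\beta = \gamma^{-1}$. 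This is exactly the decomposed pair with $\alpha = \beta$. I would prove the scalar-proportionality step either by this reshaping plus the standard fact that a rank-one matrix determines its factors up to reciprocal scaling, or by a direct entrywise comparison $a_i b_j = y_{1,i} y_{2,j}$ anchored at an index where the product is nonzero (which exists because $\bm y_1 \otimes \bm y_2 \neq \bm 0$).

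Finally, I would note that $\bm x_1 \otimes \bm x_2 = (\alpha \bm x_1) \otimes (\alpha^{-1}\bm x_2)$ for every $\alpha \neq 0$, so the recovered product $\bm x_1 \otimes \bm x_2$ is insensitive to the choice of $\alpha$; this confirms that solving the decomposed pair for any fixed nonzero $\alpha$ returns the same Kronecker product and closes the equivalence. The only genuine obstacle is the uniqueness-of-rank-one-factorization step, since the remainder is bookkeeping with the mixed-product property and the cancellation of $\alpha$.
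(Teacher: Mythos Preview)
Your argument is correct and follows the natural route via the mixed-product identity and the uniqueness, up to reciprocal scaling, of the factors in a nonzero rank-one outer product; the final remark that $\bm x_1\otimes\bm x_2=(\alpha\bm x_1)\otimes(\alpha^{-1}\bm x_2)$ is what ties the reverse inclusion to an \emph{arbitrary} fixed $\alpha\neq 0$ rather than only the particular $\beta$ produced by the factorization. Note, however, that the paper does not supply its own proof of this lemma: it is quoted verbatim from \cite[Lemma~4]{he2023bayesian} and used as a black box, so there is no in-paper argument to compare against.
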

Lemma~\eqref{lmm.kron_equations_separable} indicates that we can estimate individual vectors $\bm x_1$ and $\bm x_2$, up to a scaling ambiguity $\alpha$. Therefore, if $\bar{\bm y}$ is split into $I$ low-dimensional vectors $\{\hat{\bm y}_i \in \mathbb{C}^{M_i}\}_{i=1}^I$, then~\eqref{eq.Kro_generate_data} in the noiseless case ($\bar{\bm n}=\bm 0$) can be decomposed into $I$ subproblems, each with ambiguity $\{\alpha_i \neq 0\}_{i=1}^I$ with $\prod_{i=1}^I \alpha_i = 1$. This approach allows solving for $\bar{\bm x}_i$ individually, rather than jointly. We now discuss the decomposition of $\bar{\bm y}$ into low-dimensional vectors ${\hat{\bm y}_i}$'s, in both noiseless and noisy cases.

\subsubsection{Noiseless Setting and Higher-order Singular Value Decomposition (HOSVD)}
In the noiseless case, we aim to find $\{\hat{\bm y}_i\}_{i=1}^I$ such that $\bar{\bm y} = \bm y=\otimes_{i=1}^I \hat{\bm y}_i$. This can be achieved using HOSVD applied to the tensor representation of $\bar{\bm y}$. Using \eqref{eq.kro_basic}, $\bar{\bm y}$ can be represented as an $I$th order tensor\footnote{From \cite{cichocki2015tensor}, $\vect(\mathcal{\bm Y}) = \vect(\circ_{i=1}^I \bm y_i) = \otimes_{i=I}^1 \bm y_i$ where the subscript is descending. For simplicity, we use ascending subscripts in the tensor outer product, resulting in $\mathcal{\bm Y}$ and $\bar{\bm y}$ containing identical entries, albeit reordered.} $\mathcal{\bm Y} = \circ_{i=1}^I \bm y_i \in \mathbb{C}^{M_1 \times \cdots \times M_I}$, where $\circ$ is the tensor outer product. Then, its $i^*$th mode matricization is 
\begin{equation}\label{eq.matricization}
    \bm Y_{(i^*)} = \bm y_{i^*}\left( \left(\otimes_{i=I}^{i^*+1}\bm y_i\right) \otimes \left(\otimes_{i=i^*-1}^1 \bm y_i  \right)\right)^\mathsf{T},
\end{equation}
where $(\cdot)^\mathsf{T}$ is the transpose operator. Now, an estimate $\hat{\bm y}_{i^*}$ of $\bm y_{i^*}$ up to scaling ambiguities is the left leading singular vector $\bm e_{i^*}$ of the rank-one matrix $\bm Y_{(i^*)}$, i.e., $\hat{\bm y}_{i^*}=\bm e_{i^*}$, for $i^*\in [I-1]$. Also, the estimate $\hat{\bm y}_{I}$ is $\bm e_I$ multiplied by the leading singular value of $\bm Y_{(I)}$, ensuring $\otimes_{i=1}^I\hat{\bm y}_i=\otimes_{i=1}^I\bm y_i$. The decomposition is called the HOSVD, assuming a multilinear rank of $(1,\cdots,1)$ due to the Kronecker structure~\cite{de2000multilinear,cichocki2015tensor,zhang2018tensor}.

\subsubsection{Noisy Case and Truncated HOSVD}\label{sec.trun_hosvd}
Extending to the noisy setting, the decomposition step becomes 
\begin{equation}\label{eq.noisy_decom}
    \{\hat{\bm y}_i\}_{i=1}^I = \underset{\{\bm z_i\in\mathbb{C}^{M_i}\}_{i=1}^I}{\arg\min}\| \bar{\bm y} - \otimes_{i=1}^I \bm z_i \|_2,
\end{equation}
where $\|\cdot\|_2$ is the vector $\ell_2$ norm. We see that~\eqref{eq.noisy_decom} is the same as seeking a tensor $\hat{\mathcal{\bm Y}}=\circ_{i=1}^I \bm z_i$ with multilinear rank $(1,\cdots,1)$ from measurement tensor $\bar{\mathcal{Y}}$ obtained from $\bar{\bm y}$~as
\begin{equation}\label{prob.hosvd}
    \underset{\hat{\mathcal{\bm Y}}}{\min}\| \bar{\mathcal{\bm Y}} - \hat{\mathcal{\bm Y}} \|_\mathrm{F}\ \ \text{s.t.}\ \;\text{multilinear rank of}\ \hat{\mathcal{\bm Y}}\ \text{is}\ (1,\cdots,1),
\end{equation}
where $\|\cdot\|_\mathrm{F}$ is the Frobenius norm. Unlike the noiseless case, here the $i^*$th mode matricization $\bar{\bm Y}_{(i^*)}$ of $\bar{\mathcal{\bm Y}}$ is not rank-one due to noise. We solve \eqref{prob.hosvd} through the truncated HOSVD, where only the left leading singular vector is selected. We obtain  $\hat{\mathcal{\bm Y}} = \xi \times_1 \bm e_1\cdots\times_I\bm e_I$ and $\xi = \bar{\mathcal{\bm Y}} \times_1 \bm e_1^\mathsf{H}\cdots \times_I \bm e_I^\mathsf{H}$, where $\bm e_{i^*}$ is the left leading singular vector of the $i^*$th mode matricization $\bar{\bm Y}_{(i^*)}$ of $\bar{\mathcal{Y}}$ \cite{balda2016first}. %  and $\xi$ is a scalar given by $ \xi = \bar{\mathcal{\bm Y}} \times_1 \bm e_1^\mathsf{H}\cdots \times_I \bm e_I^\mathsf{H}$.
% \begin{equation}
%     \xi = \bar{\mathcal{\bm Y}} \times_1 \bm e_1^\mathsf{H}\cdots \times_I \bm e_I^\mathsf{H}.
% \end{equation}
Here, operator $\times_i$ is the $i$th tensor mode product and $(\cdot)^\mathsf{H}$ is the conjugate transpose. %For example, $\bar{\mathcal{\bm Y}} \times_1 \bm e_1^\mathsf{H}=\bm e_1^\mathsf{H}\bar{\bm Y}_{(1)}$. 
Then, a solution to \eqref{eq.noisy_decom} is $\hat{\bm y}_{i}=\bm e_i$ for $i\in [I-1]$ and $\hat{\bm y}_{I}=\xi\bm e_I$.% ensures $\{\hat{\bm y}_i\}_{i=1}^I$ is a solution to \eqref{prob.hosvd}.%, naturally resolving the nonuniqueness in the decomposing $\bar{\bm y}$.

\subsubsection{A Low-complexity Approximation}
When $I$ and $M_i$ are large, HOSVD can become computationally intensive due to the singular value decomposition (SVD) needed to obtain $\bm e_i$ for $i\in [I-1]$. Hence, we offer a low-complexity method using recursive SVD-based rank-one approximations, 
\begin{equation}\label{prob.bidecom}
(\hat{\bm y}_{i},\bar{\bm y}_{i})=\underset{\substack{(\bm{z}_i\in\mathbb{C}^{M_i},\bar{\bm z}_i),\|\bm z_i\|_2=1}}{\arg\min}\|\bar{\bm y}_{i-1} - \bm z_i\otimes\bar{\bm z}_i\|_2,
\end{equation}
for $i\in[I-1]$ where $\bar{\bm y}_0=\bar{\bm y}$ and $\bar{\bm y}_{I-1}=\hat{\bm y}_I$. For example, we consider the case when $i=1$. We rearrange $\bar{\bm y}$ as $\bar{\bm Y}\in\mathbb{C}^{\bar{M}/M_1\times M_1}$ where $\vect(\bar{\bm Y})=\bar{\bm y}$. Since $\bm z_i\otimes \bar{\bm z}_i=\vect(\bar{\bm z}_i\bm z_i^\mathsf{T})$, \eqref{prob.bidecom} is equivalent to a rank-one approximation that minimizes $\|\bar{\bm Y} - \bar{\bm z}_i\bm z_i^\mathsf{T}\|_\mathrm{F}$, and $\hat{\bm y}_i$ is the leading singular vector of~$\bar{\bm Y}$. 

Compared to HOSVD, here, the problem dimension decreases with $i$ as $\bar{\bm z}_i\in\mathbb{C}^{\prod_{j>i}M_j}$, and the overall complexity is dominated by the first step, i.e., $i=1$. Besides, in the noiseless case,~\eqref{prob.bidecom} and HOSVD yield the same solution. 

Combining the decomposition step for $\bm y$ obtaining $\{\hat{\bm y}_i\}_{i=1}^I$ with Lemma~\ref{lmm.kron_equations_separable}, we  break down the original $\bar{M}$-dimensional problem into $I$ subproblems of dimensions $\{M_i\}_{i=1}^{I}$,
\begin{equation}\label{eq.deco_linear_inversion}
    \hat{\bm y}_i = \bar{\bm H}_{i,\bar{\bm \psi}_i} \bar{\bm x}_i + \bar{\bm n}_i,\ \ i\in[I],
\end{equation}
which can be solved in parallel. Here, we assume $\alpha_i = 1$ without loss of generality, as we seek solutions up to a scaling factor. The decomposition fully exploits the Kronecker structure in the measurements, aiding denoising (see Sec.~\ref{sec:estimation_acc}) and reducing the complexity (see Sec.~\ref{sec.extension}). Before presenting these analyses, we first develop an algorithm to estimate $\bar{\bm \psi}_i$ and $\bar{\bm x}_i$ from \eqref{eq.deco_linear_inversion} for a given $i$.

\begin{algorithm}[t]
\caption{OffSBL}
\label{al.offSBL}
\begin{algorithmic}[1]
\Statex \textit {\bf Input:} Measurement $\hat{\bm y}$, the number of initial grids $N$, the range $[\psi_{\mathrm{l}},\psi_{\mathrm{r}}]$, and thresholds $\epsilon_1,\epsilon_2<1$

\State Set $r=0$, initialize $\bm \gamma^{(0)}=\bm 1$, and initialize $\bm \psi^{(0)}$ with uniform samples $\{\theta_n\}_{n=1}^N$ from $[\psi_{\mathrm{l}},\psi_{\mathrm{r}}]$. 
\Repeat

\State Compute $\bm \mu_{\mathsf{x}}$ and $\bm \Sigma_{\mathsf{x}}$ using~\eqref{eq.post_meva}  
% \Comment{E-step}

\State Update $\bm \gamma^{(r+1)}$ using~\eqref{eq.qfunc}
\State Set $t=0$, initialize $\bm \psi^{(r,t)} = \bm \psi^{(r)}$

\Repeat

\State Obtain ${\psi}_{n^*}^{(r,t+1)}$ by solving~\eqref{eq.single_op}, for $n^*\in[N]$

\State Set $t=t+1$
\Until {$\|\bm \psi^{(t+1)} - \bm \psi^{(t)}\| < \epsilon_1$}

\State Let $\bm \psi^{(r+1)} = \bm \psi^{(r,t)}$
\State Obtain $(\sigma^2)^{(r+1)}$ using~\eqref{eq.noise_est} and set $r=r+1$
\Until{$\|\bm \gamma^{(r+1)} - \bm \gamma^{(r)}\|_2/\|\bm \gamma^{(r)}\|_2 < \epsilon_2$}
\Statex \textit {\bf Output:} Estimated $\bm x = \bm \mu_{\mathsf{x}}$.
\end{algorithmic}
\end{algorithm}

\begin{algorithm}[t]
\caption{Decomposition-based SBL (dSBL)}
\label{al.dKroSBL}
\begin{algorithmic}[1]
\Statex \textit {\bf Input:} Measurement $\bar{\bm y}$, dictionaries $\bm H_i$ for $i\in[I]$
% \Statex \textit {\bf Output:} Sparse vector $\bm x $

\For {$i\in[I]$}
\State Solve~\eqref{prob.hosvd} or~\eqref{prob.bidecom} to obtain $\hat{\bm y}_i$
\State Solve~\eqref{eq.deco_linear_inversion} for $\bm x_i$ using Algorithm~\ref{al.offSBL}
\EndFor
\Statex \textit {\bf Output:} For $i\in[I]$, $\bar{\bm x}_i$ corresponds to the nonzero entries of $\bm x_i$ and $\bar{\bm \psi}_i$ corresponds to the support of $\bm x_i$.
\end{algorithmic}
\end{algorithm}

\subsection{Step 2: Off-grid SBL-based Estimation Algorithm}\label{sec.mbem}
In each dimension, the subproblem takes the general form of $\hat{\bm y} = \bar{\bm H}_{\bar{\bm \psi}} \bar{\bm x} + \bar{\bm n}$ for a nonlinear function $\bm h$ when entries of $\bar{\bm \psi}$ belong to $[\psi_{\mathrm{l}},\psi_{\mathrm{r}}]$, where we drop the dimension index $i$. Here, we adopt BEM by discretizing $[\psi_{\mathrm{l}},\psi_{\mathrm{r}}]$ with grid points $\bm \psi\in\mathbb{R}^N$ with the $n$th grid point $\psi_n$, forming the BEM dictionary 
\begin{equation*}%\label{eq.bem_steering}
    \bm H(\bm \psi)\coloneq\begin{bmatrix}
        \bm h(\psi_{1})&\cdots&\bm h(\psi_{N})
    \end{bmatrix}\in \mathbb{R}^{M \times N}.
    \end{equation*}
%where $\psi_n$ is the $n$th entry of $\bm \psi$. 
This leads to the BEM with coefficient vector ${\bm x}$ as
    \begin{equation}
    \hat{\bm y} = {\bm H}(\bm \psi){\bm x} + {\bm n},\label{eq.problem_basic_bem}
\end{equation}
%This leads to the BEM with coefficient vector ${\bm x}$ as
% \begin{equation}\label{eq.problem_basic_bem}
%     \hat{\bm y} = {\bm H}(\bm \psi){\bm x} + {\bm n}.
% \end{equation}
Only a few entries of $\bm \psi$ correspond to the true parameters $\bar{\bm \psi}$, making $\bm x$ sparse. However, solving \eqref{eq.problem_basic_bem} with a fixed $\bm \psi$ via standard sparse recovery algorithms leads to grid mismatch. Thus, we jointly estimate $\bm \psi$ and sparse $\bm x$ using the SBL framework. Then, the nonzero entries of $\bm x$ and their corresponding $\psi_n$'s are estimates of $\bar{\bm x}$ and $\bar{\bm \psi}$, respectively. 

SBL adopts a fictitious zero mean complex Gaussian distribution $\mathcal{CN}(\bm x|\bm 0,\bm \Gamma)$ as prior on the sparse vector ${\bm x}$ with an unknown diagonal covariance matrix $\bm\Gamma$. Let $\bm\Gamma= \diag(\bm \gamma)\in\mathbb{R}^{N\times N}$ with the diagonal entries $\bm \gamma\in\mathbb{R}^{N}$. We assume Gaussian noise $\bm n \sim \mathcal{CN}(0, \sigma^2\bm I)$ with unknown variance $\sigma^2$. Using type II ML, we first estimate the hyperparameters $\bm \gamma$, $\bm \psi$, and  $\sigma^2$, and then, the estimate of $\bm x$ is $\arg \max_{\bm x} p(\bm x|\hat{\bm y};\bm \gamma,\bm \psi,\sigma^2)$. The ML estimates of the hyperparameters are 
\begin{equation}\label{eq.ml_problem_para}
    \underset{{\bm \gamma \geq \bm 0,\ \bm \psi\in [\psi_{\mathrm{l}},\psi_{\mathrm{r}}]^N},\ \sigma^2 > 0}{\min} \mathcal{L}\left(\bm \gamma,\bm \psi,\sigma^2\right),
\end{equation}
where $\bm \gamma \geq \bm 0$ indicates that the entries of $\bm \gamma$ are nonnegative. Using the SBL priors, negative log-likelihood function $\mathcal{L}$ is
\begin{equation*}
    \mathcal{L}\left(\bm \gamma,\bm \psi,\sigma^2\right)=\log p(\hat{\bm y};\bm \gamma,\bm \psi,\sigma^2)=\log |\bm \Sigma_{\mathsf{y}}|+\trace \left(\hat{\bm y}^\mathsf{H} \bm \Sigma_{\mathsf{y}}^{-1} \hat{\bm y}\right),
\end{equation*}
where $\bm \Sigma_{\mathsf{y}} = \sigma^2 \bm I_{{M}} + {\bm H}(\bm \psi) \bm \Gamma {\bm H}(\bm \psi)^\mathsf{H}$ and $\trace(\cdot)$ is the trace operator. We resort to the EM algorithm to solve \eqref{eq.ml_problem_para}. Specifically, the $r$th iteration of~EM~is
\begin{align*}
&\text{\bf E-step:}~Q(\bm \gamma,\bm \psi,\sigma^2)\notag\\
&\quad\quad\quad\quad\quad\quad\quad\coloneq\mathbb{E}_{\bm x|\hat{\bm y};\{\bm \gamma,\bm \psi,\sigma^2\}^{(r)}}\{\log p(\hat{\bm y},\bm x;\bm \gamma,\bm \psi,\sigma^2)\}\notag,\\
&\text{\bf M-step:}\{\bm \gamma,\bm \psi,\sigma^2\}^{(r+1)}=\underset{\substack{\bm \gamma > \bm 0, \bm \psi\in [\psi_{\mathrm{l}},\psi_{\mathrm{r}}]^N}, \sigma^2>0}{\arg\max}\ Q(\bm \gamma,\bm \psi,\sigma^2).  \label{eq.mstep_basic}  
\end{align*}
Here, $\bm \gamma >\bm0$ means the entries of $\bm \gamma$ should be positive to avoid degenerate distributions. Further, we note that 
\begin{multline*}
    Q(\bm \gamma,\bm \psi,\sigma^2)=\mathbb{E}_{\bm x|\hat{\bm y};\{\bm \gamma,\bm \psi,\sigma^2\}^{(r)}}\{\log p(\hat{\bm y}|\bm x;\bm \psi,\sigma^2)\}\\+\mathbb{E}_{\bm x|\hat{\bm y};\{\bm \gamma,\bm \psi,\sigma^2\}^{(r)}}\{\log p(\bm x;\bm \gamma)\},
\end{multline*}
and thus, the optimization problem in the M-step is separable in $\bm \gamma$ and $\{\bm \psi,\sigma^2\}$. The optimization problem in $\bm\gamma$~is
\begin{equation}\label{eq.qfunc}
\bm \gamma^{(r+1)}
= \underset{\substack{\bm \gamma >\bm 0}}{\arg\min}
\log|\diag(\bm\gamma)|+(\bm d^{(r)})^{\mathsf{T}}\bm\gamma^{-1} = \bm d^{(r)},
\end{equation}    
with $\bm d^{(r)} = \diag(\bm \Sigma_{\mathsf{x}}+\bm \mu_{\mathsf{x}}\bm \mu_{\mathsf{x}}^\mathsf{H})$ where $\diag(\cdot)$ returns the diagonal entries of the matrix input, and $\bm \gamma^{-1}$ representing element-wise inversion. Here, $\bm \mu_{\mathsf{x}}$ and $\bm \Sigma_{\mathsf{x}}$ are the mean and variance of conditional distribution $p(\bm x|\hat{\bm y};\{\bm \gamma,\bm \psi,\sigma^2\}^{(r)})$, respectively,
\begin{equation}
\label{eq.post_meva}
\begin{aligned}
    \bm \mu_{\mathsf{x}} &= \left(\sigma^{-2}\right)^{(r)}\bm \Sigma_{\mathsf{x}}{\bm H}(\bm \psi^{(r)})^\mathsf{H}\hat{\bm y},\\
    \bm \Sigma_{\mathsf{x}} &=  \left[\left(\sigma^{-2}\right)^{(r)}{\bm H}(\bm \psi^{(r)})^\mathsf{H}{\bm H}(\bm \psi^{(r)}) + \diag(\bm \gamma^{(r)})^{-1}\right]^{-1}. 
\end{aligned}
\end{equation}
Optimizing $\bm \psi$ and $\sigma^2$ in the M-step yields
\begin{equation*}%\label{eq.m_step_noise_grid}
    \{\bm \psi,\sigma^2\}^{(r+1)} = \underset{\bm \psi\in [\psi_{\mathrm{l}},\psi_{\mathrm{r}}]^N,\ \sigma^2 > 0}{\arg\min} N\log \sigma + \frac{1}{2\sigma^2} g(\bm \psi),
\end{equation*}
where $g(\bm \psi)\coloneq\|\hat{\bm y} - \bm H(\bm \psi)\bm \mu_{\mathsf{x}} \|_2^2+\trace(\bm \Sigma_{\mathsf{x}}\bm H(\bm \psi)^\mathsf{H}\bm H(\bm \psi))$, is independent of $\sigma^2$. Given $\bm \psi^{(r+1)}$, we update $\sigma^2$ as
\begin{equation}\label{eq.noise_est}
    (\sigma^2)^{(r+1)} = g(\bm \psi^{(r+1)})/N.
\end{equation}
Further, $\bm \psi^{(r+1)}= {\arg\min}_{\bm \psi}\ g(\bm \psi)$ simplifies~to
\begin{multline}\label{prob.op_wrt_para}
    \bm \psi^{(r+1)} =\underset{\bm \psi\in [\psi_{\mathrm{l}},\psi_{\mathrm{r}}]^N}{\arg\min}\trace\left(\bm H(\bm \psi) \bm \Sigma \bm H(\bm \psi)^\mathsf{H}\right)\\
    -2\mathsf{Re}\left\{\trace\left(\bm M\bm H(\bm \psi)\right)\right\},
\end{multline}
where $\bm \Sigma \coloneq \bm \Sigma_{\mathsf{x}}+\bm \mu_{\mathsf{x}}\bm \mu_{\mathsf{x}}^\mathsf{H}$ and $\bm M = \bm \mu_{\mathsf{x}}\hat{\bm y}^\mathsf{H}$.
We use the alternating minimization method to solve \eqref{prob.op_wrt_para}, where we alternatively optimize one entry of $\bm \psi$ while keeping all others fixed. The $t$th iteration of the alternating minimization method updates the $n^*$th variable ${\psi}_{n^*}$ by minimizing
\begin{equation*}
    f_{n^*}(\psi_{n^*}) = 2\mathsf{Re}\left\{\bm v_{n^*}^\mathsf{H} \bm h(\psi_{n^*})\right\} + \bm \Sigma_{n^*,n^*}\bm h(\psi_{n^*})^\mathsf{H}\bm h(\psi_{n^*}).
\end{equation*}
Here, $\bm v_{n^*}\in \mathbb{C}^{M}$ is defined as
\begin{equation*}
    \bm v_{n^*}= \!\sum_{n=1}^{n^*-1}\!\bm \Sigma_{n^*,n}\bm h (\psi_n^{(r,t+1)}) +\!\! \sum_{n=n^*+1}^N\!\!\bm \Sigma_{n^*,n}\bm h (\psi_n^{(r,t)}) - \bm M_{n^*,:}^\mathsf{H},
\end{equation*}
where $\bm M_{n^*,:}$ is the $n^*$th row of $\bm M$, and $\bm \Sigma_{n^*,n^*}$ is the $(n^*,n^*)$th entry of $\bm \Sigma$. Hence, the $t$th alternative minimization iterate $\psi_{n^*}^{(r,t+1)}$ in the $r$th EM iteration for index $n^* \in [N]$ is
\begin{equation}\label{eq.single_op}
        \psi_{n^*}^{(r,t+1)} = \underset{ \frac{\psi_{n^*}^{(r,t)}+\psi_{n^*-1}^{(r,t)}}{2}\leq \psi\leq\frac{\psi_{n^*}^{(r,t)}+\psi_{n^*+1}^{(r,t)}}{2}}{\arg\min}\ f_{n^*}(\psi).
\end{equation}
We note that assumption $\bm h_i(\psi_p)\neq \bm h_i(\psi_q)$ for any $\psi_p \neq \psi_q$ ensures the solution identifiablity of problem~\eqref{eq.single_op}. We solve~\eqref{eq.single_op} using (one-dimensional) line search. Our off-grid SBL (OffSBL) and the overall \underline{d}ecomposition-based SBL (dSBL) are outlined in Algorithms~\ref{al.offSBL} and~\ref{al.dKroSBL}, respectively.

\section{Theoretical Analysis and Extensions}\label{sec.theoretical}

This section analyzes our dSBL algorithm, covering the decomposition error bound and denoising effect of HOSVD and convergence results of OffSBL. We then present the complexity analysis of dSBL, demonstrating its computational advantage compared to other methods. Finally, we discuss extensions of our algorithms to other similar signal models.

\subsection{Analysis of Decomposition-Based Algorithm}\label{sec:estimation_acc}

% In this section, we discuss the decomposition error bound on $\{\hat{\bm y}_i\}_{i=1}^I$, the denoising in HOSVD, and its implication in low complexity approximation. For simplicity, we assume that $M_i \sim M$, i.e., there exists a measurement level $M$ and constant $C,c>0$ such that $cM \leq M_i \leq CM$ for $i \in [I]$, though generalization is straightforward. Also, the noise term $\bar{\bm n}$ in~\eqref{eq.Kro_generate_data} is assumed to be zero-mean white Gaussian noise with variance $\sigma^2_\mathrm{t}$.

We start with the decomposition error bound, where we quantify the error between the decomposed vectors $\{\hat{\bm y}_i\}_{i=1}^I$ and the true signal components $\{\bm y_i=\bar{\bm H}_{i,\bar{\bm \psi}_i} \bar{\bm x}_i\}_{i=1}^I$. We measure the error as the angle between $\bm y_i$ and $\hat{\bm y}_i$, accounting for scaling ambiguity in the decomposition step. %The following result offers a probabilistic upper bound on the error.

\begin{thm}[Decomposition Accuracy]\label{thm.angle_denoise}
    Let $\bar{\bm y} = \otimes_{i=1}^I \bm y_i +\bar{\bm n}\in\mathbb{R}^{\bar{M}}$ be the noisy measurement as given in \eqref{eq.Kro_generate_data}, where $\bm y_i\in\mathbb{R}^{M_i}$ and $\bar{\bm n}$ has independent zero-mean Gaussian entries with variance $\sigma^2_\mathrm{t}$.   
    Suppose the signal satisfies
    \begin{equation*}
        \lambda^2=\|\otimes_{i=1}^I \bm y_i\|^2_2/\sigma_\mathsf{t}^2 \geq C_{\mathrm{gap}} ( \sqrt{\bar{M}}+\max_{1\leq i\leq I} M_i),
    \end{equation*}
    for a large constant $C_\mathrm{gap}>0$. Then, there exist constants $c,C>0$ such that with probability at least $1-C \exp \{-cM_i\}$,
    \begin{equation*}
        \sin (\vartheta_i) \leq  C \left( \sqrt{M_i}\lambda^{-1} +\sqrt{\bar{M}}\lambda^{-2} \right),\quad i\in[I],
    \end{equation*}
     where $\vartheta_i\coloneq\arccos\left(|\bm y_i^\mathsf{T}\hat{\bm y_i}|/(\|\bm y_i\|_2\|\hat{\bm y_i}\|_2)\right)$ is the angle between $\bm y_i$ and its estimate $\hat{\bm y}_i$ obtained by solving~\eqref{eq.noisy_decom} and~\eqref{prob.hosvd}.
\end{thm}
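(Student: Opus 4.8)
The plan is to recognize that, because of the scale-invariance built into the angle $\vartheta_i$, the decomposition step is exactly a rank-one signal-plus-noise singular-vector perturbation problem, and then to obtain the two-term bound by a \emph{refined} perturbation argument rather than a black-box one. First I would restate the estimate in matrix terms: by the construction in Sec.~\ref{sec.trun_hosvd}, $\hat{\bm y}_i$ equals, up to a scalar that $\vartheta_i$ ignores, the leading left singular vector $\bm e_i$ of the mode-$i$ matricization $\bar{\bm Y}_{(i)} = \bm S_i + \bm N_{(i)}$. By \eqref{eq.matricization} the signal part is the rank-one matrix $\bm S_i = \bm y_i\bm w_i^{\mathsf{T}}$ with $\bm w_i := (\otimes_{j=I}^{i+1}\bm y_j)\otimes(\otimes_{j=i-1}^{1}\bm y_j)$, so its unique nonzero singular value is $\rho := \|\bm y_i\|_2\|\bm w_i\|_2 = \|\otimes_{j=1}^I\bm y_j\|_2 = \lambda\sigma_\mathsf{t}$, with left singular vector $\bm u_i := \bm y_i/\|\bm y_i\|_2$. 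Since $\bar{\bm n}$ has i.i.d. $\mathcal{N}(0,\sigma_\mathsf{t}^2)$ entries, any reshaping $\bm N_{(i)}\in\mathbb{R}^{M_i\times p_i}$, with $p_i := \bar{M}/M_i$, again has i.i.d. $\mathcal{N}(0,\sigma_\mathsf{t}^2)$ entries. Thus $\sin\vartheta_i = \sin\angle(\bm e_i,\bm u_i)$, and the claim is a singular-vector perturbation bound.

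The difficulty is that applying the Davis--Kahan/Wedin $\sin\theta$ theorem directly to $\bar{\bm Y}_{(i)}$ yields only $\sin\vartheta_i\lesssim\|\bm N_{(i)}\|_{\mathrm{op}}/\rho\lesssim(\sqrt{M_i}+\sqrt{p_i})/\lambda$, whose second term scales like $\lambda^{-1}$ rather than the required $\lambda^{-2}$. To sharpen this I would pass to the Gram matrix $\bm A := \bar{\bm Y}_{(i)}\bar{\bm Y}_{(i)}^{\mathsf{T}}$, whose leading eigenvector is $\bm e_i$, and expand it as
\[
\bm A = \rho^2\bm u_i\bm u_i^{\mathsf{T}} + \sigma_\mathsf{t}^2 p_i\bm I_{M_i} + \bm E,
\]
where $\bm E := \rho\big(\bm u_i(\bm N_{(i)}\bm v_i)^{\mathsf{T}} + (\bm N_{(i)}\bm v_i)\bm u_i^{\mathsf{T}}\big) + \big(\bm N_{(i)}\bm N_{(i)}^{\mathsf{T}} - \sigma_\mathsf{t}^2 p_i\bm I_{M_i}\big)$ and $\bm v_i := \bm w_i/\|\bm w_i\|_2$. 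The crucial point is that the \emph{dominant} noise contribution $\sigma_\mathsf{t}^2 p_i\bm I$ is isotropic and therefore leaves every eigenvector untouched; the effective unperturbed matrix $\rho^2\bm u_i\bm u_i^{\mathsf{T}}+\sigma_\mathsf{t}^2 p_i\bm I$ has leading eigenvector $\bm u_i$ with eigengap exactly $\rho^2$, which is what supplies the extra factor of $\lambda$.

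It then remains to bound $\bm E$ and invoke Davis--Kahan for the symmetric pair, giving $\sin\vartheta_i\leq 2\|\bm E\|_{\mathrm{op}}/\rho^2$. For the cross term, $\|\bm E_{\mathrm{cross}}\|_{\mathrm{op}}\leq 2\rho\|\bm N_{(i)}\bm v_i\|_2$, and since $\bm N_{(i)}\bm v_i\sim\mathcal{N}(\bm 0,\sigma_\mathsf{t}^2\bm I_{M_i})$, Gaussian norm concentration gives $\|\bm N_{(i)}\bm v_i\|_2\leq 2\sigma_\mathsf{t}\sqrt{M_i}$ with probability $1-2\exp(-M_i/2)$. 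For the centered Wishart term I would use a non-asymptotic spectral bound ($\|\bm N_{(i)}\|_{\mathrm{op}}\leq\sigma_\mathsf{t}(\sqrt{M_i}+\sqrt{p_i}+t)$ with probability $1-2e^{-t^2/2}$, taken at $t=\sqrt{M_i}$) to get $\|\bm N_{(i)}\bm N_{(i)}^{\mathsf{T}}-\sigma_\mathsf{t}^2 p_i\bm I\|_{\mathrm{op}}\leq C\sigma_\mathsf{t}^2(\sqrt{M_ip_i}+M_i)=C\sigma_\mathsf{t}^2(\sqrt{\bar{M}}+M_i)$, using $\sqrt{M_ip_i}=\sqrt{\bar{M}}$. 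Dividing by $\rho^2=\lambda^2\sigma_\mathsf{t}^2$ yields
\[
\sin\vartheta_i \leq C\Big(\tfrac{\sqrt{M_i}}{\lambda} + \tfrac{\sqrt{\bar{M}}+M_i}{\lambda^2}\Big).
\]
The SNR hypothesis $\lambda^2\geq C_{\mathrm{gap}}(\sqrt{\bar{M}}+\max_i M_i)$ now does double duty: it forces $\|\bm E\|_{\mathrm{op}}\ll\rho^2$ so the eigenvalues cannot cross and Davis--Kahan is valid, and it lets me absorb $M_i/\lambda^2\leq(\sqrt{M_i}/\lambda)^2\leq C_{\mathrm{gap}}^{-1/2}\sqrt{M_i}/\lambda$ into the first term, recovering the claimed bound. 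A union bound over the two events gives the probability $1-C\exp(-cM_i)$.

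I expect the genuinely delicate step to be obtaining the $\lambda^{-2}$ rate on the $\sqrt{\bar{M}}$ term instead of the $\lambda^{-1}$ rate a generic bound produces. This rests entirely on the isotropic-subtraction observation: the $\Theta(\sigma_\mathsf{t}^2 p_i)$-sized bulk of $\bm N_{(i)}\bm N_{(i)}^{\mathsf{T}}$ is invisible to the eigenvector, so only its $O(\sigma_\mathsf{t}^2\sqrt{\bar{M}})$ fluctuation enters and is divided by the gap $\rho^2\propto\lambda^2$. Correctly quantifying that fluctuation through non-asymptotic random matrix theory, and verifying the gap condition from the SNR assumption, are the technical heart of the argument; the remaining pieces are routine.
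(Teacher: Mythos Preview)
Your argument is correct and arrives at the same bound, but by a genuinely different and more elementary route than the paper. The paper proves the result by invoking a black-box perturbation lemma of Cai and Zhang (stated as Lemma~\ref{lm.svd_deco}), which for rank-one signals directly yields $\sin^2\vartheta_i\le C(\lambda^2+\bar M/M_i)\lambda^{-4}\|\bm P_{\bm Y\bm V}\bm Y\bm V_\perp\|^2$, and then bounds the residual term $\|\bm P_{\bm Y\bm V}\bm Y\bm V_\perp\|$ by $\tilde C\sqrt{M_i}$ with high probability; the two-term structure $\sqrt{M_i}/\lambda+\sqrt{\bar M}/\lambda^2$ falls out after taking the square root. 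Your Gram-matrix approach instead makes the mechanism behind the sharp $\lambda^{-2}$ rate completely explicit: by writing $\bar{\bm Y}_{(i)}\bar{\bm Y}_{(i)}^{\mathsf T}=\rho^2\bm u_i\bm u_i^{\mathsf T}+\sigma_{\mathsf t}^2p_i\bm I+\bm E$ and noting that the isotropic bulk commutes with every projector, you reduce to a symmetric Davis--Kahan step with eigengap $\rho^2$ and effective perturbation $\bm E$ of size $O(\rho\sigma_{\mathsf t}\sqrt{M_i}+\sigma_{\mathsf t}^2\sqrt{\bar M})$. What your approach buys is transparency and self-containment (only Davis--Kahan plus standard Gaussian and Wishart concentration are needed); what the paper's approach buys is brevity, since the Cai--Zhang lemma packages exactly this refinement for general rank $r$. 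Both routes use the SNR hypothesis in the same two ways you identify: to certify the gap condition and to absorb the $M_i/\lambda^2$ remainder into $\sqrt{M_i}/\lambda$.
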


\begin{proof}
See Appendix~\ref{appe.angle_and_denoising}. 
\end{proof}
We note that $\lambda/\bar{M}$ represents the signal-to-noise ratio (SNR) of our measurement model, and a higher SNR (i.e., a larger $\lambda$) improves estimation accuracy, as expected. As $\lambda$ goes to $\infty$ (noiseless case), the error bounds approach zero. Conversely, when the signal strength is insufficient to meet the required condition, there is no consistent estimator for $\bm y_i$'s~\cite{zhang2018tensor}. 

While the decomposition accuracy reflects how well $\hat{\bm y_i}$ aligns with the true signal $\bm y_i$, we can also access the noise level after decomposition. We next quantify the denoising effect of the decomposition step, which refers to noise reduction in the measurements, i.e., $\Vert\otimes_{i=1}^I \hat{\bm y}_i-\otimes_{i=1}^I \bm y_i\Vert_2^2$ is expected to be smaller than $\mathbb{E}\{\Vert\bar{\bm{n}}\Vert_2^2\}=\sigma_t^2\bar{M}$, as summarized in the following result. %We denote $\bar{M}/M_i$ as $M_{-i}$.

\begin{thm}[Denoising Effect]\label{thm.denoising}
    Let $\bar{\bm y} = \otimes_{i=1}^I \bm y_i +\bar{\bm n}\in\mathbb{R}^{\bar{M}}$ denote the noisy measurement as in \eqref{eq.Kro_generate_data}, where $\bm y_i\in\mathbb{R}^{M_i}$ and $\bar{\bm n}$ has independent zero-mean Gaussian entries with variance $\sigma^2_\mathrm{t}$. Let $\hat{\bm y}_i$ denote the estimate of $\bm y_i$ obtained by solving~\eqref{eq.noisy_decom} and~\eqref{prob.hosvd}. Then, the estimates satisfy%$\Vert\otimes_{i=1}^I \hat{\bm y}_i-\otimes_{i=1}^I \bm y_i\Vert_2$ is probabilistically bounded as
    \begin{multline}\label{eq.probabilistic}
        \|\otimes_{i=1}^I \hat{\bm y}_i-\otimes_{i=1}^I \bm y_i\|_2 \\\leq \sigma_\mathrm{t}\left(2\sum_{i=1}^I \left[3\sqrt{M_i} + \sqrt{\bar{M}/M_i}\right]+1 + 2\sqrt{\max_{1\leq i\leq I}M_i}\right),
    \end{multline}
    with probability exceeding $1-3\sum_{i=1}^Ie^{-M_i}$. Moreover, 
    \begin{equation}\label{eq.appro_denoise_hosvd}
    \mathbb{E}\left\{\Vert\otimes_{i=1}^I \hat{\bm y}_i-\otimes_{i=1}^I \bm y_i\Vert_2^2\right\} \approx \sigma_t^2\left(\sum_{i=1}^I M_i + 1 - I \right).
    \end{equation}
\end{thm}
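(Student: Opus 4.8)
The plan is to recognize the truncated HOSVD reconstruction as an orthogonal projection, which turns both claims into the control of two orthogonal error components. Let $\mathcal Y=\circ_{i=1}^I\bm y_i$, let $\mathcal N$ be the tensor reshaping of $\bar{\bm n}$, and let $\mathcal E=\circ_{i=1}^I\bm e_i$ be the unit Frobenius-norm rank-one tensor formed from the leading left singular vectors $\bm e_i$. Since $\xi=\langle\bar{\mathcal Y},\mathcal E\rangle$ and $\hat{\mathcal Y}=\xi\,\mathcal E$, the reconstruction is exactly $\hat{\mathcal Y}=P_{\mathcal E}\bar{\mathcal Y}$, the orthogonal projection onto $\mathrm{span}(\mathcal E)$; and vectorization being an isometry gives $\|\otimes_i\hat{\bm y}_i-\otimes_i\bm y_i\|_2=\|\hat{\mathcal Y}-\mathcal Y\|_{\mathrm F}$. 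Writing $\bar{\mathcal Y}=\mathcal Y+\mathcal N$ and using idempotency $P_{\mathcal E}P_{\mathcal E}^{\perp}=0$ (with $P_{\mathcal E}^{\perp}$ the complementary projection), which holds for every realization, I obtain the exact identity
\[
\|\hat{\mathcal Y}-\mathcal Y\|_{\mathrm F}^2=\|P_{\mathcal E}^{\perp}\mathcal Y\|_{\mathrm F}^2+|\langle\mathcal N,\mathcal E\rangle|^2 .
\]
With $a=\|\mathcal Y\|_{\mathrm F}$, $\bm u_i=\bm y_i/\|\bm y_i\|_2$, and $\mathcal U=\circ_i\bm u_i$, the first (direction) term equals $a^2(1-\prod_i\cos^2\vartheta_i)\le a^2(\sum_i\sin\vartheta_i)^2$, where $\vartheta_i=\angle(\bm y_i,\hat{\bm y}_i)$ is exactly the angle of Theorem~\ref{thm.angle_denoise}.

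For the high-probability bound \eqref{eq.probabilistic} I will bound the two terms separately. Each $\bm e_i$ is the leading left singular vector of $\bar{\bm Y}_{(i)}=\bm Y_{(i)}+\bm N_{(i)}$, where $\bm Y_{(i)}$ is rank one with single nonzero singular value $a$; in the high-SNR regime (as in Theorem~\ref{thm.angle_denoise}) Wedin's sin-theta theorem gives $\sin\vartheta_i\le 2\|\bm N_{(i)}\|/a$, so the direction term is at most $2\sum_i\|\bm N_{(i)}\|$. Because $\bm N_{(i)}\in\mathbb R^{M_i\times(\bar M/M_i)}$ has i.i.d.\ $\mathcal N(0,\sigma_\mathrm{t}^2)$ entries, the Gaussian-matrix deviation bound with deviation parameter $\sqrt{2M_i}$ gives $\|\bm N_{(i)}\|\le\sigma_\mathrm{t}(3\sqrt{M_i}+\sqrt{\bar M/M_i})$ with probability at least $1-e^{-M_i}$, reproducing the $2\sum_i[3\sqrt{M_i}+\sqrt{\bar M/M_i}]$ term. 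For the noise projection I pick $i^\star=\arg\max_iM_i$ and write $\langle\mathcal N,\mathcal E\rangle=\bm e_{i^\star}^{\mathsf T}\bm N_{(i^\star)}\bm g$ with $\bm g=\otimes_{j\ne i^\star}\bm e_j$; replacing the data-dependent $\bm g$ by the fixed $\bm u_g=\otimes_{j\ne i^\star}\bm u_j$ and using $|\bm e_{i^\star}^{\mathsf T}\bm v|\le\|\bm v\|_2$ reduces the leading part to the norm of the fixed-direction Gaussian vector $\bm N_{(i^\star)}\bm u_g\in\mathbb R^{M_{i^\star}}$, which concentrates at $2\sigma_\mathrm{t}\sqrt{M_{i^\star}}$, while the residual involving $\bm g-\bm u_g$ is of lower order since $\|\bm g-\bm u_g\|_2\lesssim\sum_{j\ne i^\star}\sin\vartheta_j$. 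Together these yield the $1+2\sqrt{\max_iM_i}$ term, and a union bound over the constant number of Gaussian events per mode gives the stated probability $1-3\sum_ie^{-M_i}$.

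For the expectation \eqref{eq.appro_denoise_hosvd} I will evaluate the same two terms to first order in the noise, which is the origin of the stated ``$\approx$''. The noise projection satisfies $\mathbb E|\langle\mathcal N,\mathcal E\rangle|^2\approx\mathbb E|\langle\mathcal N,\mathcal U\rangle|^2=\sigma_\mathrm{t}^2$, because $\mathcal E\to\mathcal U$ as $\lambda\to\infty$ and $\mathcal U$ is a fixed unit tensor; this contributes the $+1$. For the direction term I use $1-\prod_i\cos^2\vartheta_i\approx\sum_i\sin^2\vartheta_i$ together with the first-order perturbation of the leading left singular vector, $\bm e_i-\bm u_i\approx a^{-1}P_{\bm u_i^{\perp}}\bm N_{(i)}\bm w_i$, where $\bm w_i$ is the unit right singular vector of $\bm Y_{(i)}$ (the normalized Kronecker product of the remaining $\bm u_j$) \cite{balda2016first}. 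Since $\bm N_{(i)}\bm w_i\sim\mathcal N(0,\sigma_\mathrm{t}^2\bm I_{M_i})$ and $P_{\bm u_i^{\perp}}$ has rank $M_i-1$, this gives $\mathbb E[a^2\sin^2\vartheta_i]\approx\sigma_\mathrm{t}^2(M_i-1)$. Summing the two contributions yields $\mathbb E\|\hat{\mathcal Y}-\mathcal Y\|_{\mathrm F}^2\approx\sigma_\mathrm{t}^2+\sigma_\mathrm{t}^2\sum_i(M_i-1)=\sigma_\mathrm{t}^2(\sum_iM_i+1-I)$. The count is geometric: the multilinear rank-$(1,\dots,1)$ manifold through $\mathcal Y$ has a tangent space of dimension $\sum_i(M_i-1)+1$ (one radial direction plus $M_i-1$ angular directions per mode), and the high-SNR error is the projection of $\mathcal N$ onto it, each dimension contributing $\sigma_\mathrm{t}^2$.

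I expect the main obstacle to be the noise-projection term $\langle\mathcal N,\mathcal E\rangle$: since $\mathcal E$ is built from the singular vectors it depends on $\mathcal N$, so this is not a plain Gaussian scalar and cannot be bounded by the full spectral norm without losing a $\sqrt{\bar M/M_{i^\star}}$ factor. The key device is the deterministic inequality $|\bm e^{\mathsf T}\bm v|\le\|\bm v\|_2$ applied after de-randomizing the contracted factor $\bm g$, which replaces a spectral norm by a fixed-direction vector norm. For the expectation the difficulty is only rigor, not size: the first-order expansion is exact only as $\lambda\to\infty$, and the neglected $O(\|\mathcal N\|_{\mathrm F}^2/a^2)$ terms are precisely what the symbol ``$\approx$'' absorbs, with the probabilistic bound already certifying the overall scale.
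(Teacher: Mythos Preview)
Your overall plan---split $\hat{\mathcal Y}-\mathcal Y$ into the signal-direction piece $P_{\mathcal E}^\perp\mathcal Y$ and the noise-projection piece $P_{\mathcal E}\mathcal N$---matches the paper's decomposition exactly; the executions diverge in two places.

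For the direction term, the paper does not pass through $\sin\vartheta_i$ or Wedin. It telescopes $\mathcal Y-\mathcal Y_{(I)}=\sum_{i}\mathcal Y_{(i-1)}\times_i\bm P_{i_\perp}$ and applies \cite[Lemma~6]{zhang2018tensor}, which gives $\|\bm P_{i_\perp}\bm Y_{(i)}\|_{\mathrm F}\le 2\|\bm N_{(i)}\|$ \emph{unconditionally}. Your Wedin step reaches the same inequality $a\sin\vartheta_i\le 2\|\bm N_{(i)}\|$, but you qualify it with ``in the high-SNR regime (as in Theorem~\ref{thm.angle_denoise})''; that hypothesis is absent from Theorem~\ref{thm.denoising} and is not needed, since the cited lemma is precisely the unconditional form of what you want.

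For the noise projection $|\langle\mathcal N,\mathcal E\rangle|$ the paper does not de-randomize at all. It asserts, citing \cite[Suppl.~C.4]{zhang2019optimal}, that $\sigma_{\mathrm t}^{-2}\|\bar{\mathcal N}_{(I)}\|_{\mathrm F}^2$ follows $\chi^2_1(0)$, and then applies the Birg\'e--Massart tail (Lemma~\ref{lm.chi_square}) with $x=\max_iM_i$ to obtain exactly $1+2\sqrt{\max_iM_i}$, with no SNR condition. Your route instead produces a main term $\|\bm N_{(i^\star)}\bm u_g\|_2$ whose square is $\chi^2_{M_{i^\star}}$, not $\chi^2_1$, so it yields a constant like $\sqrt{5M_{i^\star}}$ rather than $1+2\sqrt{M_{i^\star}}$; and your residual $\|\bm N_{(i^\star)}\|\cdot\|\bm g-\bm u_g\|_2$ is ``lower order'' only when $a$ is large, which again imports an SNR hypothesis the theorem does not carry. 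So your argument gives the correct order but not the stated constant, and only under an extra assumption. The paper's route buys the exact constant and no SNR restriction; yours buys a self-contained treatment of the data-dependence of $\mathcal E$ on $\mathcal N$, which the paper outsources entirely to the citation.

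For \eqref{eq.appro_denoise_hosvd} the paper simply cites \cite[Eq.~(19)]{balda2016first}; your first-order perturbation derivation is exactly the computation behind that reference and is fine.
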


\begin{proof}
See Appendix \ref{appe.thm_denoising}.    
\end{proof}

To gain insights from Theorem~\ref{thm.denoising}, suppose that $M_i=\mathcal{O}(M)$ for $i\in [I]$ for some value $M$. Then, \eqref{eq.probabilistic} shows that the noise level $\Vert\otimes_{i=1}^I \hat{\bm y}_i-\otimes_{i=1}^I \bm y_i\Vert^2_2$ in the decomposed signal compared to the noisy signal $\bar{\bm y}$ reduces from $\mathcal{O}(M^I\sigma_t^2)$ to $\mathcal{O}(M^{I-1}\sigma_t^2)$ after HOSVD.  Besides, from \eqref{eq.appro_denoise_hosvd}, the average noise level can be approximated as $\mathcal{O}(MI\sigma^2_t)$ when $M>I$. Specifically, the noise level $\mathbb{E}\{\Vert\bar{\bm{n}}\Vert^2\}=\sigma^2_\mathrm{t}\bar{M}$ reduces approximately by 
\begin{equation*}
    \frac{\mathbb{E}\left\{\Vert\otimes_{i=1}^I \hat{\bm y}_i-\otimes_{i=1}^I \bm y_i\Vert_2^2\right\}}{\mathbb{E}\{\Vert\bar{\bm{n}}\Vert_2^2\}} \approx \frac{\sum_{i=1}^I M_i + 1 - I }{\bar{M}}<1.
\end{equation*}
% \begin{proposition}\label{prop.denoise_hosvd}
%     Suppose that the noiseless signal $\otimes_{i=1}^I \bm y_i$ is deteriorated by zero-mean white Gaussian noise $\bar{\bm{n}}$ with variance $\sigma^2_\mathrm{t}$. Let $\otimes_{i=1}^I\hat{\bm y}_i$ be the estimate of $\otimes_{i=1}^I \bm y_i$ using HOSVD as in Sec.~\ref{sec.trun_hosvd}. Then $\mathbb{E}\{\Vert\otimes_{i=1}^I \hat{\bm y}_i-\otimes_{i=1}^I \bm y_i\Vert^2\}$ can be approximated as
%     \begin{equation}
%     \mathbb{E}\left\{\Vert\otimes_{i=1}^I \hat{\bm y}_i-\otimes_{i=1}^I \bm y_i\Vert^2\right\} \approx \sigma^2_\mathrm{t}\left( \sum_{i=1}^I M_i + 1 - I \right).
%     \end{equation}
% \end{proposition}

The probabilistic bound in Theorem \ref{thm.denoising} can also be interpreted as an error bound for HOSVD. Consider the simplest case $I=2$ and fix $\bar{M}=M_1M_2$ and $\sigma_\mathrm{t}$. The upper bound in~\eqref{eq.probabilistic} (with respect to $\sigma_\mathrm{t}$) can be bounded from below as
\begin{multline*}
    8(\sqrt{M_1} + \sqrt{M_2})+1 + 2\sqrt{\max\{M_1,M_2\}}\\\geq
    8\!\left[\min_{\substack{M_1,M_2\\M_1M_2=\bar{M}}}\!\sqrt{M_1} + \sqrt{M_2}\right]+1 + 2\bar{M}^{1/4}
    = 18\bar{M}^{1/4}+1,
\end{multline*}
where equality is achieved when $M_1=M_2$. Therefore, the bound in~\eqref{eq.probabilistic} is minimized when the vectors $\bm y_i$'s have the same size.

We present the next corollary on the low complexity approximation, obtained by setting $I=2$ and $M_2=\bar{M}/M_1$ in Theorem \ref{thm.denoising}, noting that it has the same first step as HOSVD.
\begin{coro}\label{coro.denoise_deco_1}
   Under the assumptions of Theorem~\ref{thm.denoising}, if $\bar{\bm y}_{1}$ and $\hat{\bm y}_1$ are obtained from $\bar{\bm y}$ using the low complexity approximation~\eqref{prob.bidecom}, with probability at least $1-3(e^{-M_1}+e^{-\bar{M}/M_1})$, 
    \begin{equation*}
        \|\hat{\bm y}_1\otimes \bar{\bm y}_1 - \bar{\bm y}\|_2 \leq \sigma_\mathrm{t}\left( 1 + 18\sqrt{\max\left\{M_1,{\bar{M}/M_1}\right\}}   \right),
    \end{equation*}
    and $\mathbb{E}\left\{\|\hat{\bm y}_1\otimes \bar{\bm y}_1 - \bar{\bm y}\|_2^2\right\} \approx \sigma^2_\mathrm{t}( M_1+\bar{M}/M_1-1)
    $.
    % \begin{equation}
    % \mathbb{E}\left\{\|\hat{\bm y}_1\otimes \bar{\bm y}_1 - \bar{\bm y}\|_2^2\right\} \approx \sigma^2_\mathrm{t}( M_1-1+M_{-1}).
    % \end{equation}
\end{coro}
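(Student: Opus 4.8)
The plan is to obtain the corollary as a direct specialization of Theorem~\ref{thm.denoising} to the case $I=2$ with $M_2=\bar{M}/M_1$, once I establish that the low-complexity recursion~\eqref{prob.bidecom} and the truncated HOSVD~\eqref{prob.hosvd} yield the same estimate when $I=2$. The bulk of the work is then bookkeeping: substituting $M_2=\bar{M}/M_1$ and loosening the resulting constants.

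The one conceptual step is the equivalence of the two decomposition procedures for $I=2$. Reshaping $\bar{\bm y}$ into $\bar{\bm Y}\in\mathbb{C}^{(\bar{M}/M_1)\times M_1}$ with $\vect(\bar{\bm Y})=\bar{\bm y}$, problem~\eqref{prob.bidecom} at $i=1$ is the best rank-one approximation $\min\|\bar{\bm Y}-\bar{\bm z}_1\bm z_1^{\mathsf{T}}\|_\mathrm{F}$, whose minimizer is the leading left/right singular pair of $\bar{\bm Y}$ by Eckart--Young. For $I=2$ the truncated HOSVD involves only one matricization $\bar{\bm Y}_{(1)}$ (equal to $\bar{\bm Y}$ up to a transpose) and likewise retains the leading singular vectors; hence the fitted product $\hat{\bm y}_1\otimes\bar{\bm y}_1$, with $\bar{\bm y}_1=\hat{\bm y}_2$ under the convention $\bar{\bm y}_{I-1}=\hat{\bm y}_I$, coincides with $\otimes_{i=1}^2\hat{\bm y}_i$ produced by~\eqref{prob.hosvd}. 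Consequently the error quantity controlled by Theorem~\ref{thm.denoising} applies verbatim to the output of the low-complexity method.

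Next I would substitute $I=2$ and $M_2=\bar{M}/M_1$ into the probabilistic bound~\eqref{eq.probabilistic}. Using $\sqrt{\bar{M}/M_1}=\sqrt{M_2}$ and $\sqrt{\bar{M}/M_2}=\sqrt{M_1}$, the sum collapses to $2\sum_{i=1}^2\bigl[3\sqrt{M_i}+\sqrt{\bar{M}/M_i}\bigr]=8(\sqrt{M_1}+\sqrt{M_2})$, so the right-hand side is $\sigma_\mathrm{t}\bigl(8(\sqrt{M_1}+\sqrt{M_2})+1+2\sqrt{\max\{M_1,M_2\}}\bigr)$. Bounding each of $\sqrt{M_1},\sqrt{M_2}$ by $\sqrt{\max\{M_1,M_2\}}$ gives $8(\sqrt{M_1}+\sqrt{M_2})\le 16\sqrt{\max\{M_1,M_2\}}$, which together with the remaining $2\sqrt{\max\{M_1,M_2\}}$ yields the claimed coefficient $18$ and the stated bound $\sigma_\mathrm{t}(1+18\sqrt{\max\{M_1,\bar{M}/M_1\}})$. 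The success probability reduces as $1-3\sum_{i=1}^2 e^{-M_i}=1-3(e^{-M_1}+e^{-\bar{M}/M_1})$, and the mean-square claim follows from~\eqref{eq.appro_denoise_hosvd} since $\sigma_\mathrm{t}^2(\sum_{i=1}^2 M_i+1-2)=\sigma_\mathrm{t}^2(M_1+\bar{M}/M_1-1)$.

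I do not anticipate a genuine obstacle, as the argument is a specialization plus arithmetic; the only point needing care is the equivalence in the second paragraph, i.e.\ confirming that the reshaping/transpose convention in~\eqref{prob.bidecom} does not change the fitted Kronecker product relative to the HOSVD matricization, so that Theorem~\ref{thm.denoising} indeed governs $\hat{\bm y}_1\otimes\bar{\bm y}_1$. I would also verify that the inequality $8(\sqrt{M_1}+\sqrt{M_2})+2\sqrt{\max\{M_1,M_2\}}\le 18\sqrt{\max\{M_1,M_2\}}$ is applied in the loosening direction, consistent with the $\ge 18\bar{M}^{1/4}+1$ remark that precedes the statement.
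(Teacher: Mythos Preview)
Your proposal is correct and follows essentially the same approach as the paper, which obtains the corollary by setting $I=2$ and $M_2=\bar{M}/M_1$ in Theorem~\ref{thm.denoising} after noting that the first step of the low-complexity recursion coincides with the truncated HOSVD. You additionally supply the explicit arithmetic collapsing $8(\sqrt{M_1}+\sqrt{M_2})+2\sqrt{\max\{M_1,M_2\}}$ to the constant $18$, which the paper leaves implicit.
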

% \begin{proof}
%     The result is a special case of perturbation analysis of low-rank tensor approximations in \cite{balda2016first} by setting the tensor order $R=2$ and $r$-ranks as $p_1 = p_2 = 1$ in \cite[Eq. (19)]{balda2016first}.
% \end{proof}
Corollary~\ref{coro.denoise_deco_1} shows that the first step of low complexity approximation also aids denoising. 
To intuitively see this, we reorganize $\otimes_{i=1}^I \bm y_i$ into a rank-one matrix $\bm Y:=(\otimes_{i=2}^I\bm y_i) \bm y_1 ^\mathsf{T}$
\begin{equation*}\label{eq.step_1}
    \bar{\bm Y} = \bm Y + \bar{\bm N} = \left( \otimes_{i=2}^I\bm y_i \right)\bm y_1^\mathsf{T} + \bar{\bm N},
\end{equation*}
with $\vect{(\bar{\bm Y})} = \bar{\bm y}$, $\vect(\bm Y)=\otimes_{i=1}^I \bm y_i$, and $\vect{(\bar{\bm N})} = \bar{\bm n}$. Here, the noise term is unstructured $\bar{\bm N}$ and typically has full rank. The first step ($i=1$) of~\eqref{prob.bidecom}, yields $\hat{\bm y}_1$ and $\bar{\bm y}_1$, which estimate $\bm Y$ as $\bar{\bm y}_1\hat{\bm y}_1^{\mathsf{T}}$.  Comparing this estimate with $\bar{\bm Y}$, we observe that $\bar{\bm y}_1\hat{\bm y}_1^{\mathsf{T}}$ preserves the rank-one structure of the signal. It discards the components that violate the rank-one constraint, which are often attributed to noise, effectively performing denoising. However, a drawback of the low-complexity approximation is that the error from one step can propagate to the subsequent steps, as later computations depend on estimates from the previous steps. In contrast, the error in HOSVD is independent in each subspace, leading to a better decomposition but with a higher computation cost.

%To summarize, we discuss the decomposition error bound and demonstrate the denoising effect of the decomposition step, which facilitates our algorithm to achieve a better reconstruction error as an added advantage in addition to the reduced complexity. 

\subsection{Analysis of OffSBL Algorithm}

This section discusses the convergence results for OffSBL in Algorithm~\ref{al.offSBL}. We note that OffSBL is a two-level iterative algorithm, where the outer EM iteration is given by \eqref{eq.post_meva} followed by \eqref{eq.qfunc} and \eqref{eq.noise_est}, and the inner loop updates the grid points via \eqref{eq.single_op}. We first provide the guarantees for the inner loop for a given $r$th EM iteration.

\begin{lemma}[Convergence of Grid Update]\label{lm.convergent_inner}
    Consider the alternating grid update~\eqref{eq.single_op} in the $r$th EM iteration. If $\sup_{\psi\in [\psi_{\mathrm{l}},\psi_{\mathrm{r}}]}\|\bm h(\psi)\|_2^2 <\infty$, the sequence $\{g(\bm \psi^{(r,t)})\}_{t=0}^\infty$ is non-increasing and convergent. Its iterate $\{\bm\psi^{(r,t)}\}_{t=0}^\infty$ adopts at least one limit point.
\end{lemma}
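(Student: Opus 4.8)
The plan is to recognize the inner loop of Algorithm~\ref{al.offSBL} as a cyclic (Gauss--Seidel) coordinate-descent scheme applied to the objective $g$ of \eqref{prob.op_wrt_para}, and then to invoke the standard ``monotone and bounded'' argument together with compactness. The first step is to verify that the scalar function $f_{n^*}$ minimized in \eqref{eq.single_op} is exactly the restriction of $g$ to the single coordinate $\psi_{n^*}$, with all other entries frozen at their current sweep values (refreshed for $n<n^*$ and stale for $n>n^*$, consistent with the definition of $\bm v_{n^*}$). This is a direct expansion: writing $g(\bm\psi)=\trace(\bm H(\bm\psi)\bm\Sigma\bm H(\bm\psi)^\mathsf{H})-2\mathsf{Re}\{\trace(\bm M\bm H(\bm\psi))\}$ and collecting the terms depending on $\bm h(\psi_{n^*})$, using the Hermitian symmetry of $\bm\Sigma$, reproduces $f_{n^*}(\psi_{n^*})$ up to an additive constant independent of $\psi_{n^*}$. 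Because $\bm h$ is continuous and $\sup_{\psi}\|\bm h(\psi)\|_2^2<\infty$, each $f_{n^*}$ is finite and continuous on the closed interval in \eqref{eq.single_op}, so its minimizer is attained and the update is well-defined.

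The crux is then to show that the incumbent is always feasible for its own subproblem and that the update preserves the ordering of the grid points. The admissible interval in \eqref{eq.single_op} is $[\tfrac{\psi_{n^*}+\psi_{n^*-1}}{2},\,\tfrac{\psi_{n^*}+\psi_{n^*+1}}{2}]$; it contains the current $\psi_{n^*}$ exactly when $\psi_{n^*-1}\le\psi_{n^*}\le\psi_{n^*+1}$, and any point drawn from it again lies in $[\psi_{n^*-1},\psi_{n^*+1}]$ since each midpoint sits between the relevant pair of neighbours. Hence, by induction across coordinates within a sweep and across successive sweeps, the grid stays monotonically ordered and confined to the compact box $[\psi_{\mathrm{l}},\psi_{\mathrm{r}}]^N$. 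Feasibility of the incumbent then guarantees $f_{n^*}(\psi_{n^*}^{(r,t+1)})\le f_{n^*}(\psi_{n^*}^{(r,t)})$, so each coordinate step does not increase $g$; chaining the $N$ steps of one sweep yields $g(\bm\psi^{(r,t+1)})\le g(\bm\psi^{(r,t)})$, and therefore $\{g(\bm\psi^{(r,t)})\}_t$ is non-increasing.

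I would then close with two standard facts. The objective is bounded below, since $g(\bm\psi)=\|\hat{\bm y}-\bm H(\bm\psi)\bm\mu_{\mathsf{x}}\|_2^2+\trace(\bm\Sigma_{\mathsf{x}}\bm H(\bm\psi)^\mathsf{H}\bm H(\bm\psi))\ge 0$: the first summand is a squared norm and the second is the trace of a product of positive semidefinite matrices. A non-increasing sequence bounded below converges, which gives convergence of $\{g(\bm\psi^{(r,t)})\}_t$. For the iterates, $\{\bm\psi^{(r,t)}\}_t$ lies in the compact set $[\psi_{\mathrm{l}},\psi_{\mathrm{r}}]^N$, so the Bolzano--Weierstrass theorem produces a convergent subsequence and hence at least one limit point. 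I expect the main obstacle to be precisely the feasibility/ordering step: unlike the generic coordinate-descent template, descent is not automatic here but rests on the specific midpoint form of the constraint keeping the incumbent inside each subproblem's interval, a property that must be tracked inductively through the Gauss--Seidel sweep.
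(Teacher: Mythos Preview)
Your proof is correct and follows essentially the same three-step skeleton as the paper: monotonicity of $g$ from coordinate descent, a lower bound on $g$ to force convergence of the values, and compactness to extract a limit point of the iterates. Two of your sub-arguments differ from the paper's and are in fact cleaner. For the lower bound, you observe directly that $g(\bm\psi)=\|\hat{\bm y}-\bm H(\bm\psi)\bm\mu_{\mathsf{x}}\|_2^2+\trace(\bm\Sigma_{\mathsf{x}}\bm H(\bm\psi)^{\mathsf{H}}\bm H(\bm\psi))\ge 0$, whereas the paper drops the two nonnegative terms and then bounds the remaining cross term $-2\mathsf{Re}\{\trace(\bm M\bm H(\bm\psi))\}$ via $-\trace(\bm M^{\mathsf{H}}\bm M)-N\sup_\psi\|\bm h(\psi)\|_2^2$, which is where the hypothesis $\sup_\psi\|\bm h(\psi)\|_2^2<\infty$ is actually invoked. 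For the limit point, you appeal to Bolzano--Weierstrass on the fixed box $[\psi_{\mathrm{l}},\psi_{\mathrm{r}}]^N$, while the paper argues that the sublevel sets of the continuous $g$ are compact; both work, but yours avoids having to check that the relevant sublevel set is bounded. Your extra care about feasibility of the incumbent in each subproblem \eqref{eq.single_op} and about preserving the grid ordering is a point the paper leaves implicit.
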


\begin{proof}
First, we note that there always exists an optimal solution to~\eqref{eq.single_op} due to the continuity of the function $f_{n^*}$. The extreme value theorem states that $f_{n^*}(\psi)$ must reach a minimum at least once within the closed and bounded constraint set. The solvability of~\eqref{eq.single_op} further indicates 
    \begin{equation}\label{eq.monotonicity}
        g(\bm \psi^{(r,t+1)})\leq g(\bm \psi^{(r,t)}).
    \end{equation}
    Further, the cost function $g(\bm \psi)$ is bounded from below as
    \begin{align*}
        g(\bm \psi) &=\|\hat{\bm y} - \bm H(\bm \psi)\bm \mu_{\mathsf{x}} \|_2^2+\trace(\bm \Sigma_{\mathsf{x}}\bm H(\bm \psi)^\mathsf{H}\bm H(\bm \psi))\\
        &\geq -2\mathsf{Re}(\trace \left\{ \bm M\bm H(\bm \psi)\right\} )\\
        &\geq - \trace \left\{ \bm M^\mathsf{H} \bm M \right\} - \trace\left\{\bm H^\mathsf{H}(\bm \psi)\bm H(\bm \psi) \right\}  \\ & \geq - \trace \left\{ \bm M^\mathsf{H} \bm M \right\} - N \sup_{\psi\in [\psi_{\mathrm{l}},\psi_{\mathrm{r}}]}\|\bm h(\psi)\|_2^2>-\infty.
    \end{align*}
    % \begin{multline*}
    %     g(\bm \psi) 
    %     = \trace \left\{ (\bm M^\mathsf{H}-\bm H)^\mathsf{H}(\bm M^\mathsf{H}-\bm H) + \bm H \bm \Sigma \bm H^\mathsf{H} \right\} 
    %     \\- \trace \left\{ \bm M^\mathsf{H} \bm M \right\} - 2\trace\left\{\bm H^\mathsf{H}\bm H \right\} 
    %     \\\geq - \trace \left\{ \bm M^\mathsf{H} \bm M \right\} - 2\trace\left\{\bm H^\mathsf{H}\bm H \right\}    \geq - \trace \left\{ \bm M^\mathsf{H} \bm M \right\} - 2N\epsilon.
    % \end{multline*}
    Thus, by the monotone convergence theorem, the sequence $\{g(\bm \psi^{(r,t)})\}_{t=0}^\infty$ converges. The monotonicity~\eqref{eq.monotonicity} also ensures that $\{\bm \psi^{(r,t)}\}_{t=0}^\infty$ belongs to the sublevel set of $\bm \psi^{(r,0)}$.  Since $g(\bm \psi)$ is continuous, its sublevel sets are compact, implying that the sequence $\{\bm \psi^{(r,t)}\}_{t=0}^\infty$ adopts at least one limit point.
\end{proof}

We next prove the convergence of our OffSBL algorithm. For the convergence result, we assume the zero-mean Gaussian noise $\bm{n}$ in~\eqref{eq.problem_basic_bem} has a known variance $\sigma^2 > 0$ and present the convergence based on iterates $\{\bm \gamma^{(r)},\bm \psi^{(r)}\}_{r=0}^\infty$, i.e., we do not update $\sigma^2$ via \eqref{eq.noise_est}, but use the true value of $\sigma^2$ in Algorithm~\ref{al.offSBL}.  This assumption simplifies deriving a lower bound for the negative log-likelihood function~\eqref{eq.ml_problem_para}, which is challenging when $\sigma^2 = 0$ or treated as a variable. Moreover, assuming $\sigma^2 > 0$ is standard in SBL analysis \cite{joseph2019convergence,khanna2022support} and the noiseless settings corresponds the limit where $\sigma^2 \to 0$. Thus, the results below are asymptotically applicable to the noiseless case.
 
\begin{thm}[Convergence Property of OffSBL]\label{thm.convergent_cost}
Consider the problem~\eqref{eq.problem_basic_bem} where $\bm n$ is zero-mean Gaussian noise with known variance $\sigma^2 > 0$, solved using OffSBL in Algorithm~\ref{al.offSBL}. If $\sup_{\psi\in [\psi_{\mathrm{l}},\psi_{\mathrm{r}}]}\|\bm h(\psi)\|_2^2 <\infty$, the cost function sequence $\{\mathcal{L}( \bm \gamma^{(r)},\bm \psi^{(r)} )\}_{r=0}^\infty$ converges monotonically to some value $\mathcal{L}^*$, and the sequence $\{\bm \gamma^{(r)},\bm \psi^{(r)}\}_{r=0}^\infty$ converges to a set $\mathcal{S}^*$ with $\mathcal{L}(\bm \gamma,\bm \psi)=\mathcal{L}^*$ for any $\{\bm \gamma,\bm \psi\} \in \mathcal{S}^*$.
\end{thm}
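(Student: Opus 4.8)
The plan is to view Algorithm~\ref{al.offSBL} (with $\sigma^2$ held at its true value) as a \emph{generalized} EM procedure and to run the classical monotone-likelihood argument, adapted to the fact that the grid M-step is only a descent step rather than an exact maximizer. Writing $\bm\theta\coloneq(\bm\gamma,\bm\psi)$ and using the exact identity $\log p(\hat{\bm y};\bm\theta)=Q(\bm\theta;\bm\theta^{(r)})-H(\bm\theta;\bm\theta^{(r)})$ with $H(\bm\theta;\bm\theta^{(r)})\coloneq\mathbb{E}_{\bm x|\hat{\bm y};\bm\theta^{(r)}}\{\log p(\bm x|\hat{\bm y};\bm\theta)\}$, and recalling that $\mathcal{L}$ is the negative log-likelihood minimized in~\eqref{eq.ml_problem_para}, we get $\mathcal{L}(\bm\theta)=H(\bm\theta;\bm\theta^{(r)})-Q(\bm\theta;\bm\theta^{(r)})$. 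The whole argument then rests on two elementary facts: Gibbs' inequality $H(\bm\theta;\bm\theta^{(r)})\le H(\bm\theta^{(r)};\bm\theta^{(r)})$ (nonnegativity of the KL divergence), and the claim that the M-step never decreases $Q$.

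First I would establish the monotone descent $\mathcal{L}(\bm\gamma^{(r+1)},\bm\psi^{(r+1)})\le\mathcal{L}(\bm\gamma^{(r)},\bm\psi^{(r)})$. Since $Q$ splits additively into a $\bm\gamma$-block and a $\bm\psi$-block (the separability used around~\eqref{prob.op_wrt_para}), the M-step can be treated one block at a time. The closed form~\eqref{eq.qfunc} is the exact maximizer of the $\bm\gamma$-block and hence cannot decrease $Q$. For $\bm\psi$, the $\bm\psi$-block of $Q$ equals $-g(\bm\psi)/(2\sigma^2)$ up to an additive constant, and the inner loop~\eqref{eq.single_op} produces, by Lemma~\ref{lm.convergent_inner}, iterates with $g(\bm\psi^{(r+1)})\le g(\bm\psi^{(r)})$; therefore this block too does not decrease $Q$. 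Combining the two blocks gives $Q(\bm\theta^{(r+1)};\bm\theta^{(r)})\ge Q(\bm\theta^{(r)};\bm\theta^{(r)})$, and together with Gibbs' inequality this yields $\mathcal{L}(\bm\theta^{(r+1)})\le\mathcal{L}(\bm\theta^{(r)})$. The key point is that only a \emph{descent} in $g$, not its global minimization, is needed---precisely what Lemma~\ref{lm.convergent_inner} supplies---so the generalized-EM framework applies.

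Next I would show $\mathcal{L}$ is bounded below so that the monotone sequence converges. Because $\sigma^2>0$ and $\bm\Gamma\succeq\bm0$, we have $\bm\Sigma_{\mathsf{y}}=\sigma^2\bm I_M+\bm H(\bm\psi)\bm\Gamma\bm H(\bm\psi)^\mathsf{H}\succeq\sigma^2\bm I_M$, whence $\log|\bm\Sigma_{\mathsf{y}}|\ge M\log\sigma^2$ and $\trace(\hat{\bm y}^\mathsf{H}\bm\Sigma_{\mathsf{y}}^{-1}\hat{\bm y})\ge 0$, so $\mathcal{L}\ge M\log\sigma^2>-\infty$. The monotone convergence theorem then gives $\mathcal{L}(\bm\gamma^{(r)},\bm\psi^{(r)})\to\mathcal{L}^*$, which is the first claim.

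For the iterate claim I would combine compactness with continuity. The grid iterates stay in the compact box $[\psi_{\mathrm{l}},\psi_{\mathrm{r}}]^N$, so only $\bm\gamma$ could escape to infinity; but $\mathcal{L}$ is coercive in $\bm\gamma$, since if some $\gamma_n\to\infty$ (with $\bm h(\psi_n)\ne\bm0$) the top eigenvalue of $\bm\Sigma_{\mathsf{y}}$ diverges and $\log|\bm\Sigma_{\mathsf{y}}|\to\infty$, forcing $\mathcal{L}\to\infty$. Since $\mathcal{L}^{(r)}\to\mathcal{L}^*<\infty$, the sequence $\{\bm\gamma^{(r)}\}$ must remain bounded, so $\{\bm\gamma^{(r)},\bm\psi^{(r)}\}$ is bounded and possesses accumulation points; taking $\mathcal{S}^*$ to be this accumulation set, continuity of $\mathcal{L}$ together with $\mathcal{L}^{(r)}\to\mathcal{L}^*$ forces $\mathcal{L}(\bm\gamma,\bm\psi)=\mathcal{L}^*$ for every $(\bm\gamma,\bm\psi)\in\mathcal{S}^*$, and the bounded sequence converges to $\mathcal{S}^*$ in set distance. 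I expect the main obstacle to be the monotonicity step: one must carefully justify that the per-coordinate line searches of Lemma~\ref{lm.convergent_inner} do not increase $g$ over a full outer iteration and correctly tie this back to the separable $Q$-function, since the grid update is not a global maximizer. A secondary, milder point is the coercivity argument, which implicitly needs a nondegeneracy such as $\|\bm h(\psi)\|_2>0$ so that an exploding $\gamma_n$ truly inflates $\det\bm\Sigma_{\mathsf{y}}$---harmless for steering-vector dictionaries whose columns have constant norm.
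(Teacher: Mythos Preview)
Your proposal is correct and follows essentially the same route as the paper: recognize OffSBL as a generalized EM scheme (using Lemma~\ref{lm.convergent_inner} to certify that the M-step does not decrease $Q$), bound $\mathcal{L}$ below via $\bm\Sigma_{\mathsf{y}}\succeq\sigma^2\bm I_M$ to invoke monotone convergence, and then use coercivity of $\mathcal{L}$ in $\bm\gamma$ plus continuity to confine the iterates to a compact sublevel set and identify limit points with the level set $\mathcal{S}^*$. Your write-up is a bit more explicit than the paper (which simply cites \cite{wu1983convergence} for the generalized-EM monotonicity and \cite[Lemma 3]{joseph2019convergence} for coercivity), and your caveat that coercivity needs $\bm h(\psi)\neq\bm0$ is a fair technical point the paper leaves to the cited lemma.
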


\begin{proof}
    We have $Q(\bm \gamma^{(r)},\bm \psi^{(r)}) \leq Q(\bm \gamma^{(r+1)},\bm \psi^{(r+1)})$ in the $r$th EM iteration due to Lemma \ref{lm.convergent_inner}. Then, OffSBL is a generalized EM algorithm and the sequence $\{\mathcal{L}( \bm \gamma^{(r)},\bm \psi^{(r)} )\}_{r=0}^\infty$ is nonincreasing \cite{wu1983convergence}. Also, $\mathcal{L}(\bm \gamma,\bm \psi)$ is bounded from below~as
    \begin{equation*}
        \mathcal{L}\left(\bm \gamma,\bm \psi\right)=\log |\bm \Sigma_{\mathsf{y}}|+\trace \left(\hat{\bm y}^\mathsf{H} \bm \Sigma_{\mathsf{y}}^{-1} \hat{\bm y}\right)\geq \log |\bm \Sigma_{\mathsf{y}}|\geq M \log \sigma^2,
    \end{equation*}
    where $\bm \Sigma_{\mathsf{y}}^{-1} = (\sigma^2 \bm I_{{M}} + {\bm H}(\bm \psi) \bm \Gamma {\bm H}(\bm \psi)^\mathsf{H})^{-1}$ is positive definite for $\bm \gamma \geq \bm 0$ and $ \bm \psi\in [\psi_{\mathrm{l}},\psi_{\mathrm{r}}]^N$. The last inequality is because the eigenvalues of $\bm \Sigma_{\mathsf{y}}$ are lower bounded by $\sigma^2$~\cite{he2023bayesian}. Thus, by the monotone convergence theorem,  the sequence $\{\mathcal{L}( \bm \gamma^{(r)},\bm \psi^{(r)} )\}_{r=0}^\infty$ converges to some value $\mathcal{L}^*$. 
    
    Further, the function $\mathcal{L}$ is a coercive function of $\bm \gamma$ and continuous in both $\bm \gamma$ and $\bm \psi$ \cite[Lemma 3]{joseph2019convergence}. Consequently, its sublevel sets are compact. The nonincreasing $\{\mathcal{L}( \bm \gamma^{(r)},\bm \psi^{(r)} )\}_{r=0}^\infty$ indicates that $\{\bm \gamma^{(r)},\bm \psi^{(r)}\}_{r=0}^\infty$ adopts at least one limit point in the level set of $\mathcal{L}^*$, i.e., $\mathcal{S}^*$, which completes the proof.
\end{proof}

We conclude by noting that although the properties of limit points of the EM iterations are unknown, OffSBL offers a sequence over which the negative log-likelihood $\mathcal{L}$ is nonincreasing, aligning with the problem~\eqref{eq.ml_problem_para}. We also empirically observe that OffSBL iterates also converge.

\subsection{Algorithm Complexity}
For simplicity and interpretability, we assume $M_i=\mathcal{O}(M)$, $N_i=\mathcal{O}(N)$, and $I<M<N$, where $N_i$ is the number of grids adopted in the BEM of~\eqref{eq.deco_linear_inversion}. The time complexity of dSBL with HOSVD is $\mathcal{O}\left(R_\mathrm{EM}N^2MI + IM^{I+1}\right)$ while low-complexity approximation has $\mathcal{O}\left(R_\mathrm{EM}N^2MI + M^{I+1}\right)$. Here, $R_\mathrm{EM}$ denotes the number of EM iterations. The difference between HOSVD and low complexity approximation-based schemes is roughly of order $I$. Meanwhile, both have similar space complexity, i.e., $\mathcal{O}(M^I+MN+N^2)$. Also, all sparse recovery subproblems are independent of each other and can thus be solved in parallel. In that case, the time complexity changes to $\mathcal{O}(M^{I+1}+R_\mathrm{EM}N^2M)$ for low-complexity approximation based and $\mathcal{O}(IM^{I+1}+R_\mathrm{EM}N^2M)$ for HOSVD based, while the space complexity becomes $\mathcal{O}(M^I+IMN+IN^2)$. For comparison, the time complexity of AM- and SVD-KroSBL is $\mathcal{O}\big(R_\mathrm{EM}( R_\mathrm{AM} IN^I +(MN)^I )\big)$ and $\mathcal{O}\big(R_\mathrm{EM}( N^{I+1} +(MN)^I )\big)$, respectively, and the space complexity is $\mathcal{O}((MN)^I)$ for both. Here, $R_\mathrm{AM}$ denotes the number of AM iterations in AM-KroSBL. Therefore, both the time and space complexities of our algorithm are several orders less than the state-of-the-art KroSBL methods.

\subsection{Extensions to Similar Structures}\label{sec.extension}
We reiterate that the decomposition and the OffSBL algorithms we presented are general algorithmic techniques and can also be applied independently. For instance, for estimation tasks involving Kronecker-structured signals, if there is no grid mismatch and the parameters $\bar{\bm \psi}_i$ lie on a discrete set, one can combine the decomposition algorithm with any sparse recovery algorithm. Similarly, OffSBL is a stand-alone off-grid algorithm for conventional linear inversion problems (i.e., $I=1$). Moreover, these techniques can be extended to other parameter estimation models, as discussed next.
\subsubsection{Superposition of Kronecker-structured Data}
In several wideband orthogonal frequency division multiplexing multiple input multiple output (MIMO) systems~\cite{zhou2017low,araujo2019tensor,wang2024tensor}, measurements $\bar{\bm y}$ takes the form
\begin{equation}\label{eq.superposition_Kro}
    \bar{\bm y} = \sum_{u=1}^U \left(\otimes_{i=1}^I\bar{\bm H}_{u,i,\bar{\bm \psi}_{u,i}} \bar{\bm x}_{u,i} \right)+\bar{\bm n}.
\end{equation}
where the special case of $U=1$ reduces to \eqref{eq.Kro_generate_data}. Here, we can rewrite~\eqref{eq.superposition_Kro} using the tensor form as
\begin{equation*}%\label{eq.superposition_Kro_tensor}
    \mathcal{\bm Y} = \sum_{u=1}^U \circ_{i=1}^I\left(\bar{\bm H}_{u,i,\bar{\bm \psi}_{u,i}} \bar{\bm x}_{u,i} \right)+\bar{\mathcal{\bm N}}.
\end{equation*}
Thus, each factor $\bar{\bm H}_{u,i,\bar{\bm \psi}_{u,i}} \bar{\bm x}_{u,i}$ for $u=[U]$ and $i\in[I]$ can be obtained by tensor canonical polyadic decomposition under mild conditions for uniqueness of the decomposition \cite{cichocki2015tensor}, followed by OffSBL for unknown parameter estimation.

\subsubsection{Non-Kronecker-structured Sparse Vector}
Some applications~\cite{caiafa2012block,chang2021sparse,he2023bayesian} lead to the measurement model
\begin{equation*}
    \bar{\bm y} =\left(\otimes_{i=1}^I\bar{\bm H}_{i,\bar{\bm \psi}_i}\right)\bar{\bm x}+\bar{\bm n},
\end{equation*}
where the coefficient vector $\bar{\bm x}$ is not Kronecker-structured. 
Here, direct decomposition of $\bar{\bm y}$ cannot be applied, but we can use the Kronecker-structured dictionary in multidimensional BEMs as $\bm H=\otimes_{i=1}^I \bm H_i\left( \bm \psi_{(i)} \right)$. Here, vector $\bm \psi_{(i)}$ collects all the grid points in the $i$th BEM dictionary matrix $\bm H_i$. Then, we arrive at the sparse vector problem
\begin{equation}\label{eq:Non-kro_OffSBL}
    \bar{\bm y} =\otimes_{i=1}^I \bm H_i\left( \bm \psi_{(i)} \right)\bm x+\bar{\bm n},
\end{equation}
which can be solved using the OffSBL algorithm. Specifically, we adopt a fictitious Gaussian prior distribution with covariance matrix $\bm \Gamma = \diag(\bm \gamma) \in \mathbb{R}^{\bar{N}\times \bar{N}}$ on $\bm x$. Then the estimates of $\bm \gamma$, $\{\bm \psi_{(i)}\}_{i=1}^I$ and $\sigma^2$ are determined by the type II ML with EM algorithm. The hyperparameter $\bm \gamma$ and the noise variance $\sigma^2$ can be similarly obtained using~\eqref{eq.qfunc} and~\eqref{eq.noise_est}. We can exploit the Kronecker structure in~\eqref{eq:Non-kro_OffSBL} to simplify the alternating minimization of OffSBL for $\{\bm \psi_{(i)}\}_{i=1}^I$. The EM update step for $\{\bm \psi_{(i)}\}_{i=1}^I$ is equivalent to minimizing 
\begin{equation*}%\label{eq.cost_kro_dict}
    g\left( \{\bm \psi_{(i)}\}_{i=1}^I \right)=\trace\left(\bm H \bm \Sigma \bm H^\mathsf{H}\right)-2\mathsf{Re}\left\{\trace\left(\bm M\bm H\right)\right\}.
\end{equation*}
Let $\psi_{i^*,n^*}$ be the $n^*$th grid point of the $i^*$th BEM dictionary matrix $\bm H_{i^*}$. Then, the alternating minimization step optimizes $\psi_{i^*,n^*}$ with other $\psi_{i,n}$'s being fixed, as in OffSBL. To this end, for any $i^*$, we can reorder the Kronecker product as 
\begin{equation*}
    \bm H = \otimes_{i=1}^I \bm H_i = \bm P_{i^*}(\bm H_{i^*} \otimes\bm S_{i^*})\bm Q_{i^*},
\end{equation*}
where $\bm S_{i^*} = (\otimes_{i=i^*+1}^I\bm H_i) \otimes (\otimes_{i=1}^{i^*-1}\bm H_i)$ is independent of $\bm H_{i^*}$ and $\bm P_{i^*}$ and $\bm Q_{i^*}$ are the corresponding permutation matrices~\cite{van2000ubiquitous}. Thus, the update step for $\psi_{i^*,n^*}$ minimizes
\begin{align*}
    f_{i^*,n^*}(\psi_{i^*,n^*}) &=\trace \left( \left[\left( \bm H_{i^*}^\mathsf{H} \bm H_{i^*} \right) \otimes \left( \bm S_{i^*}^\mathsf{H} \bm S_{i^*} \right)\right] \bm Q_{i^*} \bm \Sigma\bm Q_{i^*}^\mathsf{H}\right)\\
    &\;\;+\trace([\bm H_{i^*} \otimes\bm S_{i^*}]\bm Q_{i^*}\bm M\bm P_{i^*})\\
    &= 2 \mathsf{Re}\{ \bm v_{i^*,n^*}^\mathsf{H} \!\bm h_{i^*,n^*} \} \!+\! c_{i^*,n^*}\|\bm h_{i^*,n^*}\|_2^2\!+\!\rho_{i_*,n^*}\!,
\end{align*}
% the first term of \eqref{eq.cost_kro_dict} can be further expanded as
% \begin{multline}\label{eq.expand_qua}
%     \trace \left( \bm H \bm \Sigma \bm H^\mathsf{H}\right)
%         = \trace \left( (\bm H_{i^*}^\mathsf{H} \otimes\bm S_{i^*}^\mathsf{H}) (\bm H_{i^*} \otimes\bm S_{i^*})\bm Q_{i^*}  \bm \Sigma \bm Q_{i^*}^\mathsf{H} \right) \\
%         = \trace \left( \left( \bm H_{i^*}^\mathsf{H} \bm H_{i^*} \right) \otimes \left( \bm S_{i^*}^\mathsf{H} \bm S_{i^*} \right) \bm Q_{i^*} \bm \Sigma\bm Q_{i^*}^\mathsf{H}\right).
% \end{multline}
% Similarly, the second term of \eqref{eq.cost_kro_dict} is
% \begin{multline}\label{eq.expand_linear}
%         \trace (\bm M\bm H) = \trace((\bm H_{i^*} \otimes\bm S_{i^*})\bm Q_{i^*}\bm M\bm P_{i^*})
%         \\= \trace \left( \left( \bm H_{i^*} \otimes \left(\otimes_{i=i^*+1}^I\bm H_i\right)\otimes\left(\otimes_{i=1}^{i^*-1}\bm H_i \right)\right) \bm P_r\bm M\bm P_l \right),
% \end{multline}
% where $\bm P_r$ and $\bm P_l$ are permutation matrices. Based on~\eqref{eq.expand_qua}, ~\eqref{eq.expand_linear} and fixing $g\left( \{\bm \psi_i\}_{i=1}^I \right)$ but $\psi_{i^*,n^*}$,~\eqref{eq.cost_kro_dict} can be 
% similarly translated into a quadratic cost function $g(\psi_{i^*,n^*})$, given~by
% \begin{equation}\label{prob.single_op_kro}
%         g(\psi_{i^*,n^*}) = 2 \mathsf{Re}\{ \bm v_{i^*,n^*}^\mathsf{H} \bm h_{i^*,n^*} \} + c\|\bm h_{i^*,n^*}\|_2^2,
%     \end{equation}
where $\bm v_{i^*,n^*}$, $c_{i^*,n^*}$ and $\rho_{i_*,n^*}$ are independent of $\psi_{i^*,n^*}$. Thus, $f_{i^*,n^*}(\psi_{i^*,n^*})$ can be efficiently minimized with respect to $\psi_{i^*,n^*}$ using a line search. Also, alternating minimization sequentially updates $\psi_{i^*,n^*}$ for different values of $i^*$ and $n^*$, unlike dSBL where parallel updates are possible. Consequently, OffSBL incurs a higher computational cost here.

\section{Application: Channel Estimation for IRS-aided MIMO System}
\label{sec.channelmodel}

% \begin{figure}[t!]
% \centering
% \includegraphics[width=0.45\linewidth]{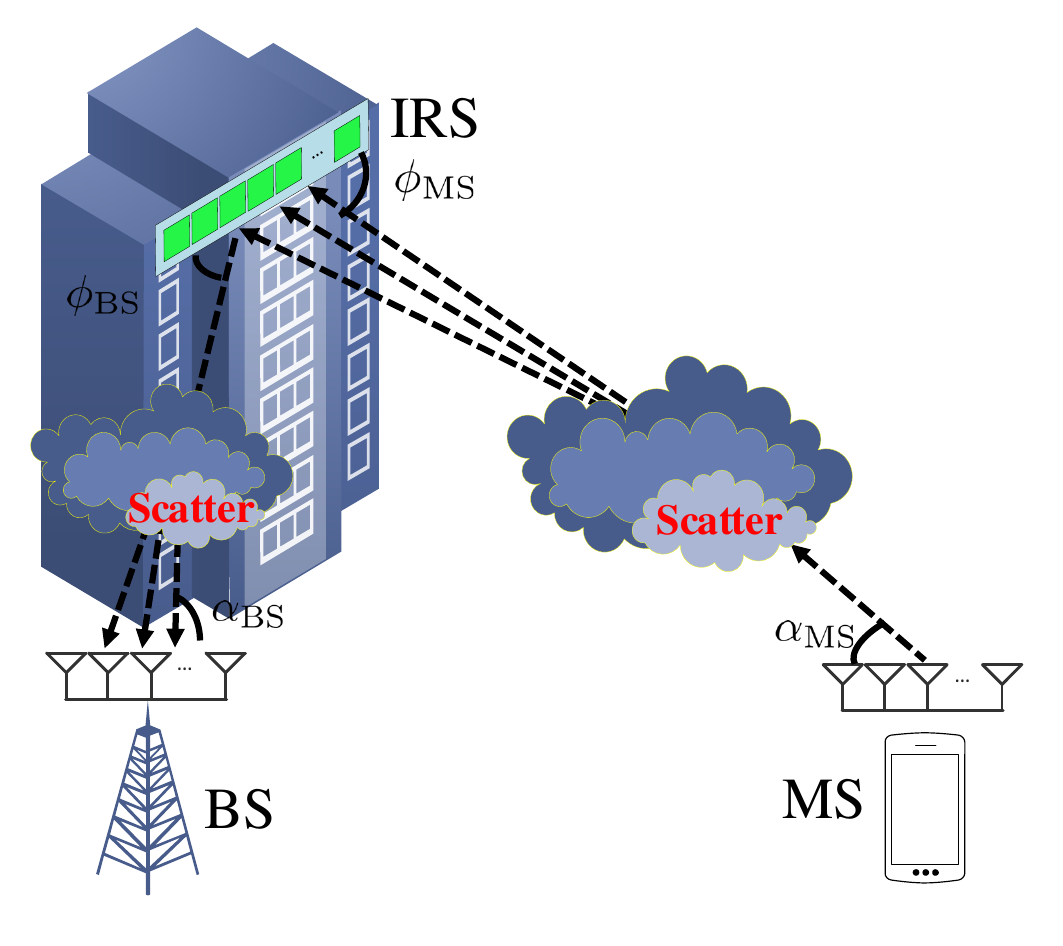}
% \caption{An illustration of AoAs and AoDs in the IRS-aided uplink channel.}
% \label{fig:irs_sys}
% \end{figure}

In this section, we explore the application of our algorithm to cascaded channel estimation in an IRS-assisted MIMO system.
We consider an uplink MIMO millimeter-wave system with a transmitter MS with $T$ antennas, a receiver BS with $R$ antennas, and a uniform linear array IRS with $L$ elements. Let $\bm \Phi_\mathrm{MS}\in \mathbb{C}^{L\times T}$ and $\bm \Phi_\mathrm{BS} \in \mathbb{C}^{R\times L}$ denote the geometric narrowband MS-IRS and IRS-BS channel, respectively,
\begin{align}
\label{eq.channelmodel1}
\bm \Phi_\mathrm{MS}&=\sum_{p=1}^{P_{\mathrm{MS}}}\sqrt{\frac{LT}{P_{\mathrm{MS}}}}\beta_{{\mathrm{MS}},p}\bm a_L(\phi_{{\mathrm{MS}},p})\bm a_{T}(\alpha_{{\mathrm{MS}}})^\mathsf{H},\\\label{eq.channelmodel2}
\bm \Phi_\mathrm{BS}&=\sum_{p=1}^{P_{\mathrm{BS}}}\sqrt{\frac{RL}{P_{\mathrm{BS}}}}\beta_{{\mathrm{BS},p}}\bm a_{R}(\alpha_{{\mathrm{BS},p}})\bm a_L(\phi_{{\mathrm{BS}}})^\mathsf{H},
\end{align}    
where we define the steering vector $\bm a_Q(\psi)\in\mathbb{C}^{Q}$ for an integer $Q$ and angle $\psi$ as 
\begin{equation*}%\label{eq.steering}
\bm a_Q(\psi) = 1/\sqrt{Q}[1,e^{j\frac{2\pi \Delta}{\eta}\cos{\psi}},\cdots,e^{j \frac{2\pi \Delta}{\eta} (Q-1)  \cos{\psi}}]^\mathsf{T}    
\end{equation*}
Here, $\Delta$ is the distance between two adjacent elements, and $\eta$ is the wavelength. We denote the number of rays in the scatter as $P_{\mathrm{MS}}$ and $P_{\mathrm{BS}}$. The angles $\phi_{{\mathrm{MS}},p}$, $\alpha_{\mathrm{MS}}$, $\alpha_{{\mathrm{BS},p}}$, and $\phi_{{\mathrm{BS}}}$ denote the $p$th AoA of the IRS, and the AoD of the MS, the $p$th AoA of the BS, and the AoD of the IRS, respectively (see \cite[Fig. 1]{he2022structure}). Then, the cascaded MS-IRS-BS channel can be expressed as $\bm \Phi_\mathrm{BS}\diag (\bm \omega)\bm \Phi_\mathrm{MS}$ for any IRS configuration $\bm\omega \in \mathbb{C}^{L }$. The $l$th entry of $\bm\omega$ represents the gain and phase shift due to the $l$th IRS element. Our goal is to estimate the cascaded channel $\bm \Phi_\mathrm{BS}\diag (\bm \omega)\bm \Phi_\mathrm{MS}$, which is a function of angles $\phi_{{\mathrm{MS}},p}$, $\alpha_{\mathrm{MS}}$, $\alpha_{{\mathrm{BS},p}}$, and $\phi_{{\mathrm{BS}}}$, for a given $\bm \omega$.

We estimate the channel using pilot data transmitted over $K$ time slots. We choose $K_{\mathrm{I}}$ IRS configurations, and for each configuration, transmit pilot data $\bm G \in \mathbb{C}^{T\times K_{\mathrm{P}}}$ over $K_{\mathrm{P}}$ time slots, where $K=K_{\mathrm{I}}K_{\mathrm{P}}$. For the $k$th configuration $\bm \omega_k$, the received signal $\bm Y_k=\bm \Phi_\mathrm{BS}\diag (\bm \omega_k)\bm \Phi_\mathrm{MS} \bm G+\bm N_k\in \mathbb{C}^{R \times K_{\mathrm{P}}}$
%\begin{equation}\label{eq.basicdatamodel}
%$\bm Y_k=\bm \Phi_\mathrm{BS}\diag (\bm \omega_k)\bm \Phi_\mathrm{MS} \bm G+\bm N_k$,
%\end{equation}
where $\bm N_k \in \mathbb{C}^{R\times K_{\mathrm{P}}}$ is the noise. Using~\eqref{eq.channelmodel1} and~\eqref{eq.channelmodel2}, we~get
\begin{multline*}
    \bm Y_k=\zeta\sqrt{L} \bm A_{R,\mathrm{BS}} \bm \beta_{{\mathrm{BS}}}\bm a_L(\phi_{{\mathrm{BS}}})^\mathsf{H}\diag (\bm \omega_k)\\\times
    \bm A_{L,\mathrm{MS}}\bm \beta_{\mathrm{MS}}\bm a_{T}(\alpha_{{\mathrm{MS}}})^\mathsf{H} \bm G+\bm N_k,
\end{multline*}
where $\bm{A}_{R,\mathrm{BS}} \in \mathbb{C}^{R \times P_{\mathrm{BS}}} $ and $\bm{\beta}_{\mathrm{BS}} \in \mathbb{C}^{P_{\mathrm{BS}}} $ have $\bm{a}_{R}(\alpha_{\mathrm{BS},p}) $ and $\beta_{\mathrm{BS},p} $ as their $ p $th column and entry, respectively. Similarly,  $\bm{A}_{L,\mathrm{MS}} \in \mathbb{C}^{R \times P_{\mathrm{MS}}} $ and $\bm{\beta}_{\mathrm{MS}} \in \mathbb{C}^{P_{\mathrm{MS}}} $ have $\bm{a}_{L}(\alpha_{\mathrm{MS},p}) $ and $\beta_{\mathrm{MS},p} $ as their $ p $th column and entry, respectively. Also, $\zeta \coloneq \sqrt{\frac{LRT}{P_{\mathrm{MS}}P_{\mathrm{BS}}}} $. Vectorizing $\bm Y_k$ and using the properties of the Khatri-Rao product~\cite[Lemma A1]{rao1970estimation} leads to (see \cite{he2022structure} for detailed algebraic simplifications)
\begin{align*}
    \bm y_k &= \zeta\sqrt{L} \left(\bm A_{L,\mathrm{MS}}\bm \beta_{\mathrm{MS}}\bm a_{T}(\alpha_{{\mathrm{MS}}})^\mathsf{H} \bm G \right)^\mathsf{T}\\&\;\;\;\;\odot\left( \bm A_{R,\mathrm{BS}} \bm \beta_{{\mathrm{BS}}}\bm a_L(\phi_{{\mathrm{BS}}})^\mathsf{H}  \right)\bm \omega_k + \bm n_k\\
    &=\left[\bm \omega_k^\mathsf{T}\bm A_{L,\mathrm{I}}\bm \beta_{\mathrm{MS}}\right]
         \left[\zeta\bm G^\mathsf{T}\bm a_{T}^*(\alpha_{{\mathrm{MS}}}) \right]
        \otimes \left[\bm A_{R,\mathrm{BS}} \bm \beta_{{\mathrm{BS}}}\right]+\bm n_k,
\end{align*}
where $\bm{A}_{L,\mathrm{I}} \in \mathbb{C}^{L \times P{\mathrm{MS}}}$ whose $p$th column is $\bm a_{L}(\phi_{\mathrm{MS},p}-\phi_\mathrm{BS})$, $\odot$ is the Khatri-Rao product, and $(\cdot)^*$ is conjugate. Here, the channel is given by
\begin{multline}\label{eq:channel_repre}
    \mathrm{vec}(\bm \Phi_\mathrm{BS}\diag (\bm \omega)\bm \Phi_\mathrm{MS}) \\= \left[\bm \omega^\mathsf{T}\bm A_{L,\mathrm{I}}\bm \beta_{\mathrm{MS}}\zeta\bm a_{T}^*(\alpha_{{\mathrm{MS}}})\right]
        \otimes \left[\bm A_{R,\mathrm{BS}} \bm \beta_{{\mathrm{BS}}}\right].
\end{multline}
Combining the received signals obtained for the $K_{\mathrm{I}}$ configurations, 
% \begin{multline}
%     \bar{\bm Y} = \zeta \left(\bm A_{\mathrm{L,MS}}\bm \beta_{\mathrm{MS}}\bm a_{T}(\alpha_{{\mathrm{MS}}})^\mathsf{H} \bm G \right)^\mathsf{T}\\\odot\left( \beta_{{\mathrm{BS}}}\bm a_{R}(\alpha_{{\mathrm{BS}}})\bm a_L(\phi_{{\mathrm{BS}}})^\mathsf{H}  \right)\bm \Omega+\bar{\bm N},
% \end{multline}
we obtain $\bar{\bm y}\in\mathbb{C}^{RK}$
% \begin{multline}\label{eq.ce_data_model}
%         \bar{\bm y}=\left[\bm \Omega^\mathsf{T}\left( \bm A_{L,\mathrm{MS}}^\mathsf{T} \odot \bm a_L(\phi_{{\mathrm{BS}}})^\mathsf{H} \right)^\mathsf{T}\bm \beta_{\mathrm{MS}}\right]
%         \\\otimes \left[ \zeta\bm G^\mathsf{T}\bm a_{T}^*(\alpha_{{\mathrm{MS}}}) \right]
%         \otimes \left[\bm A_{R,\mathrm{BS}} \bm \beta_{{\mathrm{BS}}}\right]+\bar{\bm n}.
% \end{multline}
\begin{equation}\label{eq.ce_data_model}
        \bar{\bm y}=\left[\bm \Omega^\mathsf{T}\bm A_{L,\mathrm{I}}\bm \beta_{\mathrm{MS}}\right]
        \otimes \left[ \zeta\bm G^\mathsf{T}\bm a_{T}^*(\alpha_{{\mathrm{MS}}}) \right]
        \otimes \left[\bm A_{R,\mathrm{BS}} \bm \beta_{{\mathrm{BS}}}\right]+\bar{\bm n},
\end{equation}
% \begin{equation}
% \begin{aligned}
%         \bar{\bm y}&=\bigg(\bm \Omega^\mathsf{T}\left( \bm A_{\mathrm{L,MS}}^\mathsf{T} \odot \bm a_L(\phi_{{\mathrm{BS}}})^\mathsf{H} \right)^\mathsf{T}
%     \\&\otimes \left( \bm G^\mathsf{T}\bm a_{T}^*(\alpha_{{\mathrm{MS}}}) \right)\otimes \bm a_{R}(\alpha_{{\mathrm{BS}}})\bigg) \vect (\zeta\beta_{{\mathrm{BS}}}\bm \beta_{\mathrm{MS}})+\bar{\bm n}
%     \\&=\left(  \bm \Omega^\mathsf{T}\left( \bm A_{\mathrm{L,MS}}^\mathsf{T} \odot \bm a_L(\phi_{{\mathrm{BS}}})^\mathsf{H} \right)^\mathsf{T}\bm \beta_{\mathrm{MS}}\right)\otimes \left( \zeta\bm G^\mathsf{T}\bm a_{T}^*(\alpha_{{\mathrm{MS}}}) \right) 
%     \\&\otimes \beta_{{\mathrm{BS}}} \bm a_{R}(\alpha_{{\mathrm{BS}}})+\bar{\bm n}.
% \end{aligned}
% \end{equation}
where $\bm{\Omega} \in \mathbb{C}^{L \times K_\mathrm{I}}$ with the $k$th column as $\bm\omega_k$. Therefore, from \eqref{eq:channel_repre}, the channel estimation problem reduces to recovering $\bm A_{L,\mathrm{I}}\bm \beta_{\mathrm{MS}}$, $\bm a_{T}(\alpha_{{\mathrm{MS}}})$, and $\bm A_{R,\mathrm{BS}} \bm \beta_{{\mathrm{BS}}}$ from $\bar{\bm y}$ up to a scaling factor, given that  $\bm G$ and $\bm \Omega$ are known. 

Comparing~\eqref{eq.ce_data_model} with~\eqref{eq.Kro_generate_data}, we see the signal model here is the Kronecker product of three terms, i.e., $I=3$. The unknown parameters are AoA or AoD given by $\bar{\bm \psi}_1\in\mathbb{R}^{P_{\mathrm{MS}}}$, $\bar{\bm \psi}_2= \alpha_\mathrm{MS}\in\mathbb{R}$, and $\bar{\bm \psi}_3\in\mathbb{R}^{P_{\mathrm{BS}}}$ where the $p$th entry of $\bar{\bm \psi}_1$ and $\bar{\bm \psi}_3$ are $\phi_{\mathrm{MS},p} - \phi_\mathrm{BS}$ and $\alpha_{\mathrm{BS},p}$, respectively. Also, $\psi_{i,\mathrm{l}}=-\pi$ and $\psi_{i,\mathrm{r}}=\pi$ for all values of $i$. The basis functions are related to the steering vectors as 
\begin{equation*}%\label{eq.col_func}
    \bm h_1(\psi) = \bm \Omega^\mathsf{T}\bm a_L(\psi),\ \bm h_2(\psi) = \bm G^\mathsf{T}\bm a_T(\psi),\ \bm h_3(\psi) = \bm a_R(\psi).
\end{equation*}
Correspondingly, we have $\bar{\bm x}_1 = \bm \beta_{\mathrm{MS}}$, $\bar{\bm x}_2 = \zeta$, and $\bar{\bm x}_3 = \bm \beta_{\mathrm{BS}}$. The channel estimation problem is now translated into estimating $\{\bar{\bm \psi}_i\}_{i=1}^3$ and $\{\bar{\bm x}_i\}_{i=1}^3$ from the noisy measurement $\bar{\bm y}$, where our dSBL (Algorithm~\ref{al.offSBL}) can be applied.

Using the decomposition step (Line~2) of Algorithm~\ref{al.offSBL}, we first decompose $\bar{\bm y}$ into three vectors, $\hat{\bm y}_1\in \mathbb{C}^{K_{\mathrm{I}}} $, $\hat{\bm y}_2\in \mathbb{C}^{K_\mathrm{P}} $, and $\hat{\bm y}_3\in \mathbb{C}^{R} $, corresponding to the three terms in the Kronecker product in \eqref{eq.ce_data_model}. 
% as
% \begin{equation}\label{eq.after_deco}
% \begin{aligned}
%     \bm y_1 &= \bm \Omega^\mathsf{T}\left( \bm A_{\mathrm{L,MS}}^\mathsf{T} \odot \bm a_L(\phi_{{\mathrm{BS}}})^\mathsf{H} \right)^\mathsf{T}\bm \beta_{\mathrm{MS}}+\bm n_1 \in \mathbb{C}^{M_1},\\
%     \bm y_2 &= \bm G^\mathsf{T}\bm a_{T}^*(\alpha_{{\mathrm{MS}}}) \zeta+\bm n_2\in \mathbb{C}^{M_2},\\
%     \bm y_3 &= \bm A_{R,\mathrm{BS}} \bm \beta_{{\mathrm{BS}}}+\bm n_3\in \mathbb{C}^{M_3},
% \end{aligned}
% \end{equation}
% where $M_1 = K_{\mathrm{I}}$, $M_2=K_\mathrm{P}$, $M_3 = R$, $\bar{M}=RK$, and we set ambiguity factors to one. 
We then use the dictionary for the steering vectors for a given integer $Q$ as
\begin{equation}\label{eq.dict_form}
\bm A_{Q}(\bm \psi) = \begin{bmatrix}
\bm a_{Q}(\psi_1) & \bm a_{Q}(\psi_2)&\ldots&\bm a_{Q}(\psi_N)
\end{bmatrix}\in \mathbb{C}^{Q\times N},
\end{equation}
where $\bm \psi$ captures the unknown angles.
This formulation leads to the following three problems similar to \eqref{eq.problem_basic_bem},
% \begin{align}
% \bm y_1 &= \bm H_1 \bm x_1 + {\bm n_1},\label{eq.sys1}\tag{P1}\\
% \bm y_2 &= \bm H_2 \bm x_2 + {\bm n_2},\label{eq.sys2}\tag{P2}\\
% \bm y_3 &= \bm H_3 \bm x_3 + {\bm n_3},\label{eq.sys3}\tag{P3}
% \end{align}
% with $\bm H_{\mathrm{L}}\coloneq\bm \Omega^\mathsf{T}\bm A_L \in\mathbb{C}^{K_{\mathrm{I}}\times N}$, $\bm H_{\mathrm{T}}\coloneq\bm G^\mathsf{T} \bm A_{T}^*$, $\bm H_{\mathrm{R}}\coloneq\bm A_{R}$, $I=3$, $M_1 = K_{\mathrm{I}}$, $M_2=K_\mathrm{P}$, $M_3 = R$, and $\bar{M}=RK$. We note that channel reconstruction only requires $\bm x_{\mathrm{L}} \otimes \bm x_{\mathrm{T}}^* \otimes \bm x_{\mathrm{R}}$, which is thus not affected by scaling ambiguity.
% Problem~\eqref{eq.sys1},~\eqref{eq.sys2}, and~\eqref{eq.sys3} can be unified as
\begin{equation}\label{eq.simplified_bem}
    \hat{\bm y}_i = \bm H_i (\bm \psi_{(i)})\bm x_i+\bm n_i ,\ i=1,2,3,
\end{equation}
where $\bm H_1(\bm \psi_{(1)})=\bm \Omega^\mathsf{T}\bm A_{L}(\bm \psi_{(1)})$, $\bm H_2(\bm \psi_{(2)})=\bm  G^\mathsf{T}\bm A_{T}(\bm \psi_{(2)})$, and $\bm H_3(\bm \psi_{(3)})=\bm A_{R}(\bm \psi_{(3)})$.
% \begin{equation}\label{eq.col_func}
%     \bm h_1(\cdot) = \bm \Omega^\mathsf{T}\bm a_L(\cdot),\ \bm h_2(\cdot) = \bm G^\mathsf{T}\bm a_T(\cdot),\ \bm h_3(\cdot) = \bm a_R(\cdot).
% \end{equation}
We solve them via OffSBL, initializing $\bm \psi_{(i)}$ by sampling the angular domain using $N$ grid angles $\{\theta_n\}_{n=1}^N$ such that $\cos(\theta_n)= 2(n-1)/N-1$. OffSBL provides estimates $(\hat{\bm\psi}_{(i)},\hat{\bm x}_i)$ of $(\bar{\bm\psi}_i,\bar{\bm x}_i)$, for $i=1,2,3$. Finally, using \eqref{eq:channel_repre} we compute the channel estimate for a given configuration $\bm \omega$ as 
\begin{equation*}\left[\bm \omega^\mathsf{T}\bm A_{L}(\hat{\bm \psi}_{(1)})\hat{\bm x}_1\bm A_{T}(\hat{\bm \psi}_{(2)})\hat{\bm x}_2\right]
        \otimes \left[\bm A_{R}(\hat{\bm \psi}_{(3)})\hat{\bm x}_3\right].
        \end{equation*} Here, the channel estimate is not affected by scaling ambiguity. 

We reiterate that OffSBL is a standalone off-grid algorithm applicable to various linear inversion problems (i.e., $I=1$). One notable example is direction-of-arrival estimation, where $K$ far-field narrowband signals impinge on a uniform array with $L$ elements ($K < M$), resulting in the received signal given by
\begin{equation*}
    \hat{\bm y} = \bm A_{L}(\bm\psi) \bm x + {\bm n},
\end{equation*}
where the BEM dictionary $\bm A_{L}(\bm \psi)$ defined in \eqref{eq.dict_form} has steering vectors as its columns, $\bm \psi$ captures the AoAs, and $\bm x$ is the sparse vector whose support corresponds to the true AoAs. This formulation is similar to $i=3$ in \eqref{eq.simplified_bem} for the IRS-aided channel estimation problem. In several other applications, the BEM dictionary takes the form $\bm B\bm A_{L}(\bm\psi)$. Specifically, for $i=1$ and $i=2$ in \eqref{eq.simplified_bem}, $\bm B$ corresponds to $\bm \Omega^\mathsf{T}$ and $\bm G^\mathsf{T}$, respectively. In other cases, $\bm B$ 
        % We see that $i=3$ is equivalent to the direction of arrival estimation, while the dictionaries in $i=1,2$ are functions of steering vectors. Since our OffSBL (Algorithm \ref{al.offSBL}) does not assume any structure on $\bm h_i$, we apply it to reconstruct the sparse coefficients $\bm x_i$ (path gains) and to estimate the unknown parameters (angles) for $i=1,2,3$. We note that the column functions in~\eqref{eq.col_func} also exist in other problems. In different scenarios, the matrix $\bm \Omega$ (or $\bm G$) 
can represent beamformers \cite{wang2024tensor,schroeder2024low,zhou2017low}, combiners \cite{zhou2017low}, IRS configurations \cite{wang2024intelligent},  pilot data \cite{wang2024tensor}, or random matrices~\cite{wang2009direction}. 

\section{Performance Evaluation}\label{sec.simu}
We present numerical results to compare our algorithm with the state-of-the-art methods. We present three sets of results. The first two demonstrate the decomposition step and OffSBL for parameter estimation. The third shows the combined results for IRS-aided channel estimation.
Our code is available \href{https://github.com/YanbinHe/JournalDecomOffGrid}{here}.

\subsection{Decomposition-based Sparse Vector Recovery}
In this section, we highlight the advantages of the decomposition step in reducing computational complexity and enhancing the denoising effect. We focus on recovering the Kronecker-structured sparse vector~\eqref{eq.vect_kro} using a multidimensional BEM~\eqref{eq.problem_basic} in the on-grid setting, without requiring the OffSBL algorithm. By combining the decomposition step with  SBL, we demonstrate the benefits of this approach. We compare our method's performance with other methods that do not use decomposition, such as classical SBL~\cite{wipf2004sparse}, classical OMP, AM-KroSBL, and SVD-KroSBL~\cite{he2022structure}. Specifically, AM- and SVD-KroSBL only consider the Kronecker-structured support of the sparse vector and do not exploit the Kronecker structure in the nonzero entries as in~\eqref{eq.vect_kro}.

We set $I=3$ with $M_i=M$, and $N_i=12$ for $i=1,2,3$ in~\eqref{eq.problem_basic}~and~\eqref{eq.vect_kro}. So, we have $\bm H = \otimes_{i=1}^3\bm H_i$ and $\bm x = \otimes_{i=1}^3\bm x_i$ with $\bm x_i\in\mathbb{R}^{12}$. The columns of $\bm H_i \in \mathbb{C}^{M\times 12}$ for $i=1,2,3$ are the steering vectors evaluated by the grids $\{\theta_n\}_{n=1}^{12}$ defined in Sec. \ref{sec.channelmodel}. There are four nonzeros in each $\bm x_i$, whose positions are uniformly chosen from the grids and amplitudes are uniformly drawn from $[0.5,1.5]$. Here, measurement level $M$ is set to be $\{8,9,10\}$, labeled as \texttt{Low}, \texttt{Medium}, and \texttt{High} measurement case, controlling the number of measurements $\bar{M}=M^3$ and the undersampling ratio $M^3/N^3$. We adopt the additive white Gaussian noise with zero mean whose variance is determined by $\text{SNR~(dB)} = 10\log_{10}\mathbb{E}\{\|\bm H\bm x\|_2^2/\|\bm n\|_2^2\}$ of $\{5,10,15,20,25,30\}$. Three metrics are considered for performance evaluation: normalized mean squared error (NMSE), support recovery rate (SRR), and run time. Here, we define 
$$
    \text{NMSE} = \mathbb{E}\left\{\frac{\|\bm x-\hat{\bm x}\|_2^2}{\|\bm x\|_2^2}\right\},
    \text{SRR} = \frac{|\supp(\hat{\bm x}) \cap \supp(\bm x) |}{|\supp(\hat{\bm x}) \cup \supp(\bm x) |}\label{eq.def_srr},
$$
where ${\bm x}$ is the ground truth and $\hat{\bm x}$ is the estimated vector. We limit the number of iterations for the SBL-based methods (dSBL, cSBL, AM-KroSBL, and SVD-KroSBL) to 200 and prune small entries in hyperparameters for faster convergence.
% The results are summarized in Fig.~\ref{fig.sparse_re} and Tables~\ref{tab.denoise} and \ref{tab.time_decom}. 
\begin{figure}
\centering
  \includegraphics[width=0.49\linewidth]{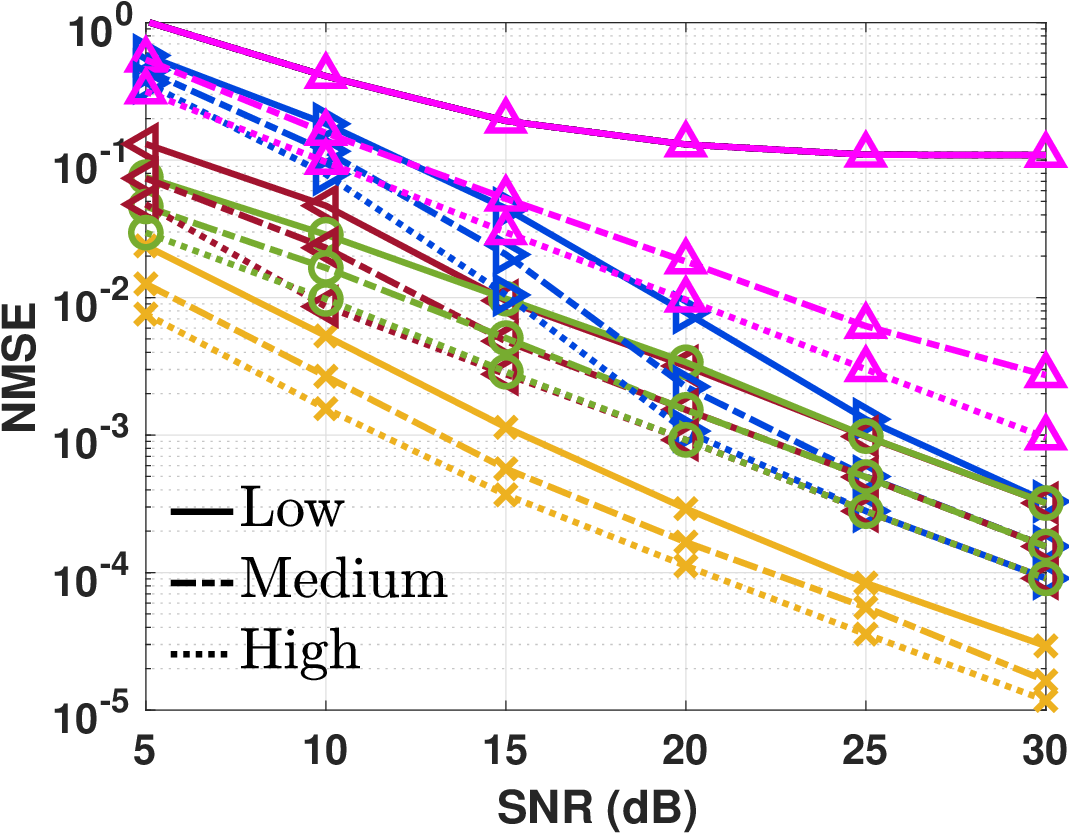}
  \includegraphics[width=0.49\linewidth]{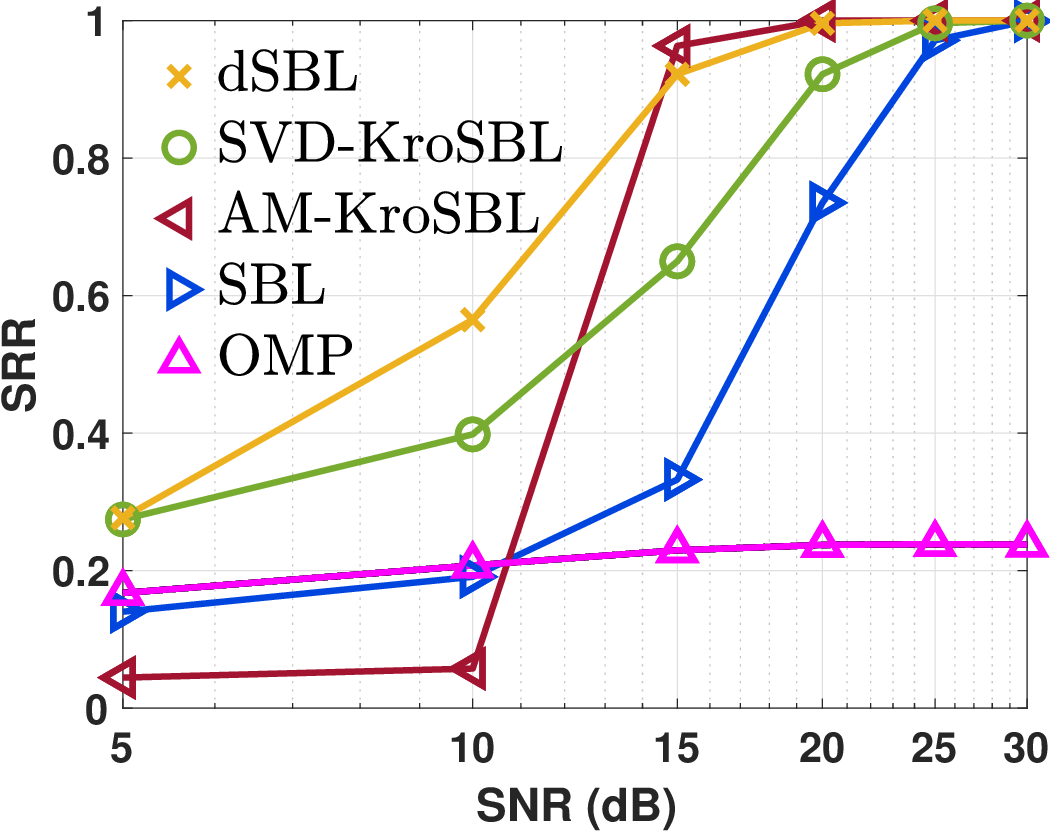}
  \caption{NMSE and SRR of different algorithms as functions of SNR. For the SRR curves, we choose $M=8$ to avoid clutter.}
  \label{fig.sparse_re}
\end{figure}

The denoising effect of the decomposition step is shown in Table~\ref{tab.denoise}. Here, we compare the noise levels of $i)$ the original noisy signal $\bm y$ in~\eqref{eq.problem_basic}, $ii)$ the signal after decomposition $\hat{\bm y}\coloneq\otimes_{i=1}^3 \hat{\bm y}_i$ where $\hat{\bm y}_i$'s are the solution to the problem~\eqref{prob.bidecom}, and $iii)$ the result of \eqref{eq.appro_denoise_hosvd}. It can be seen that the noise level is significantly reduced after decomposition. It also closely matches the result in \eqref{eq.appro_denoise_hosvd}, validating our claim on the denoising effect discussed in Sec.~\ref{sec:estimation_acc}. 

Fig.~\ref{fig.sparse_re} shows that with higher SNR and more measurements, all algorithms yield better NMSE and SRR, as expected. Our dSBL algorithm outperforms other methods in NMSE and has the best SRR performance in most cases, demonstrating the efficacy of the decomposition idea. In contrast to the SVD-KroSBL algorithm that uses Kronecker-structured support, dSBL achieves superior NMSE by using the additional Kronecker structure in nonzero entries explicitly enforced via~\eqref{eq.vect_kro} through the decomposition step. The relatively lower performance of AM-KroSBL is attributed to its slow convergence, given that we fix the number of EM iterations, as pointed out in \cite{he2023bayesian}. The lower SRR and NMSE observed in the low SNR regime are due to small nonzero values in the estimate at locations where the ground truth is zero. 

Finally, Table~\ref{tab.time_decom} demonstrates that dSBL requires two-order less run time than the other competing algorithms, corroborating the computational advantage of our decomposition.

\begin{table}[t]
\centering
\scriptsize
\caption{Illustration of denoising with $M=10$, using the original noisy signal $\bm y$, reconstructed signal $\hat{\bm y}=\otimes_{i=1}^3 \hat{\bm y}_i$ after the decomposition step, and ground truth $\bm y_{\mathsf{o}}$.}
\begin{tabular}{l|c|c|c|c|c|c}
\hline
\diagbox[dir=NW]{\makecell{$\substack{\text{Noise}\\\text{level}}$}}{\makecell{SNR}} & 5 dB     & 10 dB   & 15 dB   & 20 dB   & 25 dB   & 30 dB   \\ \hline
\hline
$\|\bm y-\bm y_{\mathsf{o}}\|_2^2$  & 34.6609 & 9.5529 & 2.8528 & 0.9720 & 0.3044 & 0.0895 \\ \hline
$\|\hat{\bm y}-\bm y_{\mathsf{o}}\|_2^2$  & 0.9798  & 0.2720 & 0.0796 & 0.0267 & 0.0087 & 0.0023 \\ \hline
From~\eqref{eq.appro_denoise_hosvd}  & 0.9286  & 0.2580 & 0.0774 & 0.0263 & 0.0082 & 0.0024 \\ \hline
\end{tabular}
\label{tab.denoise}
\end{table}

\begin{table}[t]
\centering
\scriptsize
\caption{Runtime of different schemes in seconds}
\begin{tabular}{l|c|c|c|c|c|c}
\hline
SNR & 5 dB     & 10 dB   & 15 dB   & 20 dB   & 25 dB   & 30 dB   \\ \hline
\hline
OMP  & 0.599 & 0.602 & 0.605 & 0.603 & 0.604 & 0.603 \\ \hline
cSBL  & 8.961  & 7.470 & 6.111 & 5.552 & 5.397 & 5.318 \\ \hline
AM-KroSBL  & 8.528  & 8.516 & 7.249 & 5.424 & 4.520 & 4.093 \\ \hline
SVD-KroSBL  & 4.534  & 3.360 & 2.840 & 2.668 & 2.627 & 2.608 \\ \hline
dSBL  & 0.009  & 0.005 & 0.004 & 0.004 & 0.004 & 0.004 \\ \hline
\end{tabular}
\label{tab.time_decom}
\end{table}
%\diagbox[dir=NW]{\makecell{Schemes}}{\makecell{SNR (dB)}}

\subsection{Off-grid Parameters Estimation}\label{sec.simu_offsbl}

\begin{figure}[t]
\centering
\centering
  \subcaptionbox{$M=50$\label{fig3.a}}{\includegraphics[width=0.49\linewidth]{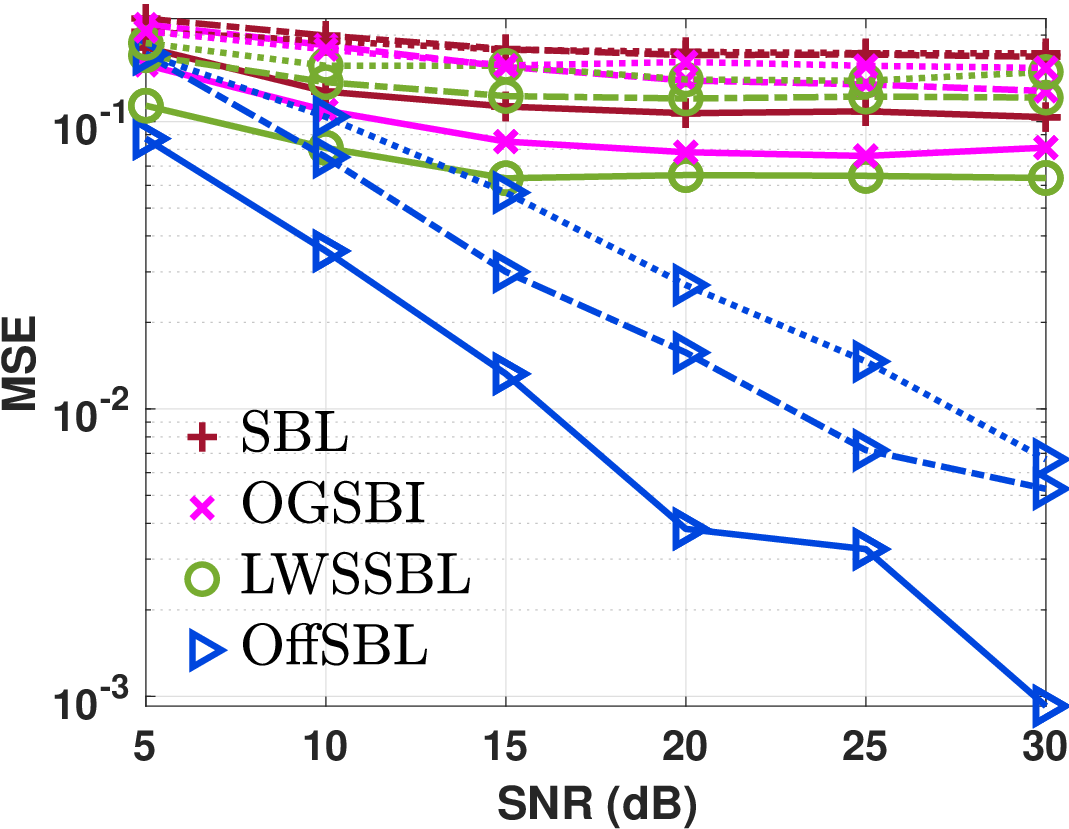}}
  \subcaptionbox{SNR $=30$dB\label{fig3.b}}{\includegraphics[width=0.49\linewidth]{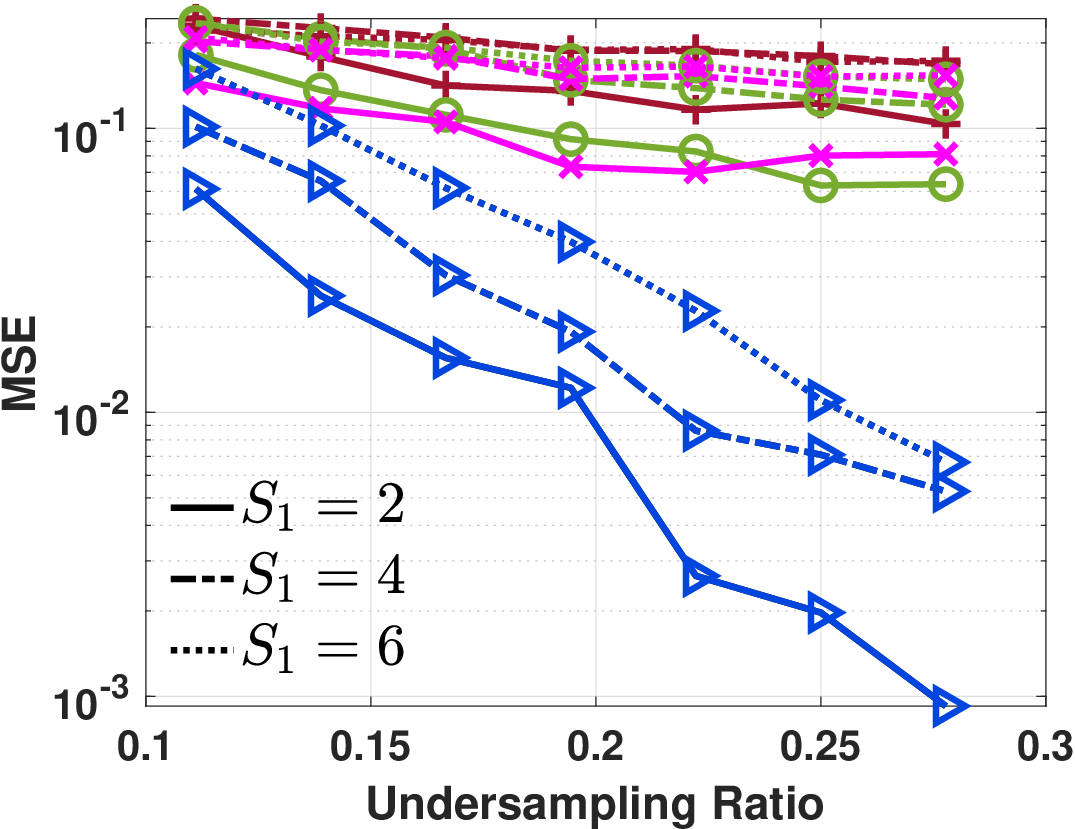}}
  \subcaptionbox{$M=50$\label{fig3.c}}{\includegraphics[width=0.49\linewidth]{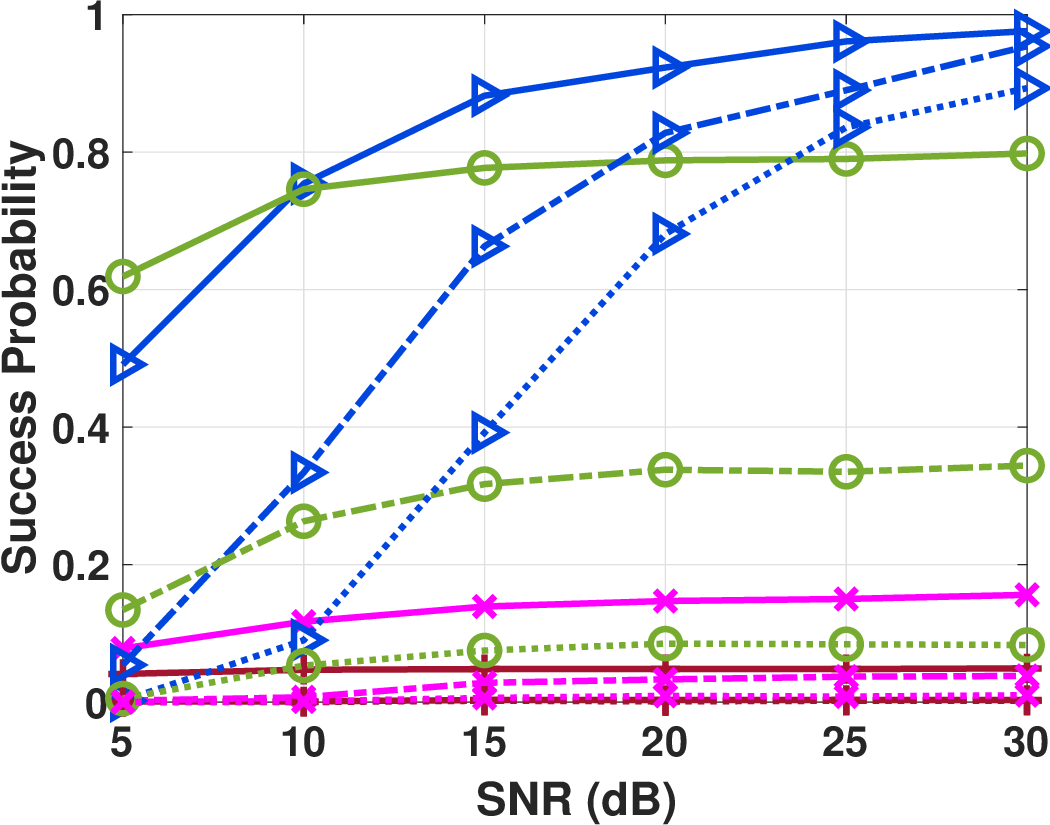}}
  \subcaptionbox{SNR $=30$dB\label{fig3.d}}{\includegraphics[width=0.49\linewidth]{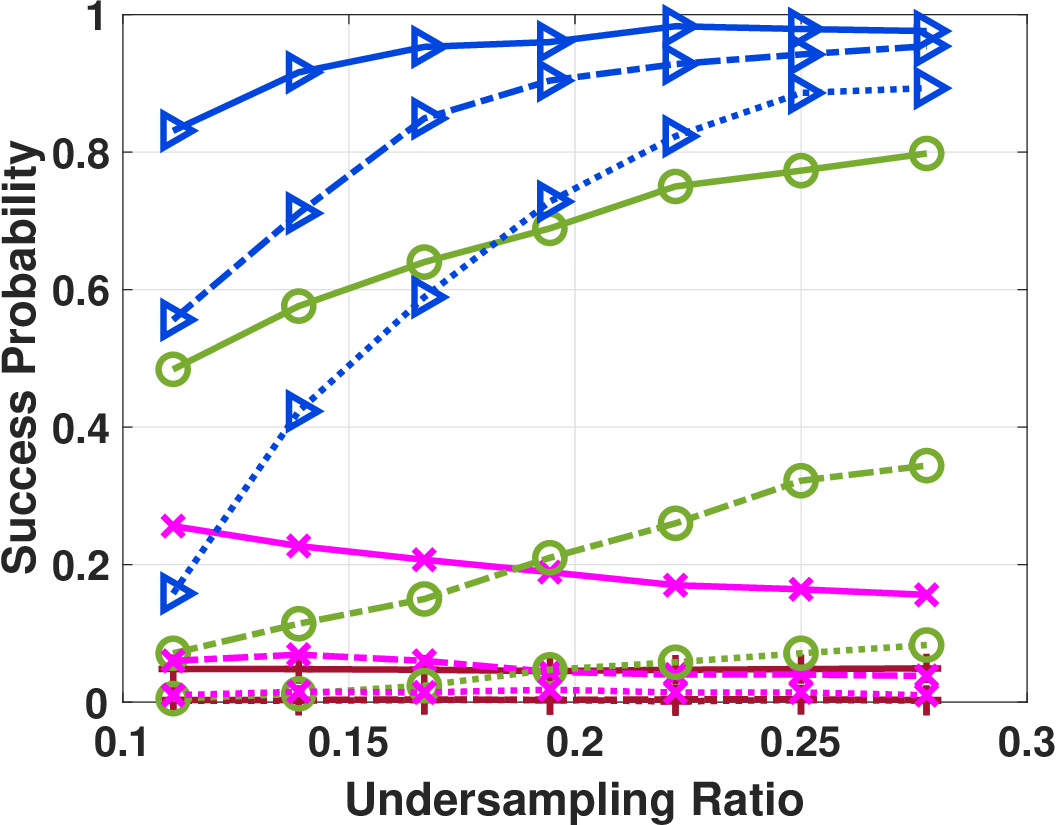}}
  
  \caption{MSE and success probability as a function of SNR and undersampling ratio $\frac{M}{N}$ for unknown parameter estimation with a varying number of unknown parameters $S_1$ and $N=180$.
  % [\color{red}In the figure, change probability to success probability; ratio to undersampling ratio; change the color of one of the blue curves to green or some other color]
  }
  \label{fig.unknown_est}
\end{figure}

\begin{table}[t]
\centering
\scriptsize
\caption{Runtime in second for unknown parameter estimation with $S_1=2$ and SNR$=30$dB.}
\begin{tabular}{l|c|c|c|c|c|c|c}
\hline
$M$ & 20      & 25    & 30    & 35    & 40    & 45    & 50  \\ \hline
OffSBL  & 0.602 & 0.546 & 0.539 & 0.541 & 0.477 & 0.504 & 0.522 \\ \hline
SBL  & 0.131  & 0.137 & 0.161 & 0.170 & 0.171 & 0.181 & 0.191 \\ \hline
LWSSBL  & 0.032  & 0.033 & 0.035 & 0.037 & 0.038 & 0.041 & 0.043 \\ \hline
OGSBI  & 0.250  & 0.269 & 0.274 & 0.286 & 0.290 & 0.305 & 0.315 \\ \hline
\end{tabular}
\label{tab.unknown_time}
\end{table}

\begin{figure*}
\centering
  \subcaptionbox{OffSBL, $\text{MSE}=0.0365$\label{fig.spec.off}}{\includegraphics[width=0.4\linewidth]{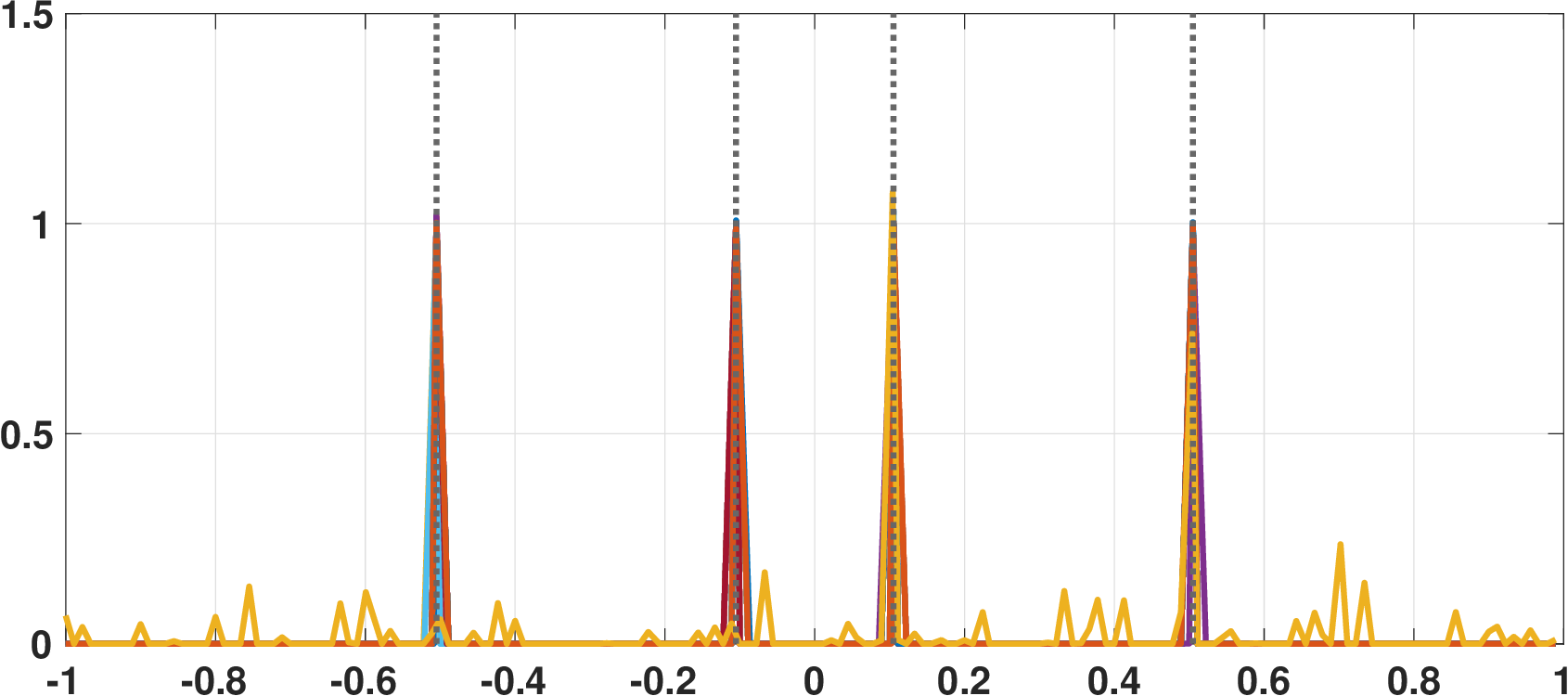}}\hspace{0em}
  \subcaptionbox{OGSBI, $\text{MSE}=0.1789$\label{fig.spec.ogsbi}}{\includegraphics[width=0.4\linewidth]{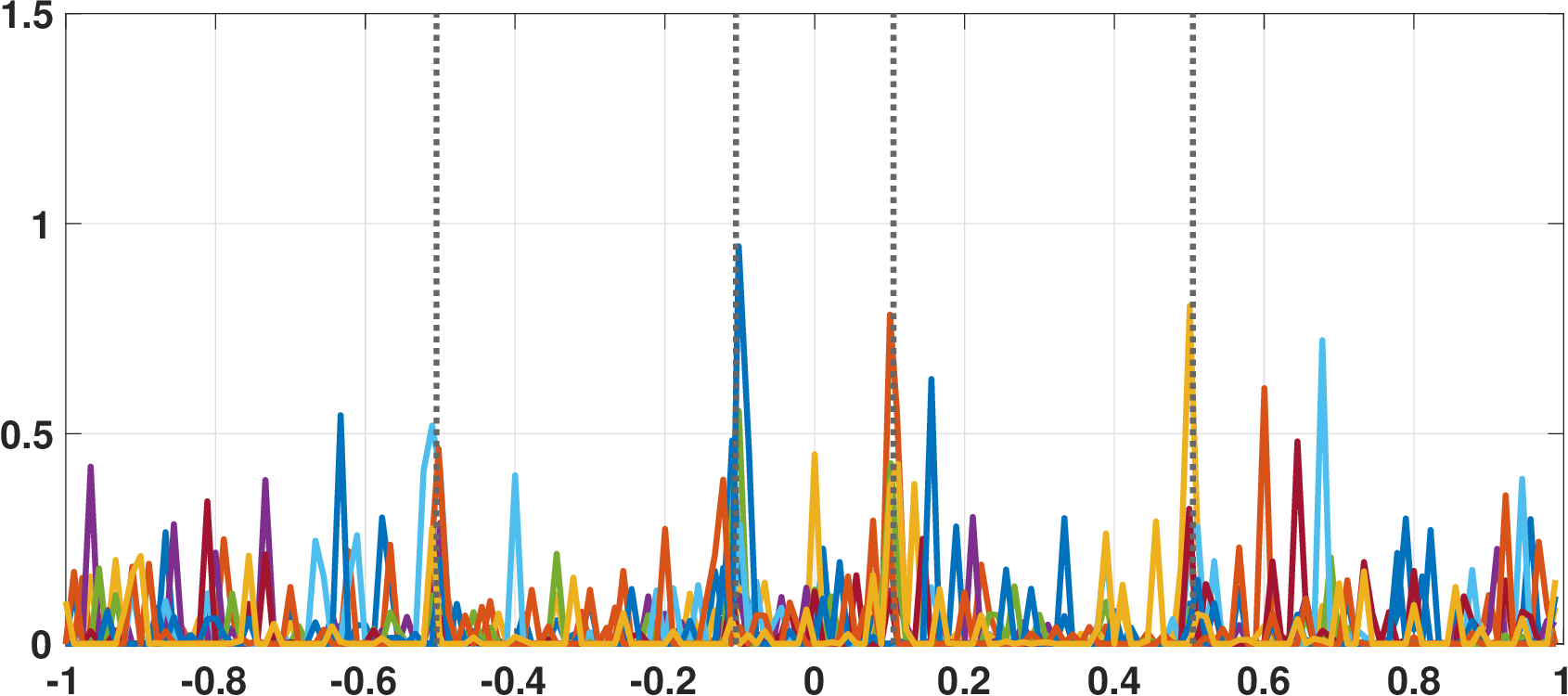}}\hspace{0em}\\
  \subcaptionbox{LWSSBL, $\text{MSE}=0.3618$\label{fig.spec.lwssbl}}{\includegraphics[width=0.4\linewidth]{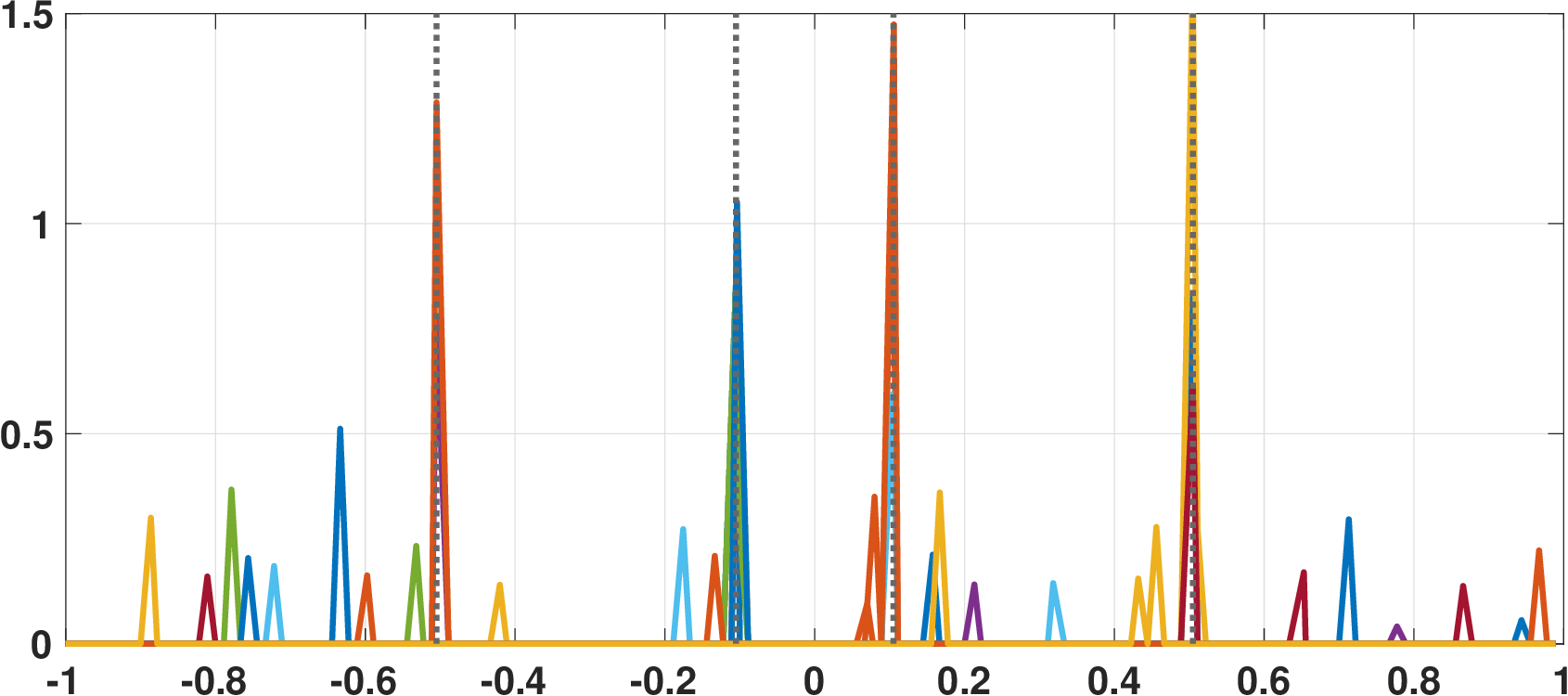}}\hspace{0em}
  \subcaptionbox{SBL, $\text{MSE}=0.2228$\label{fig.spec.on}}{\includegraphics[width=0.4\linewidth]{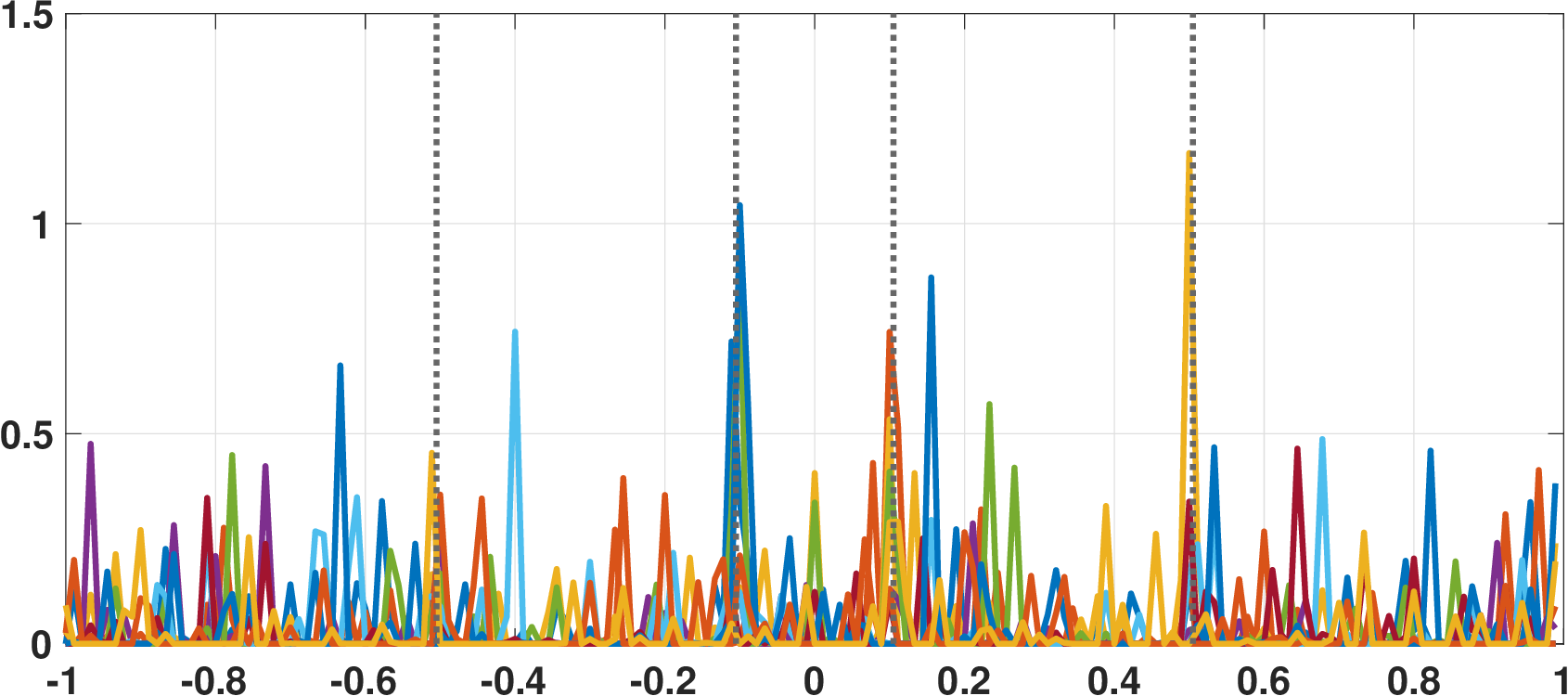}}\hspace{0em}
  \subcaptionbox{OffSBL\label{fig.wf.off}}{\includegraphics[width=0.32\linewidth]{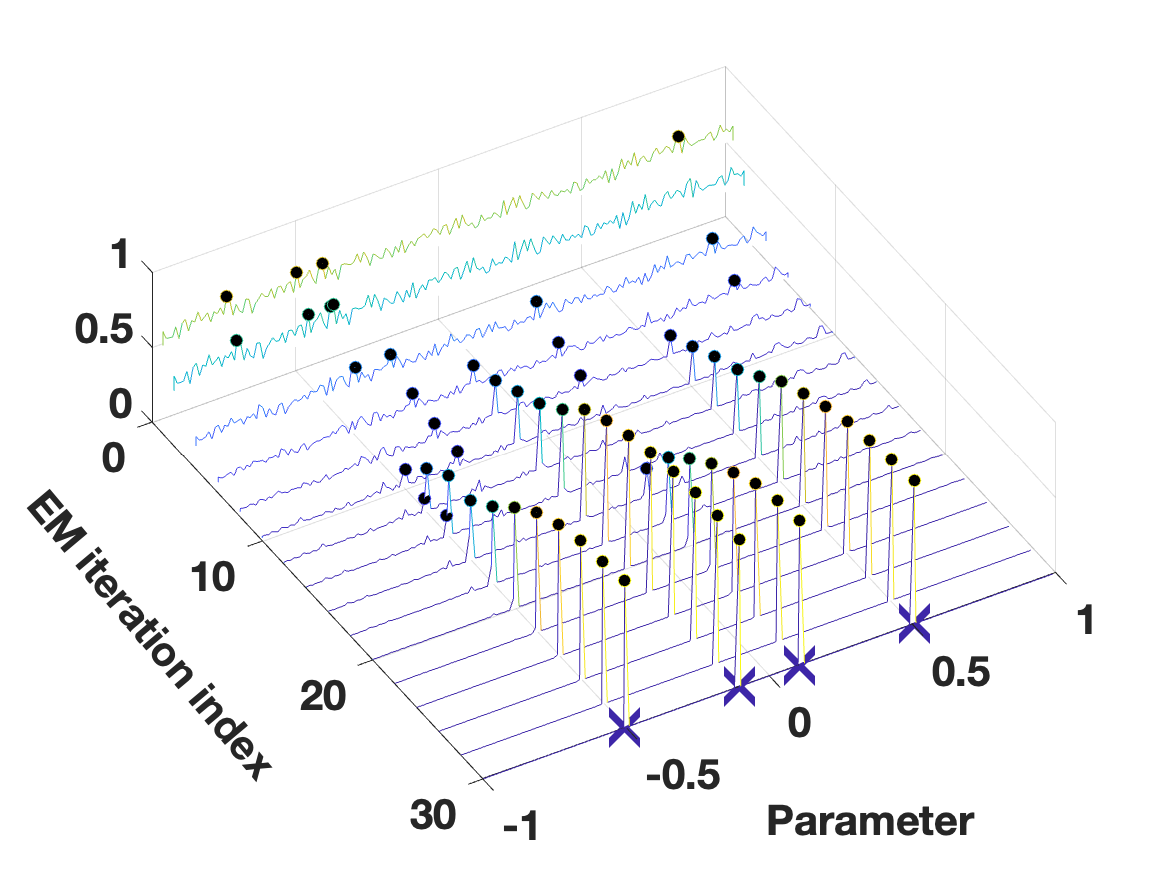}}\hspace{0em}
  \subcaptionbox{OGSBI\label{fig.wf.ogsbi}}{\includegraphics[width=0.32\linewidth]{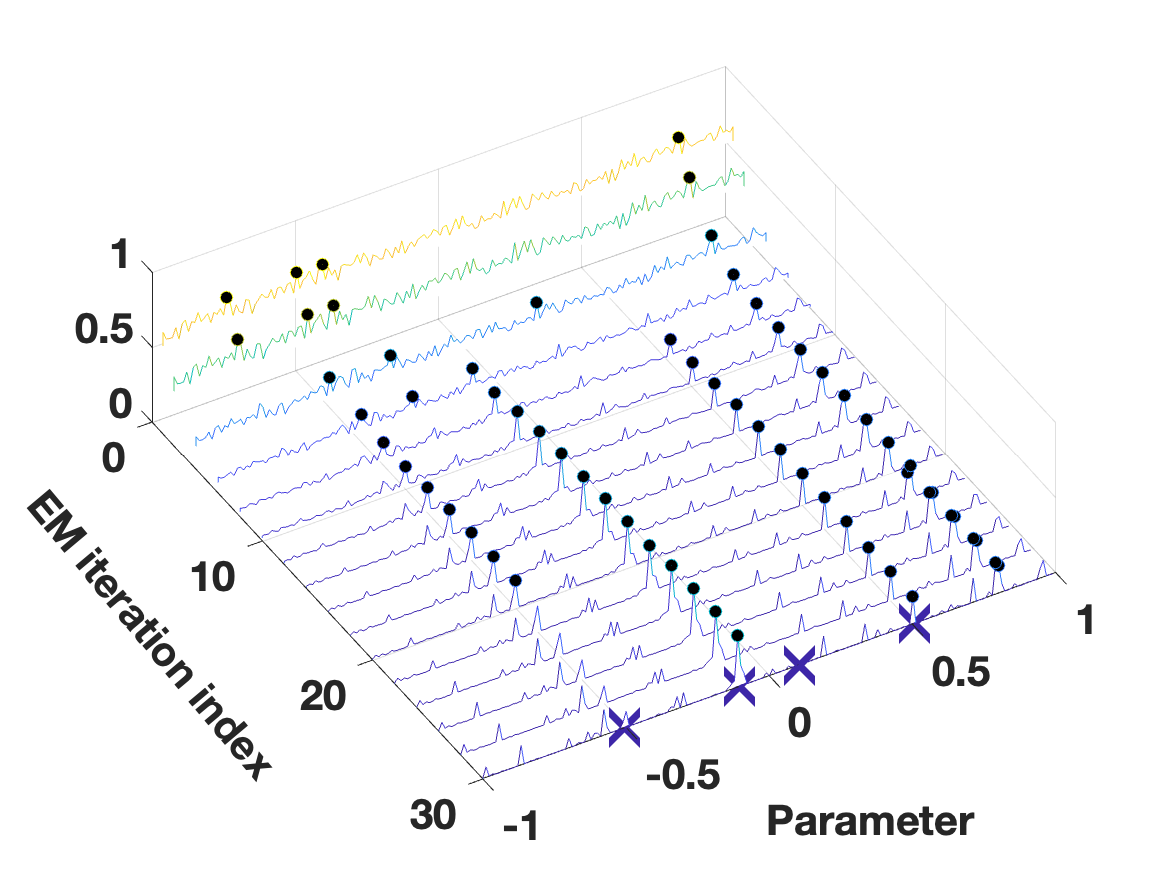}}\hspace{0em}
  \subcaptionbox{SBL\label{fig.wf.on}}{\includegraphics[width=0.32\linewidth]{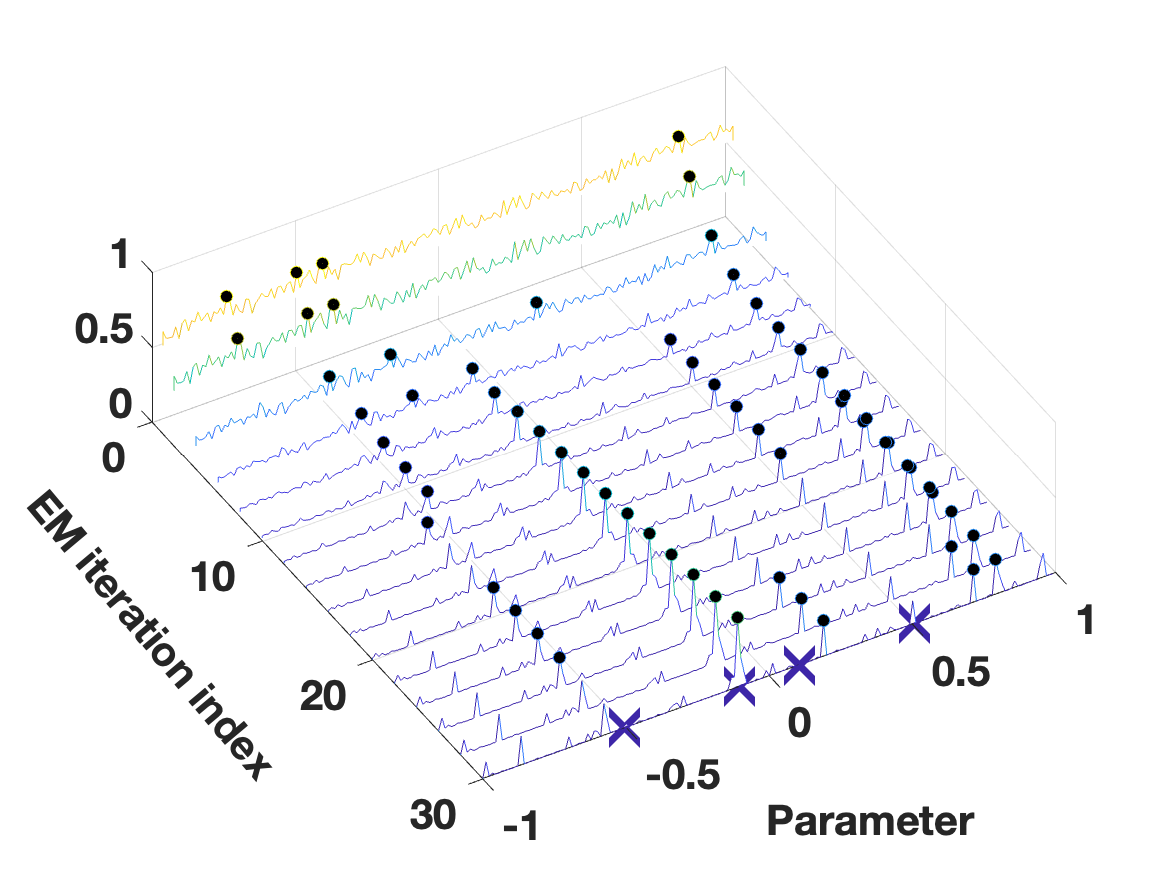}}\hspace{0em}
  \caption{Pseudospectrum $\bm \gamma$ of different algorithms in the worst-case scenario analysis. (a)-(d): pseudospectrum $\bm \gamma^{(1000)}$ of the $1000$th EM iteration. Different colors mean different independent realizations of Gaussian noise. (e)-(g): dynamic evolution of pseudospectrum $\{\bm \gamma^{(r)}\}_{1\leq r \leq 30}$ for the EM iterations of EM-based SBL algorithms. Black dot: grid points chosen to be updated. Cross: true values. Although SBL operates as an on-grid method, we still include the grid points that correspond to the top four peaks, sorely for comparison purposes.}
  \label{fig.wf_gamma}
\end{figure*}

In this section, we apply OffSBL to the unknown parameters estimation problem. The model we consider here is the case in~\eqref{eq.simplified_bem} with $i=1$, where the goal is to estimate angles $\bm \psi_{(i)}$ and coefficients $\bm x_i$. The column function is $\bm h_1(\psi)=\bm \Omega^\mathsf{T}\bm a_L(\psi)$ with $\bm \Omega \in \mathbb{C}^{L \times M}$, $L=256$, and $M$ being the number of measurements. Here, $M$ is $\{20,25,30,35,40,45,50\}$ and controls the undersampling ratio defined as $\frac{M}{N}$ with $N=180$. The matrix $\bm \Omega$ is randomly generated, whose entries take the form $e^{j\phi}$ where $\phi$ is drawn from a uniform distribution on $[0,\pi]$. We set the number of unknown parameters (angles) $S_1$ to be $\{2,4,6\}$, and the angles are drawn sequentially from a uniform distribution on $[-0.9,0.9]$ ensuring a minimal separation of $0.1$. The coefficients are drawn from $\mathcal{CN}(0,2)$. 

We use three benchmarks: (i) classical (on-grid) SBL, (ii) off-grid sparse Bayesian inference (OGSBI) using the first-order Taylor expansion~\cite{yang2012off}, and (iii) light-weight sequential SBL (LWSSBL), a state-of-the-art off-grid method using marginal likelihood optimization~\cite{pote2023light}. In our simulations, we do not provide the number of unknowns $S_1$ to all algorithms, but only an upper bound $\bar{S}$ of the number of unknowns. In practice, we only solve the problem~\eqref{eq.single_op} for the grid points corresponding to $\bar{S}$ largest peaks of the hyperparameter $\bm \gamma$ instead of all grid points. Our OffSBL algorithm estimates the noise variance using~\eqref{eq.noise_est}. SBL and OGSBI can also estimate the noise variance, while noise variance estimation for LWSSBL is not discussed \cite{pote2023light}. So for LWSSBL, we set noise variance estimate as $0.1\|\bar{\bm y}\|_2^2/M$ as in \cite{pote2023light}. We choose $\text{SNR}=10\log_{10}\mathbb{E}\{\|\bar{\bm H}_{\bar{\bm \psi}} \bar{\bm x}\|_2^2/\|\bar{\bm n}\|_2^2\}$ as $\{5,10,15,20,25,30\}$ in dB. To evaluate the performance of all schemes, we compare the mean squared error (MSE) and the success probability, where
\begin{equation*}
%    \begin{aligned}
        \text{MSE} = \mathbb{E}\left\{\frac{1}{S_1}\sum_{s=1}^{S_1}(\bar{\psi}_s - \hat{\psi}_s)^2\right\},%\\ \text{Probability} &= \frac{\# %\text{successful recovery}}{T},\notag
%    \end{aligned}
\end{equation*}
with expectation taken over $10^3$ independent trials. Here, $\bar{\psi}_s$ and $\hat{\psi}_s$ denote the true value and the estimation, respectively. The success probability is defined as the fraction of trials with MSE
smaller than~$10^{-6}$. % The results are summarized in Fig.~\ref{fig.unknown_est}.

We compare MSE and recovery probability for different SNRs and undersampling ratios in Fig.~\ref{fig.unknown_est}. We see that higher SNR and more measurements facilitate all algorithms, except OGSBI in Figs.~\ref{fig3.b} and~\ref{fig3.d}. This is because OGSBI cannot effectively optimize the grid points in this setting, as we show later in Fig.~\ref{fig.wf_gamma}. Among all candidates, our OffSBL has the best performance in both MSE and recovery probability in most cases. An exception is $\text{SNR} = 5\text{dB}$ where LWSSBL has a higher success probability. However, LWSSBL often produces larger errors when it fails, making OffSBL superior in MSE.

In Fig.~\ref{fig.wf_gamma}, we present a worst-case scenario study. We set $M = 60$ measurements and $\text{SNR} = 30\text{dB}$. The unknowns are $[-0.5050,-0.1050,0.1050,0.5050]$, shown as vertical dashed lines in Figs.~\ref{fig.spec.off}-\ref{fig.spec.lwssbl}. These values are intentionally selected to be midway between two grids to create a challenging case for grid optimization. All the coefficients are set to one. We provide the number of unknowns to all the algorithms but not the noise variance. We perform $10^3$ EM iterations to facilitate the convergence of all algorithms. The input of all algorithms is the same noiseless signal but with ten independent Gaussian noise realizations. We plot the final pseudospectrum (hyperparameter $\bm \gamma^{(1000)}$) after  $10^3$ EM iterations for ten noise realizations in Figs.~\ref{fig.spec.off}-\ref{fig.spec.lwssbl} with different colors. 

Comparing the different algorithms, we note that OffSBL consistently recovers all parameters, with minimal amplitude spikes appearing in the pseudospectrum corresponding to parameters other than the true values. LWSSBL also recovers the unknowns but with a lower success probability. LWSSBL exhibits more peaks at parameters other than the true values, implying that it is more prone to being misled by incorrect columns in the dictionary due to its greedy nature. In contrast, while OffSBL takes longer to reach the final result (see Table~\ref{tab.unknown_time}), evaluating all columns rather than proceeding greedily reduces the risk of being misled by incorrect columns. 

Further, there is little difference between OGSBI and the on-grid benchmark SBL, indicating that the first-order approximation is less effective in this case. However, OGSBI has some improvement over SBL as reflected by a lower MSE. These findings also highlight that algorithms relying on on-grid SBL for rough estimates and then refining peaks are likely to fail, as on-grid SBL often doesn’t provide a reliable starting point, with peaks rarely matching the true parameters. This is likely due to the dictionary’s structure, which takes the form $\bm \Omega^{\mathsf{T}}\bm A_{L}(\bm\psi)$ for some integer $L$. When $\bm \Omega^{\mathsf{T}}\in\mathbb{C}^{M\times L}$ has fewer rows than columns $(M<L)$, the compression effect from multiplication by $\bm \Omega^{\mathsf{T}}$ can lead to information loss, creating a challenging setting for off-grid sparse recovery~\cite{guo2018doa}. However, in many applications, such as IRS channel estimation, where the value of $M$ represents the number of time slots, $M$ is typically limited. Thus, integrating grid updates into the EM iteration, as implemented in OffSBL, is essential.  

We further present the pseudospectrum $\{\bm \gamma^{(r)}\}_{1\leq r \leq 30}$ for OffSBL, SBL, and OGSBI, along with the grid points that are updated dynamically throughout the EM iteration in Figs.~\ref{fig.wf.off}-\ref{fig.wf.ogsbi}. Although SBL is an on-grid method, we still pinpoint the grids of the top four peaks. All algorithms have the same $\bm \gamma$ initially which becomes different afterward. Our OffSBL demonstrates superior optimization of grid points, identifying the correct values and amplitudes, whereas SBL and OGSBI do not reveal the true parameters. The evolution of the pseudospectrum across EM iterations highlights the effectiveness of our grid adjustment step.

\subsection{IRS-aided Wireless Channel Estimation}
\begin{figure}
\centering
  \subcaptionbox{Angle estimation performance\label{fig.mse}}{\includegraphics[width=0.49\linewidth]{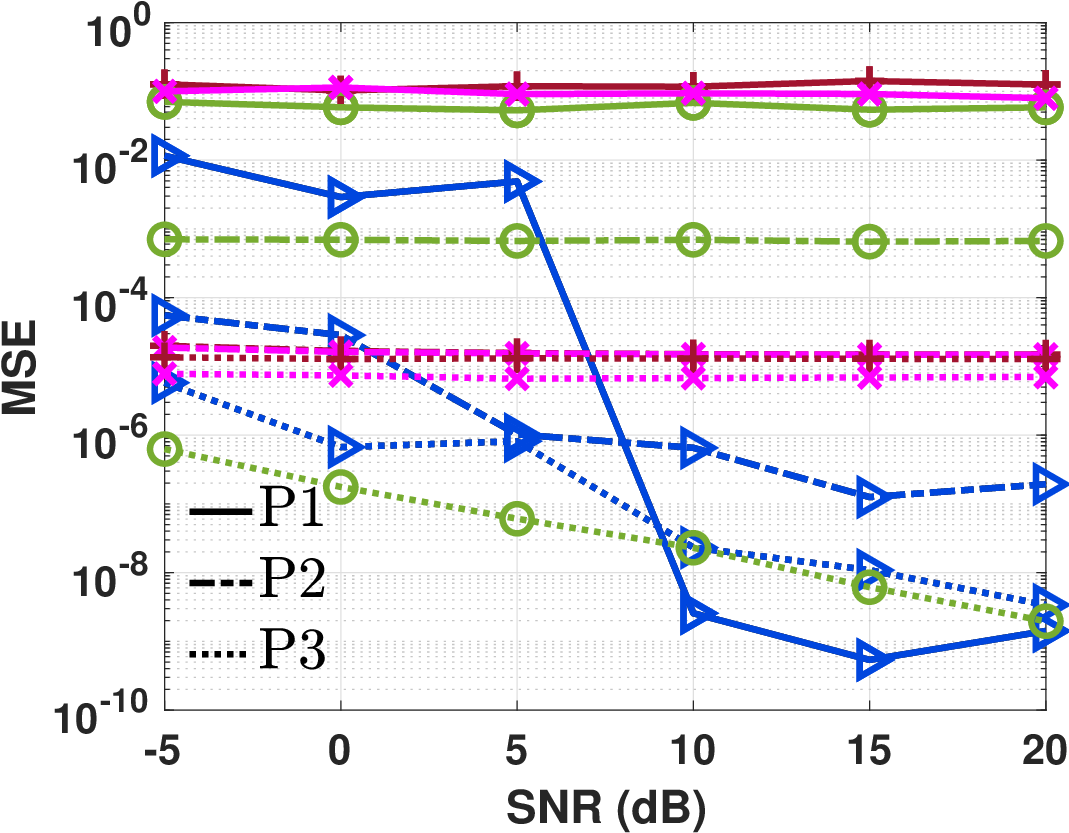}\includegraphics[width=0.49\linewidth]{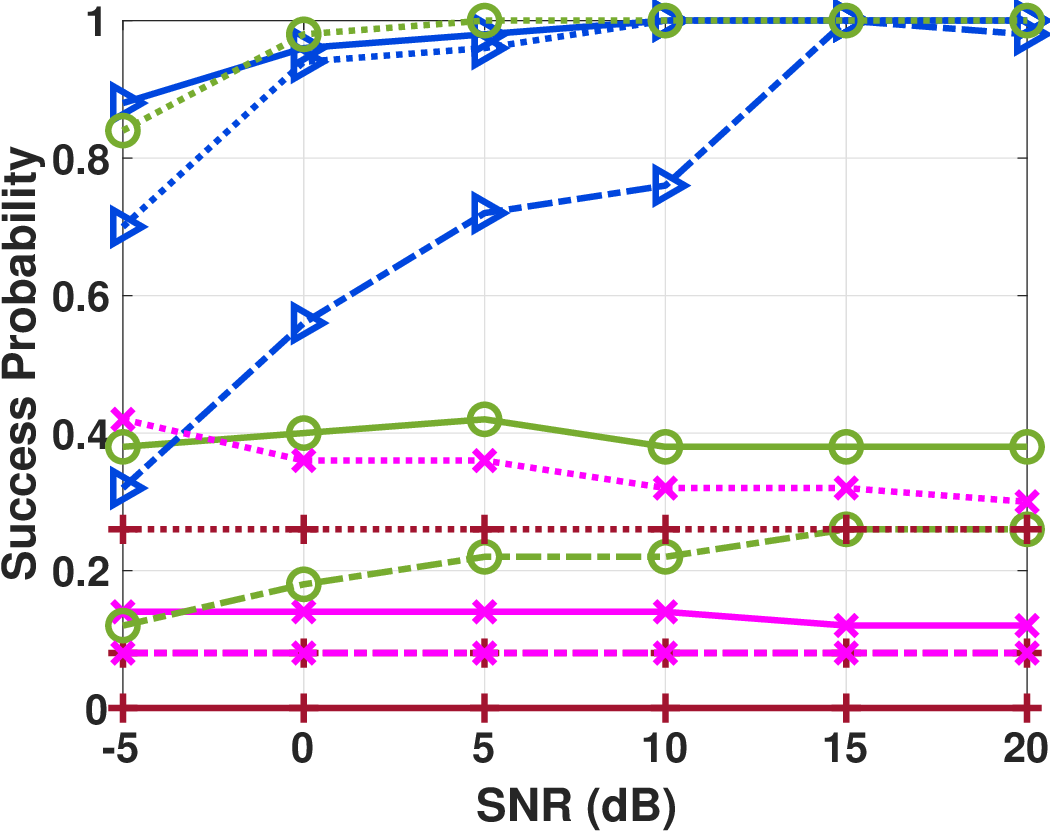}}
  % \subcaptionbox{Probability\label{fig.prob}}{\includegraphics[width=0.48\linewidth]{figure/ce_p123_prob.eps}}\hspace{0em}
  % \subcaptionbox{NMSE\label{fig.ce.a}}{\includegraphics[width=0.49\linewidth]{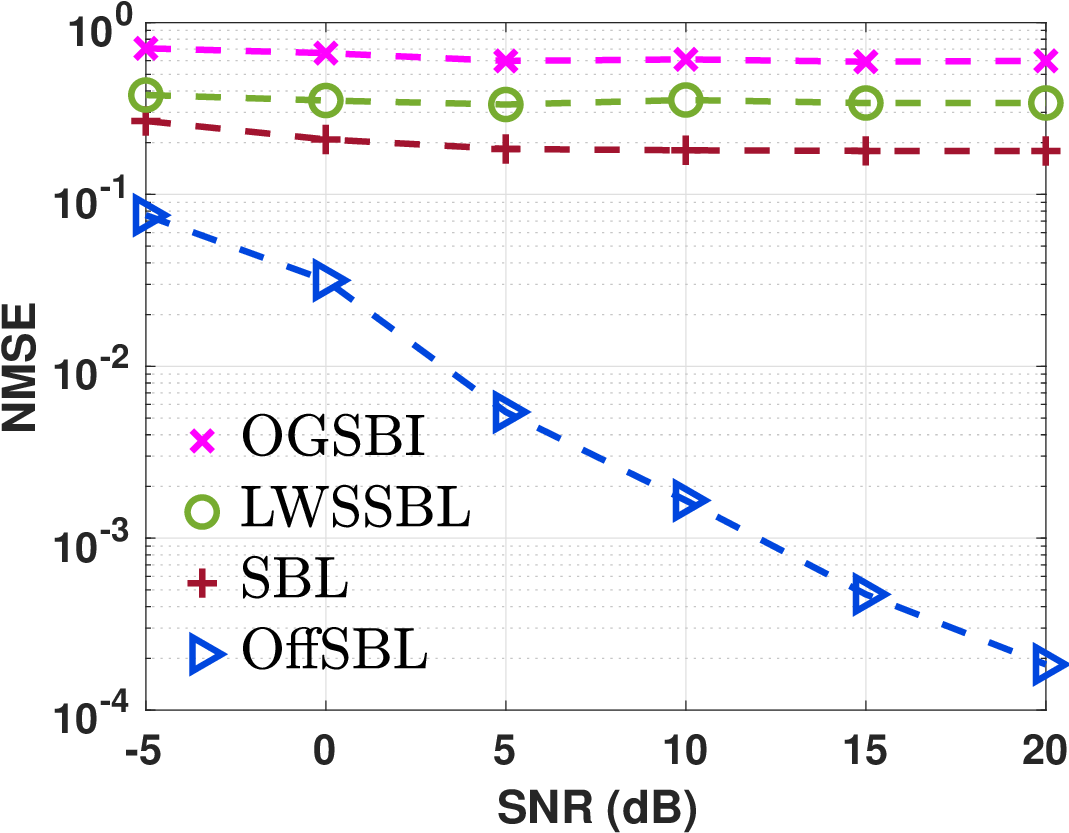}}\hspace{0em}%
  \subcaptionbox{Channel estimation performance\label{fig.ce.b}}{\includegraphics[width=0.49\linewidth]{figure/ce_nmse.eps}\includegraphics[width=0.49\linewidth]{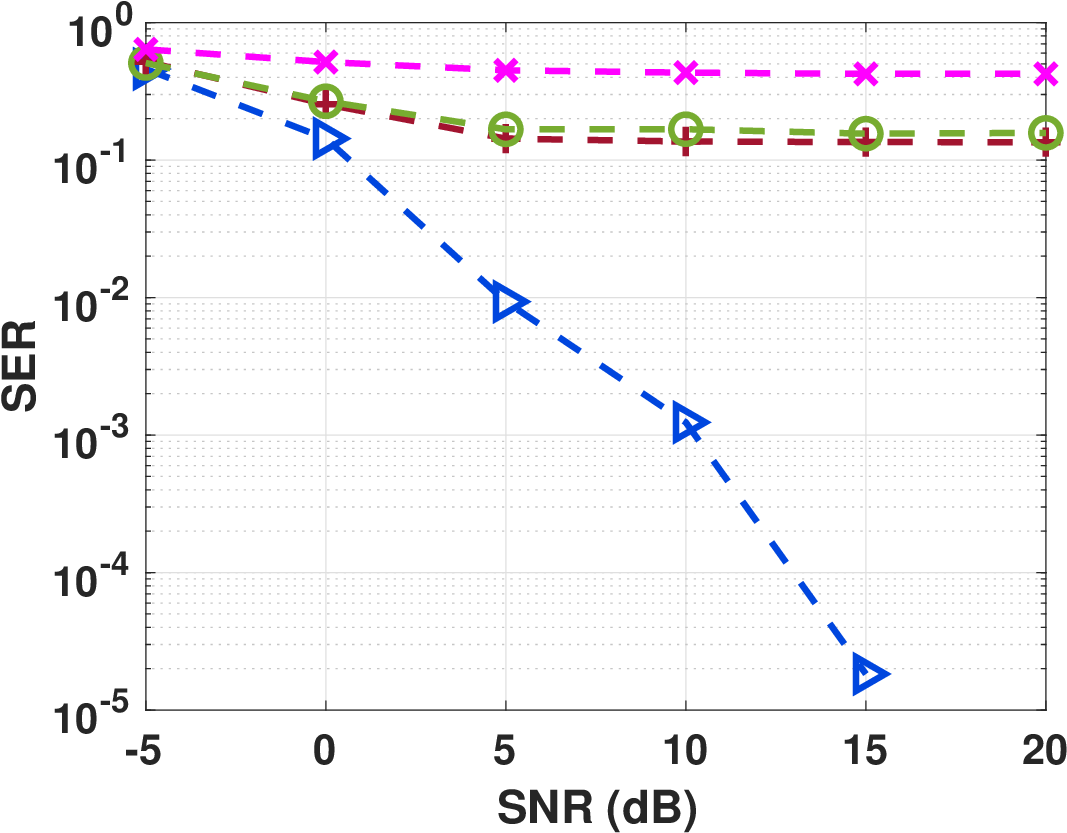}}
  % \hspace{0em}  
  \caption{IRS-aided channel estimation. (a): MSE and recovery success probability for angle estimation in P1, P2, and P3. (c): NMSE of IRS-aided channel estimation and SER of different algorithms as functions of SNR.}
  \label{fig.ce_recovery}
\end{figure}
We focus on the IRS-aided channel estimation problem, as described in Sec.~\ref{sec.channelmodel}. Here, we first use the decomposition step and then turn to the BEM and apply OffSBL separately for $i=1,2,3$ in~\eqref{eq.simplified_bem}. Thus, the channel estimation scheme can be viewed as a collective evaluation of the decomposition and OffSBL. For benchmarking, we apply the same decomposition step, and then solve~\eqref{eq.simplified_bem} using the same algorithms as in Sec. \ref{sec.simu_offsbl}. For simplicity, we denote the problem~\eqref{eq.simplified_bem} with $i=1,2,3$ as P1, P2, and P3, respectively.

For IRS-aided channel estimation, we use $R=16$ BS antennas, $T=6$ MS antennas, $L=256$ IRS elements. We consider only one path between the BS and IRS\cite{dampahalage2022supervised,he2020channel,wan2021terahertz,yashvanth2022cascaded}, as the IRS is typically mounted in locations with fewer obstacles\cite{zheng2022survey,liu2021reconfigurable}, and the line-of-sight path is generally much stronger than the other paths. Therefore, we take $P_{\mathrm{BS}} = 1$ and $P_{\mathrm{MS}} = 3$. The IRS configuration entries $\{\bm \omega_k\}_{k=1}^{K_{\mathrm{I}}}$ are $1/\sqrt{L}e^{j\phi}$ where $\phi$ is drawn uniformly randomly from $[0,\pi]$. with $K_{\mathrm{I}}=40$. We send $K_{\mathrm{P}}=20$ pilot signals for each IRS configuration.
%We note that compared to our previous work \cite{he2022structure}, here we send more pilot signals. This is because we deal with an off-grid problem while \cite{he2022structure} is an on-grid problem. 
For our OffSBL algorithm, the dictionaries in P1, P2, and P3 are constructed by $N_1 = 180$, $N_2 = 50$, and $N_3 = 50$, respectively. For the other algorithms, dictionaries are constructed using $N_1 = 180$, $N_2 = 150$, and $N_3 = 150$ grid points. The channel gains $\beta_{{\mathrm{BS}}}$ and $\{\beta_{{\mathrm{MS}},p}\}_{p=1}^{P_{\mathrm{MS}}}$ in \eqref{eq.channelmodel1} and \eqref{eq.channelmodel2} are drawn from the standard complex Gaussian distribution~\cite{lin2021channel}. We randomly draw $\alpha_{\mathrm{MS}}$, $\{\phi_{\mathrm{MS},p}\}_{p=1}^{P_{\mathrm{MS}}}$, $\phi_{\mathrm{BS}}$, and $\alpha_{\mathrm{BS}}$ from uniform distribution in $[0.3,0.5]$, $[-0.2,0.2]$, $[0.3,0.5]$, and $[0,0.5]$, respectively. We also assume that the angles, after being spread, are separated by at least $0.07$. We opt for SNR $\{-5,0,5,10,15,20\}$ in dB. Along with MSE and success probability of the angle (parameter) estimation, we also use NMSE and symbol error rate (SER) to evaluate the channel estimation performance, where NMSE is 
\begin{equation*}
    \frac{1}{K_{\mathrm{I}}}\sum_{k = 1}^{K_{\mathrm{I}}}\!\!\frac{\|\bm H_\mathrm{BS} \diag (\bm \omega_{k}) \bm H_\mathrm{MS} - \tilde{\bm H}_\mathrm{BS} \diag (\bm \omega_{k}) \tilde{\bm H}_\mathrm{MS}\|_\mathrm{F}^2}{\|\bm H_\mathrm{BS} \diag (\bm \omega_{k}) \bm H_\mathrm{MS}\|_\mathrm{F}^2}
\end{equation*}
with $ \tilde{\bm H}_\mathrm{BS} \diag (\bm \omega_{k}) \tilde{\bm H}_\mathrm{MS}$ being the channel estimate. We compute SER using $10^6$ $16$-QAM symbols decoded using the estimated channel.
% The results in Fig.~\ref{fig.ce_recovery}.% are averaged in fifty trials.

We first examine the angle estimation results in Fig.~\ref{fig.ce_recovery}. It can be seen that our OffSBL can achieve the best performance in solving P1 and P2 except for the low SNR case for P2. P3 reduces to the normal DoA estimation problem, where LWSSBL exhibits superior recovery ability. However, at higher SNRs, our algorithm is able to achieve comparable performance. As evident from the NMSE and SER plots, OffSBL consistently recovers the true angles and accurately retrieves the coefficients, leading to the best NMSE and SER. Although other algorithms can perform well in solving P2 and P3, the significant recovery errors in P1 affect the overall accuracy of channel retrieval.

\section{Conclusion}

We addressed the joint estimation of unknown parameters and coefficients from Kronecker-structured measurements, focusing on IRS-aided wireless channel estimation. Leveraging the Kronecker structure, we decomposed the problem into smaller independent subproblems. Each subproblem was solved with EM-based SBL integrated with a novel grid optimization method to reduce grid mismatch. We provided a theoretical analysis of the error bound for the decomposition step and established the algorithm's convergence. Our decomposition step also reduces the noise level in the measurements, which was also analyzed theoretically. Numerical results showed that the decomposition step reduces complexity, while the grid optimization improves accuracy. Future work 
can consider analyzing the resolution of the OffSBL method and extending our framework to the recovery of sparse tensors with ranks greater than one.
%could refine the OffSBL method and strengthen the theoretical guarantees of the algorithm.

% \section*{Acknowledgments}
% This should be a simple paragraph before the References to thank those individuals and institutions who have supported your work on this article.

\appendices

\section{Proof of Theorem~\ref{thm.angle_denoise}}\label{appe.angle_and_denoising}
We define $\bm P_{\bm U}$ as the projection matrix onto the column space of a given matrix $\bm U$ and $\bm U_\perp$ as the projection onto its orthogonal subspace. Also, $\|\cdot\|$ is the matrix spectral norm. We need the below lemma for the proof.
\begin{lemma}\label{lm.svd_deco}
    \cite[Supplement Sec. 1.2]{cai2018rate} Suppose $\bm X \in \mathbb{R}^{p_1 \times p_2}$ is a rank-$r$ matrix and $\bm Y = \bm X + \bm Z$ where the entries of $\bm Z \in \mathbb{R}^{p_1 \times p_2}$ follow a zero mean Gaussian distribution with unit variance. We denote $\bm V,\hat{\bm V} \in \mathbb{R}^{p_2 \times r}$ as the matrix of the right singular vectors of $\bm X$ and the matrix of the top $r$ right singular vectors of $\bm Y$, respectively. Suppose the $r$th right singular value $\sigma_r^2$ of  $\bm X$ satisfies $\sigma_r^2 \geq C_{\mathrm{gap}} \sigma^2( \sqrt{p_1 p_2}+ p_2 )$ for some large constant $C_{\mathrm{gap}}>0$. Then, for all $x\geq 0$, there exist constants $C,c>0$ such that
    \begin{multline*}
        \mathbb{P} \left\{ \| \bm P_{\bm Y \bm V} \bm Y \bm V_\perp \| \leq x \right\} \geq 1- C \exp \left\{ -c (\sigma_r^2 + p_1) \right\}
        \\- C \exp \left\{ C p_2 - c \min \left( x^2, x \sqrt{\sigma_r^2 + p_1} \right) \right\},
    \end{multline*}
    and with probability exceeding $1-C \exp \left\{ -c \sigma_r^4/(\sigma_r^2 + p_1) \right\}$, 
    \begin{equation*}
        \| \sin \Theta(\hat{\bm V}, \bm V) \|^2 \leq C (\sigma_r^2 + p_1) \sigma_r^{-4} \| \bm P_{\bm Y \bm V} \bm Y \bm V_\perp \|^2,
    \end{equation*}
    Here, $\Theta(\hat{\bm V}, \bm V)=\diag(\arccos(\sigma_1),\cdots,\arccos(\sigma_r))$ where $\sigma_1\geq\cdots\geq\sigma_r\geq 0$ are the singular values of $\bm V^\mathsf{T}\hat{\bm V}$. 
\end{lemma}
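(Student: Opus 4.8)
The plan is to treat the lemma's two displays separately: the first is a probabilistic control of the cross term $\|\bm P_{\bm Y\bm V}\bm Y\bm V_\perp\|$, and the second is an essentially deterministic, resolvent/Wedin-type $\sin\Theta$ inequality that holds on a good event and bounds the right-subspace error in terms of that cross term. Combining them yields the statement. The structural key I would extract first is that right-projection onto $\mathrm{Col}(\bm V)^{\perp}$ annihilates the signal: since the rows of the rank-$r$ matrix $\bm X$ are spanned by $\bm V$, we have $\bm X\bm V_\perp=\bm 0$, so $\bm Y\bm V_\perp=\bm Z\bm V_\perp$ and
\[
\bm P_{\bm Y\bm V}\,\bm Y\bm V_\perp \;=\; \bm P_{\bm Y\bm V}\,\bm Z\bm V_\perp .
\]
Writing $\bm Z=(\bm Z\bm V)\bm V^{\mathsf T}+\bm Z\bm V_\perp$ splits the Gaussian noise into two blocks whose entries are uncorrelated (because $\bm V^{\mathsf T}\bm V_\perp=\bm 0$) and hence, being jointly Gaussian, independent. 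In particular $\mathrm{Col}(\bm Y\bm V)$, a function of $\bm Z\bm V$ alone, is independent of $\bm Z\bm V_\perp$; this independence is what renders the cross term tractable and is the first fact I would verify.

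For the tail bound I would condition on $\bm Z\bm V$, so $\bm P_{\bm Y\bm V}$ becomes a fixed rank-$r$ projection $\bm Q\bm Q^{\mathsf T}$ with $\bm Q\in\mathbb R^{p_1\times r}$ orthonormal and independent of $\bm Z\bm V_\perp$, whence $\|\bm P_{\bm Y\bm V}\bm Z\bm V_\perp\|=\|\bm Q^{\mathsf T}\bm Z\bm V_\perp\|$ is the spectral norm of an $r\times(p_2-r)$ matrix with i.i.d.\ $\mathcal N(0,\sigma^2)$ entries. The spectral norm of a Gaussian matrix is a $\sigma$-Lipschitz function of its entries, so Borell--TIS / Gaussian concentration controls its deviations; an $\varepsilon$-net over the right sphere in $\mathbb R^{p_2-r}$ supplies the $e^{Cp_2}$ metric-entropy factor, and splitting the bilinear form into a variance part and an operator-norm part produces the two-regime exponent $\min(x^2,x\sqrt{\sigma_r^2+p_1})$, with $\sqrt{\sigma_r^2+p_1}$ the effective scale of the observed left factor $\bm Y\bm V$. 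The separate good-event — that $\bm Y\bm V$ is well conditioned and $\mathrm{Col}(\bm Y\bm V)$ sits close to $\mathrm{Col}(\bm X\bm V)$, which is exactly where the signal-strength hypothesis $\sigma_r^2\ge C_{\mathrm{gap}}\sigma^2(\sqrt{p_1 p_2}+p_2)$ is used — contributes the residual $e^{-c(\sigma_r^2+p_1)}$ term.

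For the $\sin\Theta$ inequality I would use the Wedin/Cai--Zhang representation of the right-subspace error. A direct Davis--Kahan estimate is routed through $\|(\bm Y^{\mathsf T}\bm Y-\bm X^{\mathsf T}\bm X)\bm V\|$, which is fatally large because of the $\bm Z^{\mathsf T}\bm Z$ contribution of order $p_1$; the whole point of the refinement is to instead expand $\bm P_{\hat{\bm V}}-\bm P_{\bm V}$ through the resolvent of $\bm Y^{\mathsf T}\bm Y$ and show, by a first-order/fixed-point (leave-one-out–style) contraction, that the error is governed by the off-diagonal block $\bm P_{\bm Y\bm V}\bm Z\bm V_\perp$ rather than by the full perturbation. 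The eigengap of $\bm X^{\mathsf T}\bm X$ contributes $\sigma_r^{2}$ in the denominator, while the numerator picks up the factor $\sqrt{\sigma_r^2+p_1}$ — the first-order scale $\sigma_r$ inflated by the $\sqrt{p_1}$ noise floor of $\bm Y\bm V$ that drives the second-order corrections — giving, on a good event of probability $1-C\exp\{-c\sigma_r^4/(\sigma_r^2+p_1)\}$,
\[
\|\sin\Theta(\hat{\bm V},\bm V)\|\;\lesssim\;\frac{\sqrt{\sigma_r^2+p_1}}{\sigma_r^2}\,\bigl\|\bm P_{\bm Y\bm V}\bm Z\bm V_\perp\bigr\|,
\]
which squares to the stated $(\sigma_r^2+p_1)\sigma_r^{-4}\|\bm P_{\bm Y\bm V}\bm Y\bm V_\perp\|^2$. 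I expect this last conversion to be the main obstacle: turning the crude Davis--Kahan bound into the rate-optimal cross-term estimate with the correct $(\sigma_r^2+p_1)\sigma_r^{-4}$ prefactor is exactly where the Cai--Zhang perturbation machinery does its work, whereas the Gaussian concentration of the first part is comparatively routine once the signal-annihilation and independence reduction are in place.
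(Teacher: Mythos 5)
A point of comparison that matters for this review: the paper does not prove this lemma at all. It is imported verbatim, with the citation to the supplement of Cai and Zhang's rate-optimal perturbation paper \cite[Supplement Sec.~1.2]{cai2018rate}, and is used in Appendix~A purely as a black box to prove Theorem~\ref{thm.angle_denoise}. So there is no in-paper proof to match your argument against; the relevant benchmark is the cited source, whose strategy your sketch does reconstruct correctly at the structural level. You identify the two genuinely load-bearing ideas: $(i)$ the annihilation $\bm X\bm V_\perp=\bm 0$ (the row space of the rank-$r$ matrix $\bm X$ is $\mathrm{span}(\bm V)$), so $\bm Y\bm V_\perp=\bm Z\bm V_\perp$, combined with the independence of the jointly Gaussian blocks $\bm Z\bm V$ and $\bm Z\bm V_\perp$ (valid since $\bm V^\mathsf{T}\bm V_\perp=\bm 0$), which makes $\bm P_{\bm Y\bm V}$ --- a function of $\bm Z\bm V$ alone --- independent of $\bm Z\bm V_\perp$ and reduces the cross term, conditionally, to the spectral norm of an $r\times(p_2-r)$ i.i.d.\ Gaussian matrix; and $(ii)$ the observation that a direct Davis--Kahan/Wedin bound through $\bm Y^\mathsf{T}\bm Y-\bm X^\mathsf{T}\bm X$ is ruined by the $\bm Z^\mathsf{T}\bm Z$ contribution of order $p_1$, so the right-subspace error must instead be routed through the cross term $\bm P_{\bm Y\bm V}\bm Z\bm V_\perp$. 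These are exactly the mechanisms of the cited proof.

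As a standalone proof, however, your proposal has a genuine gap: the two quantitative conclusions --- the two-regime exponent $\min\bigl(x^2,\,x\sqrt{\sigma_r^2+p_1}\bigr)$ in the tail bound, and the prefactor $(\sigma_r^2+p_1)\sigma_r^{-4}$ in the $\sin\Theta$ inequality --- are asserted via plausibility arguments (``variance part versus operator-norm part,'' ``first-order/fixed-point contraction'') but never derived, and you concede that this conversion is the main obstacle. That obstacle is essentially the entire content of the lemma: the elementary reductions in $(i)$ yield only a pure sub-Gaussian tail $e^{Cp_2-cx^2}$ for the conditioned cross term and by themselves explain neither the $x\sqrt{\sigma_r^2+p_1}$ regime nor the residual probability $e^{-c(\sigma_r^2+p_1)}$ (in the source these come from lower-bounding the smallest singular value of $\bm Y\bm V$, whose natural scale is $\sqrt{\sigma_r^2+p_1}$, on a separate good event where the signal-strength hypothesis $\sigma_r^2\geq C_{\mathrm{gap}}(\sqrt{p_1p_2}+p_2)$ enters); likewise the upgrade from Wedin's eigengap rate to the inflated $(\sigma_r^2+p_1)\sigma_r^{-4}$ rate is the technical heart of the Cai--Zhang machinery and is precisely what you defer. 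Since the paper itself treats the lemma as citable prior work, deferring is defensible --- but the honest form of your write-up is then ``this is \cite[Supplement Sec.~1.2]{cai2018rate}, whose proof proceeds by $(i)$--$(ii)$,'' not a claimed proof. One cosmetic remark: the $\sigma^2$ in the gap condition is vestigial, since the noise is assumed to have unit variance --- an inconsistency inherited from the paper's transcription of the source; your sketch silently sets $\sigma=1$, which is the correct reading.
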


We prove Theorem~\ref{thm.angle_denoise} for $i=1$ using Lemma~\ref{lm.svd_deco}, and $i=2,\cdots,I$ follow similarly. Also, we consider the decomposition of $\bar{\bm y}/\sigma_\mathsf{t}$ instead of $\bar{\bm y}$. This scaling does not alter the subspaces obtained after decomposition but ensures that the noise entries follow a zero-mean, unit-variance Gaussian distribution, consistent with Lemma~\ref{lm.svd_deco}. 

For $i=1$, the true and estimated subspaces are spanned by $\bm y_1/\Vert\bm y_1\Vert$ and $\hat{\bm y}_1/\Vert\hat{\bm y}_1\Vert$, respectively. The first mode matricization of the tensor $\mathcal{\bm Y}$, as defined in~\eqref{eq.matricization}, is $\bm Y_{(1)}$. Setting $r=1$, $p_1=\bar{M}/M_1$ and $p_2=M_1$, and consequently, $\sigma_r = \lambda$ in Lemma~\ref{lm.svd_deco}, we derive
\begin{multline}
        \sin^2\vartheta_1= \|\sin\Theta (\hat{\bm y}_1/\Vert\hat{\bm y}_1\Vert,\bm y_1/\Vert\bm y_1\Vert)\| 
        \\\leq C(\lambda^2+\bar{M}/M_1)\lambda^{-4}\|\bm P_{\bm Y_{(1)}^\mathsf{T}\bm y_1} \bm Y_{(1)}^\mathsf{T} \bm y_{1_\perp}\|^2,\label{eq.upp_prob_1}
\end{multline}
with probability at least $1-C e^{-c\frac{\lambda^4}{\lambda^2+p_1}}$. 

Further, we bound $\|\bm P_{\bm Y_{(1)}^\mathsf{T}\bm y_1} \bm Y_{(1)}^\mathsf{T} \bm y_{1_\perp}\|^2$ using Lemma~\ref{lm.svd_deco} by setting $x=\sqrt{\tilde{C} M_1}$ where $\tilde{C}<C/c<C_{\mathrm{gap}}$,
\begin{multline}\label{eq.boundP}
        \mathbb{P} \left\{ \| \bm P_{\bm Y \bm V} \bm Y \bm V_\perp \| \leq \tilde{C}\sqrt{M_1} \right\} \geq 1- C \exp \left\{ -c (\sigma_r^2 + p_1) \right\}
        \\- C \exp \left\{ C M_1 - c \min \left( \tilde{C} M_1,  \sqrt{\tilde{C} M_1(\sigma_r^2 + p_1)} \right) \right\},
    \end{multline}
     Then, we simplify the right-hand side of \eqref{eq.boundP} using 
\begin{equation}\label{eq.boundlambda}
    \sigma_r^2+p_1\geq \sigma_r^2= \lambda^2 \geq C_{\mathrm{gap}} ( \sqrt{p_1M_1}+ M_1)\geq C_{\mathrm{gap}} M_1. 
\end{equation}
Then, \eqref{eq.boundP} is simplified as 
\begin{multline*}\label{eq.boundP}
        \mathbb{P} \left\{ \| \bm P_{\bm Y \bm V} \bm Y \bm V_\perp \| \leq \tilde{C}\sqrt{M_1} \right\} \geq 1-C \exp \left\{ -c C_{\mathrm{gap}}  M_1  \right\}
        \\-C \exp \left\{ \left[C - c \min ( \tilde{C}, \sqrt{\tilde{C}C_{\mathrm{gap}} })\right]M_1 \right\} \geq 1- C e^{-\tilde{c} M_1},
    \end{multline*}
% The second term on the right-hand side of \eqref{eq.boundP} simplifies as 
% \begin{equation}
% C\exp \left\{ -c (\sigma_r^2 + p_1) \right\}
% \leq C \exp \left\{ -c C_{\mathrm{gap}}  M_1  \right\}.\label{eq.prob_bound1}
% \end{equation}
% Similarly, the last term on the right-hand side of \eqref{eq.boundP} is
% % \begin{align}
% % C \exp \left\{ C M_1 - c \min \left( \tilde{C} M_1,  \sqrt{\tilde{C} M_1(\sigma_r^2 + p_1)} \right) \right\}\notag\\
% % &\hspace{-6.5cm}\leq C \exp \left\{ C M_1 - c \min \left( \tilde{C}M_1, \tilde{C}\sqrt{M_1} \sqrt{C_{\mathrm{gap}} M_1} \right) \right\}\notag\\
% % &\hspace{-6.5cm}\leq C \exp \left\{ \left[C - c \min ( \tilde{C}, \tilde{C}\sqrt{C_{\mathrm{gap}} })\right]M_1 \right\}.\label{eq.prob_bound2}
% % \end{align}
% \begin{multline}
% C \exp \left\{ C M_1 - c \min \left( \tilde{C} M_1,  \sqrt{\tilde{C} M_1(\sigma_r^2 + p_1)} \right) \right\}\\
% \leq C \exp \left\{ \left[C - c \min ( \tilde{C}, \tilde{C}\sqrt{C_{\mathrm{gap}} })\right]M_1 \right\}.\label{eq.prob_bound2}
% \end{multline}
for some constant $\tilde{c}$.
% Combining \eqref{eq.prob_bound1}and \eqref{eq.prob_bound2}, we deduce from \eqref{eq.boundP} that there exist constants $C$, $\tilde{C}$ and $\tilde{c}$ such that 
% \begin{equation}
%     \mathbb{P} \left\{ \|\bm P_{\bm Y_{(1)}^\mathsf{T}\bm y_1} \bm Y_{(1)} \bm y_{1_\perp}\| \leq \tilde{C}\sqrt{M_1} \right\} \geq 1- C e^{-\tilde{c} M_1}.
% \end{equation}
So, \eqref{eq.upp_prob_1} and the union bound implies 
\begin{equation}
    \sin^2\vartheta_1 \leq \frac{C(\lambda^2+\bar{M}/M_1)}{\lambda^{4}}\tilde{C}M_1 = \frac{C\tilde{C}M_1}{\lambda^2}+\frac{C\tilde{C}\bar{M}}{\lambda^4},\label{eq.boundsin}
    \end{equation}
with probability exceeding $1-C e^{-c\frac{\lambda^4}{\lambda^2+p_1}}-C e^{-\tilde{c} M_1}$. 

Furthermore, since $\lambda^2\geq C_{\mathrm{gap}} ( \sqrt{p_1M_1}+ M_1)$, with $\bar{C}=\min(C_{\mathrm{gap}},C_{\mathrm{gap}}^2)$, we derive
\begin{equation*}
    \lambda^2\geq \sqrt{\bar{C}p_1M_1}+ \bar{C}M_1\geq \frac{ \bar{C}M_1+\sqrt{\bar{C}^2M_1^2+4\bar{C}p_1M_1}}{2}.
\end{equation*}
Consequently, $\lambda^4-\bar{C}M_1\lambda^2-\bar{C}p_1M_1\geq0$ because $\lambda^2$ is greater than both roots of the quadratic function in $\lambda
^2$. Thus, we deduce 
$\frac{\lambda^4}{\lambda^2+p_1}\geq \bar{C}M_1 $. So, from \eqref{eq.boundsin}, we arrive at the desired result,
\begin{equation*}
    \mathbb{P}\left\{\!\sin\vartheta_1 \!\leq \!\frac{\sqrt{C\tilde{C}M_1}}{\lambda}+\frac{\sqrt{C\tilde{C}\bar{M}}}{\lambda^2}\!\right\}\geq 1-2C e^{-\max(c\bar{C},\tilde{c}) M_1}.
    \end{equation*}

\section{Proof of Theorem \ref{thm.denoising}}\label{appe.thm_denoising}

%We define $\bm P_{i}:=\hat{\bm{y}}_i \hat{\bm{y}}_i^\mathsf{T}$ as the projection matrix to $\hat{\bm{y}}_i$ and $\bm P_{i_\perp}$ as the orthogonal projection matrix. For a tensor $\mathcal{X}$, its $i$th mode product with matrix $\bm U$ is defined as $\mathcal{X} \times_i \bm U\coloneq\bm U \bm{X}_{(i)}$, where $\bm{X}_{(i)}$ is the $i$th mode matricization.
Our proof uses the following lemmas.
\begin{lemma}\label{lm.pertubed_x}
\cite[Lemma 6]{zhang2018tensor}: Suppose $\bm X,\bm Z$ are two matrices, and the projection matrix orthogonal to the subspace spanned by the leading $r$ left singular vectors of $\bm X + \bm Z$ is $\bm U$. Then, $\| \bm U \bm X \|_\mathrm{F} \leq 2 \sqrt{r} \|\bm Z \|$, where $r$ is the rank of $\bm X$.
\end{lemma}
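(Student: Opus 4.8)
The plan is to control $\|\bm U \bm X\|$ in the spectral norm first and then exploit the fact that $\bm U \bm X$ inherits the low rank of $\bm X$ to pass to the Frobenius norm. Throughout, I use that $\bm U$ is an orthogonal projection, namely onto the orthogonal complement of the span of the leading $r$ left singular vectors of $\bm X+\bm Z$, so that $\bm U=\bm U^\mathsf{H}=\bm U^2$ and $\|\bm U\|\le 1$.

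The first step is to show $\|\bm U(\bm X+\bm Z)\|\le\|\bm Z\|$. Writing the singular value decomposition $\bm X+\bm Z=\sum_i\sigma_i\bm u_i\bm v_i^\mathsf{H}$ with $\sigma_1\ge\sigma_2\ge\cdots$, the projector removed by $\bm U$ is exactly $\sum_{i=1}^r\bm u_i\bm u_i^\mathsf{H}$, so $\bm U(\bm X+\bm Z)=\sum_{i>r}\sigma_i\bm u_i\bm v_i^\mathsf{H}$ and hence $\|\bm U(\bm X+\bm Z)\|=\sigma_{r+1}(\bm X+\bm Z)$. Since $\bm X$ has rank $r$, we have $\sigma_{r+1}(\bm X)=0$, and Weyl's inequality yields $\sigma_{r+1}(\bm X+\bm Z)\le\sigma_{r+1}(\bm X)+\|\bm Z\|=\|\bm Z\|$.

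The second step decomposes $\bm U\bm X=\bm U(\bm X+\bm Z)-\bm U\bm Z$ and applies the triangle inequality together with $\|\bm U\bm Z\|\le\|\bm U\|\,\|\bm Z\|\le\|\bm Z\|$, giving $\|\bm U\bm X\|\le 2\|\bm Z\|$. Finally, since $\rank(\bm U\bm X)\le\rank(\bm X)=r$, the elementary bound $\|\bm A\|_\mathrm{F}\le\sqrt{\rank(\bm A)}\,\|\bm A\|$, obtained by summing the at most $r$ nonzero squared singular values, gives $\|\bm U\bm X\|_\mathrm{F}\le\sqrt{r}\,\|\bm U\bm X\|\le 2\sqrt{r}\,\|\bm Z\|$, which is the claimed inequality.

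The argument is entirely standard linear algebra, so there is no deep obstacle; the only point requiring care is the identification $\|\bm U(\bm X+\bm Z)\|=\sigma_{r+1}(\bm X+\bm Z)$, which hinges on $\bm U$ projecting away precisely the top-$r$ left singular directions of $\bm X+\bm Z$ and not those of $\bm X$, so that the perturbation enters only through Weyl's inequality applied to $\bm X+\bm Z$. The conversion from the spectral to the Frobenius norm via the rank bound is the other essential device and is the source of the $\sqrt{r}$ factor in the final estimate.
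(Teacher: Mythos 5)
Your proof is correct, and it is essentially the proof of the result the paper imports: the paper itself offers no argument for this lemma but cites it from \cite[Lemma 6]{zhang2018tensor}, where the same three steps appear---Weyl's inequality giving $\|\bm U(\bm X+\bm Z)\|=\sigma_{r+1}(\bm X+\bm Z)\le\sigma_{r+1}(\bm X)+\|\bm Z\|=\|\bm Z\|$ since $\rank(\bm X)=r$, the triangle inequality on $\bm U\bm X=\bm U(\bm X+\bm Z)-\bm U\bm Z$ yielding $\|\bm U\bm X\|\le 2\|\bm Z\|$, and the conversion $\|\bm U\bm X\|_\mathrm{F}\le\sqrt{\rank(\bm U\bm X)}\,\|\bm U\bm X\|\le 2\sqrt{r}\,\|\bm Z\|$. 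All steps check out, including the one you rightly flag as delicate (that $\bm U$ annihilates the top-$r$ left singular directions of $\bm X+\bm Z$, not of $\bm X$), so no gaps to report.
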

\begin{lemma}\label{lm.gaussian_spectral}
    \cite[Corollary 5.35]{vershynin2010introduction} Let $\bm Z \in \mathbb{R}^{p_1\times p_2}$ whose entries are independent Gaussian random variables with zero mean unit variance. Then, for any $x \geq 0$, the matrix satisfies $\|\bm Z\| \geq \sqrt{p_1} + \sqrt{p_2} + x$, with probability less than $2e^{-x^2/2}$.
\end{lemma}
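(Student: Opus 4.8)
The plan is to prove the stated one-sided spectral-norm tail bound by combining a bound on the expectation $\mathbb{E}\|\bm Z\|$ with Gaussian concentration of measure, so that the deviation $x$ only ever enters through the concentration step. The starting point is the variational characterization
\begin{equation*}
\|\bm Z\| = \sup_{\bm u\in\mathbb{S}^{p_1-1},\,\bm v\in\mathbb{S}^{p_2-1}} \bm u^\mathsf{T}\bm Z\bm v,
\end{equation*}
which exhibits $\|\bm Z\|$ as the supremum of a centered Gaussian process $X_{(\bm u,\bm v)}\coloneq\bm u^\mathsf{T}\bm Z\bm v$ indexed by the (compact) product of unit spheres. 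Because the index set is a compact metric space and $X$ has continuous sample paths, the supremum is attained and standard separability issues do not arise.

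First I would bound $\mathbb{E}\|\bm Z\|$ by a Gaussian comparison. Let $\bm g\sim\mathcal{N}(\bm 0,\bm I_{p_1})$ and $\bm h\sim\mathcal{N}(\bm 0,\bm I_{p_2})$ be independent, and set $Y_{(\bm u,\bm v)}\coloneq\bm g^\mathsf{T}\bm u+\bm h^\mathsf{T}\bm v$. A direct second-moment computation gives $\mathbb{E}(X_{(\bm u,\bm v)}-X_{(\bm u',\bm v')})^2 = 2-2\langle\bm u,\bm u'\rangle\langle\bm v,\bm v'\rangle$ and $\mathbb{E}(Y_{(\bm u,\bm v)}-Y_{(\bm u',\bm v')})^2 = 4-2(\langle\bm u,\bm u'\rangle+\langle\bm v,\bm v'\rangle)$, so the increment-comparison hypothesis of the Sudakov--Fernique inequality reduces to $(1-\langle\bm u,\bm u'\rangle)(1-\langle\bm v,\bm v'\rangle)\geq0$, which holds since inner products of unit vectors lie in $[-1,1]$. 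Sudakov--Fernique then yields $\mathbb{E}\|\bm Z\|\leq\mathbb{E}\sup_{\bm u,\bm v}Y_{(\bm u,\bm v)}=\mathbb{E}\|\bm g\|_2+\mathbb{E}\|\bm h\|_2$, and Jensen's inequality gives $\mathbb{E}\|\bm g\|_2\leq\sqrt{\mathbb{E}\|\bm g\|_2^2}=\sqrt{p_1}$ and likewise $\mathbb{E}\|\bm h\|_2\leq\sqrt{p_2}$. Hence $\mathbb{E}\|\bm Z\|\leq\sqrt{p_1}+\sqrt{p_2}$.

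Second I would establish concentration about this mean. Viewing $\bm Z$ as a vector of $p_1p_2$ i.i.d.\ standard Gaussians, the map $\bm Z\mapsto\|\bm Z\|$ is $1$-Lipschitz with respect to the Frobenius (Euclidean) norm, since $\bigl|\,\|\bm Z\|-\|\bm Z'\|\,\bigr|\leq\|\bm Z-\bm Z'\|\leq\|\bm Z-\bm Z'\|_\mathrm{F}$ by the triangle inequality and domination of the spectral norm by the Frobenius norm. The Gaussian concentration (Borell--TIS) inequality for $1$-Lipschitz functions then gives, for every $x\geq0$, the bound $\mathbb{P}\{\|\bm Z\|\geq\mathbb{E}\|\bm Z\|+x\}\leq e^{-x^2/2}$. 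Chaining the two pieces, the event $\{\|\bm Z\|\geq\sqrt{p_1}+\sqrt{p_2}+x\}$ is contained in $\{\|\bm Z\|\geq\mathbb{E}\|\bm Z\|+x\}$ because $\mathbb{E}\|\bm Z\|\leq\sqrt{p_1}+\sqrt{p_2}$, so $\mathbb{P}\{\|\bm Z\|\geq\sqrt{p_1}+\sqrt{p_2}+x\}\leq e^{-x^2/2}<2e^{-x^2/2}$, establishing the claim (the constant $2$ in the statement leaves comfortable slack).

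I expect the expectation bound to be the only genuine obstacle: it is where the matrix structure enters, and it requires correctly setting up the comparison Gaussian field and verifying the increment inequality above rather than merely quoting a scalar tail bound. Everything else is routine once $\mathbb{E}\|\bm Z\|\leq\sqrt{p_1}+\sqrt{p_2}$ is in hand. If one prefers to avoid the Sudakov--Fernique machinery entirely, a fully elementary alternative replaces the first two paragraphs by an $\varepsilon$-net argument: take a $1/4$-net of each sphere, use $\|\bm Z\|\leq 2\max_{\bm u,\bm v}\bm u^\mathsf{T}\bm Z\bm v$ over the nets, bound each scalar $\bm u^\mathsf{T}\bm Z\bm v\sim\mathcal{N}(0,1)$, and union-bound over the $\leq 9^{p_1+p_2}$ net points; this yields the same form of bound with worse constants that the factor $2$ in the statement again absorbs.
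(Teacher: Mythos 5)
The paper offers no proof of this lemma at all --- it is imported verbatim as \cite[Corollary 5.35]{vershynin2010introduction} --- and your argument is exactly the standard proof of that cited result: a Gaussian comparison (Sudakov--Fernique, i.e.\ Gordon's theorem) giving $\mathbb{E}\|\bm Z\|\leq \sqrt{p_1}+\sqrt{p_2}$, followed by Gaussian (Borell--TIS) concentration for the $1$-Lipschitz map $\bm Z\mapsto\|\bm Z\|$ in the Frobenius metric. Your increment computation $2-2\langle \bm u,\bm u'\rangle\langle \bm v,\bm v'\rangle \leq 4-2(\langle \bm u,\bm u'\rangle+\langle \bm v,\bm v'\rangle)$ is correct, and your observation about slack in the constant is right: in Vershynin the factor $2$ arises from union-bounding the $s_{\min}$ and $s_{\max}$ deviations, whereas the lemma as quoted needs only the upper tail. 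One caveat on your closing aside: the $\varepsilon$-net alternative does \emph{not} yield the same form of bound with the excess ``absorbed by the factor $2$'' --- the union bound over the $\leq 9^{p_1+p_2}$ net points introduces a prefactor $e^{c(p_1+p_2)}$ that must be beaten by the Gaussian exponent, so that route only proves $\mathbb{P}\{\|\bm Z\|\geq C_0(\sqrt{p_1}+\sqrt{p_2})+x\}\leq 2e^{-cx^2}$ with some $C_0>1$ and $c<1/2$; the loss lands on the leading term $\sqrt{p_1}+\sqrt{p_2}$ (and in the exponent), not in the multiplicative constant in front of the exponential, so the sharp constant-one statement genuinely requires the comparison-inequality route you took.
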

\begin{lemma}\label{lm.chi_square}
\cite[Lemma 8.1]{birge2001alternative}
Suppose $X$ satisfies the non-central $\chi^2_d(\nu)$ distribution with $d$ degrees of freedom and non-centrality parameter $\nu$.
%, i.e., $X=\sum_{i=1}^d n_i^2$, where $n_i$ follows Gaussian distribution with mean $\mu_i$ and unit variance, and $\nu=\sum_{i=1}^d \mu_i^2$. 
Then, for all $x>0$, it satisfies $X \geq (d + \nu) + 2\sqrt{(d + 2\nu)x} + 2x$ with probability less than $e^{-x}$.
%Suppose $X$ satisfies the non-central $\chi^2$ distribution with $d$ degrees of freedom and non-centrality parameter $\nu$, i.e., $X=\sum_{i=1}^d n_i^2$, where $n_i$ follows Gaussian distribution with mean $\mu_i$ and unit variance, and $\nu=\sum_{i=1}^d \mu_i^2$. Then, for all $x>0$, it is bounded as $X \geq (d + \nu) + 2\sqrt{(d + 2\nu)x} + 2x$ with probability less than $e^{-x}$.
\end{lemma}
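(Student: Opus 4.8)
The plan is to prove this as a Chernoff (Cram\'er) tail bound by recognizing $X$ as a sub-gamma random variable. First I would use the standard representation $X=\sum_{i=1}^d(Z_i+a_i)^2$, where $Z_1,\dots,Z_d$ are i.i.d.\ standard Gaussians and $\sum_{i=1}^d a_i^2=\nu$. Completing the square in each Gaussian integral and using independence gives the moment generating function in closed form,
\begin{equation*}
  \mathbb{E}\bigl[e^{sX}\bigr]=(1-2s)^{-d/2}\exp\!\Bigl(\tfrac{\nu s}{1-2s}\Bigr),\qquad 0<s<\tfrac12,
\end{equation*}
so the cumulant generating function is $\psi(s)=-\tfrac{d}{2}\log(1-2s)+\tfrac{\nu s}{1-2s}$, and in particular $\mathbb{E}[X]=\psi'(0)=d+\nu$.

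Next I would show that $X$ is sub-gamma about its mean. Centering and rearranging,
\begin{equation*}
  \psi(s)-s(d+\nu)=\tfrac{d}{2}\bigl[-\log(1-2s)-2s\bigr]+\frac{2\nu s^2}{1-2s}.
\end{equation*}
The key elementary inequality is $-\log(1-u)-u\le\tfrac{u^2}{2(1-u)}$ for $u\in[0,1)$, which I would verify by noting that both sides vanish at $u=0$ and that the derivative of their difference equals $u^2/\bigl(2(1-u)^2\bigr)\ge 0$. Applying it with $u=2s$ bounds the first bracket by $2s^2/(1-2s)$, yielding
\begin{equation*}
  \psi(s)-s(d+\nu)\le\frac{(d+2\nu)\,s^2}{1-2s},\qquad 0<s<\tfrac12,
\end{equation*}
i.e.\ $X-\mathbb{E}[X]$ is sub-gamma with variance factor $v=2(d+2\nu)$ and scale $c=2$, since the right-hand side equals $vs^2/\bigl(2(1-cs)\bigr)$.

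The final step is the Chernoff optimization. For any $r>0$, Markov's inequality applied to $e^{sX}$ gives
\begin{equation*}
  \mathbb{P}\{X\ge(d+\nu)+r\}\le\exp\!\Bigl(-\sup_{0<s<1/2}\bigl[\,sr-\tfrac{(d+2\nu)s^2}{1-2s}\,\bigr]\Bigr).
\end{equation*}
I would then invoke the standard sub-gamma computation: the Legendre transform of $s\mapsto vs^2/\bigl(2(1-cs)\bigr)$ equals $\tfrac{v}{c^2}h(cr/v)$ with $h(t)=1+t-\sqrt{1+2t}$, and solving $\tfrac{v}{c^2}h(cr/v)=x$ yields $r=\sqrt{2vx}+cx$. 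Hence for $r\ge\sqrt{2vx}+cx$ the supremum is at least $x$. Substituting $v=2(d+2\nu)$ and $c=2$ gives $\sqrt{2vx}+cx=2\sqrt{(d+2\nu)x}+2x$, which is precisely the claimed deviation, and the tail bound $e^{-x}$ follows.

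The routine parts are the MGF computation and the logarithmic inequality; the main obstacle is the last step, where the constants in the Legendre transform (equivalently, the location of the optimizing $s$) must be tracked carefully so as to produce the clean closed form $2\sqrt{(d+2\nu)x}+2x$ rather than a looser sub-exponential bound. Concretely, I would either substitute the candidate $r=\sqrt{2vx}+cx$ and exhibit a suitable near-optimal $s\in(0,1/c)$ to lower-bound the supremum by $x$ directly, or, once the sub-gamma MGF bound is in hand, cite the general sub-gamma tail lemma of Boucheron--Lugosi--Massart to conclude.
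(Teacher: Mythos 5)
Your proposal is correct: the MGF computation, the centering identity, the elementary bound $-\log(1-u)-u\le u^2/\bigl(2(1-u)\bigr)$, and the sub-gamma Legendre-transform step (with $v=2(d+2\nu)$, $c=2$, yielding exactly $2\sqrt{(d+2\nu)x}+2x$) all check out. Note that the paper gives no proof of this lemma at all — it simply cites \cite[Lemma 8.1]{birge2001alternative} — and your argument is essentially the standard Chernoff derivation behind that cited result (the Birg\'e/Laurent--Massart bound), so there is nothing in the paper to diverge from.
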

\begin{lemma}\cite{de2000multilinear}\label{lm.mixedproduct}
Consider tensors $\mathcal{Y}, \mathcal{X}$, and $\mathcal{Z}$, such that $\mathcal{\bm Y} = \mathcal{X} \times_1 \bm A_1\cdots\times_I\bm A_I$, where $\mathcal{X} = \mathcal{Z} \times_1 \bm B_1\cdots \times_I \bm B_I$, for any compatible matrices $\{\bm A_i,\bm B_i\}_{i=1}^I$. Then, $\mathcal{\bm Y} = \mathcal{Z} \times_1 \times_1 (\bm A_1 \bm B_1)\cdots \times_I (\bm A_I \bm B_I)$.
\end{lemma}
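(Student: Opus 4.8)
The plan is to derive Lemma~\ref{lm.mixedproduct} from two elementary properties of the tensor mode product and then chain them. Recall that for a tensor $\mathcal{X}$ and a conformable matrix $\bm A$, the mode-$i$ product contracts only the $i$th index, $(\mathcal{X} \times_i \bm A)_{\ldots, j_i, \ldots} = \sum_{m_i} \mathcal{X}_{\ldots, m_i, \ldots} (\bm A)_{j_i, m_i}$, leaving the remaining indices untouched. From this definition I would first establish two facts. Property~(i), commutativity across distinct modes: for $i \neq j$, $\mathcal{X} \times_i \bm A \times_j \bm B = \mathcal{X} \times_j \bm B \times_i \bm A$, since the two contractions run over independent index sets and the order of summation is immaterial. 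Property~(ii), composition within a single mode: $\mathcal{X} \times_i \bm B \times_i \bm A = \mathcal{X} \times_i (\bm A \bm B)$, which follows by expanding the nested sum $\sum_{k_i} \sum_{m_i} \mathcal{X}_{\ldots, m_i, \ldots} (\bm B)_{k_i, m_i} (\bm A)_{j_i, k_i}$ and recognizing $\sum_{k_i} (\bm A)_{j_i, k_i} (\bm B)_{k_i, m_i} = (\bm A \bm B)_{j_i, m_i}$, which is exactly matrix multiplication along the $i$th index.

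With these in hand, I would substitute $\mathcal{X} = \mathcal{Z} \times_1 \bm B_1 \cdots \times_I \bm B_I$ into $\mathcal{Y} = \mathcal{X} \times_1 \bm A_1 \cdots \times_I \bm A_I$, producing a chain of $2I$ mode products applied to $\mathcal{Z}$: the $I$ factors $\times_1 \bm B_1, \ldots, \times_I \bm B_I$ followed by $\times_1 \bm A_1, \ldots, \times_I \bm A_I$. For each $i \in [I]$, I would invoke Property~(i) repeatedly to slide the factor $\times_i \bm A_i$ leftward, past every intervening factor $\times_j \bm A_j$ and $\times_j \bm B_j$ with $j \neq i$, until it is adjacent to $\times_i \bm B_i$; Property~(ii) then collapses the pair into the single factor $\times_i (\bm A_i \bm B_i)$. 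Carrying this out for all $i$ leaves exactly $\mathcal{Y} = \mathcal{Z} \times_1 (\bm A_1 \bm B_1) \cdots \times_I (\bm A_I \bm B_I)$, which is the claim.

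There is no genuine analytic difficulty here; the statement is purely algebraic and the work is bookkeeping. The only point requiring care is the index management in the commutativity step: when sliding $\times_i \bm A_i$ past another factor I must verify that the mode it crosses is truly distinct from $i$ so that Property~(i) applies, and that conformability holds at each stage (the row dimension of $\bm B_i$ matches the column dimension of $\bm A_i$, so that $\bm A_i \bm B_i$ is well defined). As an alternative, one could argue entirely at the level of matricizations, using the standard identity $\mathcal{Y}_{(i)} = \bm A_i \mathcal{X}_{(i)} \big(\otimes_{j \neq i} \bm A_j\big)^\mathsf{T}$ together with the Kronecker mixed-product rule applied to each unfolding; however, the elementary chaining argument above is the most direct and transparent route.
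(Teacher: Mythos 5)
Your proof is correct and is essentially the standard argument: the paper gives no proof of this lemma (it is cited from \cite{de2000multilinear}), and in that reference the claim follows from exactly your two properties --- commutativity of mode products across distinct modes, and composition within a single mode, $\mathcal{X} \times_i \bm B \times_i \bm A = \mathcal{X} \times_i (\bm A \bm B)$ --- chained in the same way. Incidentally, the duplicated ``$\times_1 \times_1$'' in the paper's statement of the lemma is a typo, and your stated conclusion $\mathcal{\bm Y} = \mathcal{Z} \times_1 (\bm A_1 \bm B_1) \cdots \times_I (\bm A_I \bm B_I)$ is the intended one.
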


% \subsection*{Proof of Theorem \ref{thm.denoising}}
Setting the tensor order $R=I$ and $r$-ranks as $p_i = 1$ for $i\in[I]$ in \cite[Eq. (19)]{balda2016first} leads to~\eqref{eq.appro_denoise_hosvd}. To prove the probabilistic bound, we note that using tensor notation, $\|\otimes_{i=1}^I \hat{\bm y}_i-\otimes_{i=1}^I \bm y_i\|_2=\Vert \hat{\mathcal{\bm Y}} - \mathcal{\bm Y} \Vert_\mathrm{F}$, where $\hat{\mathcal{\bm Y}}$ is HOSVD output and $\mathcal{\bm Y} = \mathcal{\bm Y} + \bar{\mathcal{N}}$, with $ \mathcal{\bm Y},\mathcal{\bm Y}$ and $\bar{\mathcal{N}}$ being the measurement, the noiseless signal, and noise tensors, respectively, from \eqref{eq.Kro_generate_data}.

HOSVD reconstructs $\hat{\mathcal{\bm Y}} = \xi \times_1 \bm e_1\cdots\times_I\bm e_I$. Here, $\bm e_i$, the leading left singular vectors of the $i$th mode matricization of $\bar{\mathcal{Y}}$ given by ${\bm Y}_{(i)} +\bm N_{(i)}$, with ${\bm Y}_{(i)}$ and $\bm N_{(i)}$ are the $i$th mode matricization of $\mathcal{Y}$ and $\bar{\mathcal{N}}$, respectively. Also, $\xi = \bar{\mathcal{\bm Y}} \times_1 \bm e_1^\mathsf{T}\cdots \times_I \bm e_I^\mathsf{T}$, as the signal is real.  
% We first show the relation between $\hat{\mathcal{\bm Y}}$ and $\bar{\mathcal{\bm Y}}$. We note the $i^*$th mode matricization of $\hat{\mathcal{\bm Y}}$, i.e., $\hat{\bm Y}_{(i^*)}$, is \cite{cichocki2015tensor}
% \begin{equation}\label{eq.y_hat_mode}
%     \hat{\bm Y}_{(i^*)}=\bm e_{i^*} \xi \left(\left(\otimes_{i=I}^{i^*+1}\bm e_i \right) \otimes \left(\otimes_{i=i^*-1}^{1}\bm e_i \right)\right)^\mathsf{T}.
% \end{equation}
% Since
% \begin{multline}
%     \xi = \bar{\mathcal{\bm Y}} \times_1 \bm e_1^\mathsf{T}\cdots \times_I \bm e_I^\mathsf{T}\\
%     =\bm e_{i^*}^\mathsf{T} \bar{\bm Y}_{(i^*)} \left(\left(\otimes_{i=I}^{i^*+1}\bm e_i\bm e_i^\mathsf{T} \right) \otimes \left(\otimes_{i=i^*-1}^{1}\bm e_i\bm e_i^\mathsf{T} \right)\right)^\mathsf{T},
% \end{multline}
% which leads to
% \begin{equation}
%     \hat{\bm Y}_{(i^*)}=(\bm e_{i^*}\bm e_{i^*}^\mathsf{T})\bar{\bm Y}_{(i^*)}\left(\left(\otimes_{i=I}^{i^*+1}\bm e_i^\mathsf{T} \right) \otimes \left(\otimes_{i=i^*-1}^{1}\bm e_i^\mathsf{T} \right)\right)^\mathsf{T}.
% \end{equation}
% This holds for all $i^*\in [I]$. Hence, substituting $\xi$ into~\eqref{eq.y_hat_mode} leads to $\hat{\mathcal{\bm Y}}=\hat{\mathcal{\bm Y}}_{(I)}$ where $\hat{\mathcal{\bm Y}}_{(i)}=\bar{\mathcal{\bm Y}} \times_1 (\bm e_1 \bm e_1^\mathsf{T}) \times_2\cdots \times_i (\bm e_i \bm e_i^\mathsf{T})$. 
% This holds for all $i^*\in [I]$. 
Lemma~\ref{lm.mixedproduct} implies
\begin{equation*}
    \hat{\mathcal{\bm Y}}=\left(\mathcal{\bm Y} + \bar{\mathcal{N}} \right) \times_1 (\bm e_1 \bm e_1^\mathsf{T}) \times_2\cdots \times_I (\bm e_I \bm e_I^\mathsf{T}) = \bar{\mathcal{N}}_{(I)}+{\mathcal{\bm Y}}_{(I)},
\end{equation*}
where we define $\bar{\mathcal{N}}_{(I)}=\bar{\mathcal{N}}_{(I)} \times_1 (\bm e_1 \bm e_1^\mathsf{T}) \cdots \times_I (\bm e_I \bm e_I^\mathsf{T})$ and ${\mathcal{\bm Y}}_{(i)}={\mathcal{\bm Y}} \times_1 (\bm e_1 \bm e_1^\mathsf{T})\cdots \times_i (\bm e_i \bm e_i^\mathsf{T})$ with ${\mathcal{\bm Y}}_{(0)}=\mathcal{\bm Y}$.
% Since $\bar{\mathcal{\bm Y}} = \mathcal{\bm Y} + \bar{\mathcal{N}}$, {where $\bar{\mathcal{N}}$ is the noise tensor rearranged from $\bar{\bm n}$ in~\eqref{eq.Kro_generate_data} with variance $\sigma^2_\mathrm{t}$}, we derive
Therefore,  
\begin{equation}
 \|\otimes_{i=1}^I \hat{\bm y}_i-\otimes_{i=1}^I \bm y_i\|_2\!=\left\| \hat{\mathcal{\bm Y}} - \mathcal{\bm Y} \right\|_\mathrm{F}\! \leq \!  \left\| {\mathcal{\bm Y}}_{(I)} - \mathcal{\bm Y} \right\|_\mathrm{F}+\left\| \bar{\mathcal{N}}_{(I)} \right\|_\mathrm{F}.\label{eq.error_bnd_terms}
\end{equation}
% \begin{align}
%  \left\| \hat{\mathcal{\bm Y}} - \mathcal{\bm Y} \right\|_\mathrm{F} %&=\left\| \left(\mathcal{\bm Y} + \bar{\mathcal{N}} \right) \times_1 (\bm e_1 \bm e_1^\mathsf{T}) \times_2\cdots \times_I (\bm e_I \bm e_I^\mathsf{T}) - \mathcal{\bm Y} \right\|_\mathrm{F}\notag\\
%  &\leq \left\| \bar{\mathcal{N}} \times_1 (\bm e_1 \bm e_1^\mathsf{T}) \cdots \times_I (\bm e_I \bm e_I^\mathsf{T}) \right\|_\mathrm{F} 
%  \!+ \!\left\| {\mathcal{\bm Y}}_{(I)} \!- \mathcal{\bm Y} \right\|_\mathrm{F}\notag\\
%  &= \left\| \bar{\mathcal{N}} \times_1 \bm e_1^\mathsf{T}  \cdots \times_I \bm e_I^\mathsf{T}  \right\|_\mathrm{F} 
%  + \left\| {\mathcal{\bm Y}}_{(I)} - \mathcal{\bm Y} \right\|_\mathrm{F},\label{eq.error_bnd_terms}
% \end{align}
%since $\Vert\bm e_i\Vert_2=1$ and we define ${\mathcal{\bm Y}}_{(i)}={\mathcal{\bm Y}} \times_1 (\bm e_1 \bm e_1^\mathsf{T})\cdots \times_i (\bm e_i \bm e_i^\mathsf{T})$ with ${\mathcal{\bm Y}}_{(0)}=\mathcal{\bm Y}$.
To bound the first term in \eqref{eq.error_bnd_terms}, let $\bm P_{i_\perp}$ be the projection matrix orthogonal to $\bm e_i$ so that $\mathcal{\bm Y} = \mathcal{\bm Y}\times_1(\bm P_{1_\perp}+(\bm e_1 \bm e_1^\mathsf{T}))$, leading~to
\begin{align*}
    \mathcal{\bm Y} &= \mathcal{\bm Y}_{(0)}\!\times_1\!\bm P_{1_\perp} +{\mathcal{\bm Y}}_{(1)}= \mathcal{\bm Y}_{(0)}\!\times_1\!\bm P_{1_\perp}+{\mathcal{\bm Y}}_{(1)}\!\times_2\!\bm P_{2_\perp}+{\mathcal{\bm Y}}_{(2)}\\&= \sum_{i=1}^I{\mathcal{\bm Y}}_{(i-1)}\times_{i} \bm P_{i_\perp}+{\mathcal{\bm Y}}_{(I)},
\end{align*}
Therefore, using triangle inequality, we obtain
\begin{align}
\left\| {\mathcal{\bm Y}}_{(I)} - \mathcal{\bm Y} \right\|_\mathrm{F} &\leq \sum_{i=1}^I\left\| {\mathcal{\bm Y}}_{(i-1)}\times_{i} \bm P_{i_\perp}\right\|_\mathrm{F}
\leq \sum_{i=1}^I \left\| \mathcal{\bm Y} \times_i \bm P_{i_\perp} \right\|_\mathrm{F} 
    \notag\\&= \sum_{i=1}^I \left\| \bm P_{i_\perp} {\bm Y}_{(i)} \right\|_\mathrm{F}\leq  \sum_{i=1}^I2\|\bm N_{(i)}\|.\label{eq.term1bnd}
\end{align}
The last step follows from Lemma \ref{lm.pertubed_x}, as $ \bm P_{i_\perp}$ is the projection matrix orthogonal to $\bm e_i$ and the rank of ${\bm Y}_{(i)}$ is 1 from~\eqref{eq.matricization}. Also, Lemma \ref{lm.gaussian_spectral} with $x = \sqrt{2M_i}$ and $\bm Z = \sigma_\mathsf{t}^{-1}\bm N_{(i)}$ implies that with probability at least $1-2e^{-M_i}$
\begin{equation*}
      \|\sigma_\mathsf{t}^{-1}\bm N_{(i)}\|\!\leq\! \sqrt{M_i} + \sqrt{\bar{M}/M_i}+\sqrt{2M_i}\leq 3\sqrt{M_i} + \sqrt{\bar{M}/M_i}.
\end{equation*}
From \eqref{eq.term1bnd}, with probability exceeding $1-2\sum_{i=1}^Ie^{-M_i}$,
\begin{equation}
    \left\| {\mathcal{\bm Y}}_{(I)} - \mathcal{\bm Y} \right\|_\mathrm{F} \geq 2\sigma_\mathsf{t}\sum_{i=1}^I \left[3\sqrt{M_i} + \sqrt{\bar{M}/M_i}\right].\label{eq.probbound1}
\end{equation}

Next, we bound the second term in \eqref{eq.error_bnd_terms}. We note that $\|\sigma_\mathsf{t}^{-1}\bar{\mathcal{N}}_{(I)}\|_\mathrm{F}^2$ is an $1$-dimensional projection of a zero mean unit variance Gaussian tensor and follows $\chi^2_{1}(0)$~\cite[Supplement Sec. C.4]{zhang2019optimal}. Lemma \ref{lm.chi_square} states that for any $x>0$ $\sigma_\mathsf{t}^{-1}\left\| \bar{\mathcal{N}}_{(I)} \right\|_\mathrm{F} \leq \sqrt{1 + 2\sqrt{x} + 2x} \leq 1+2\sqrt{x}$
with probability exceeding $1-e^{-x}$.
%Note $(1+\sqrt{x})^2+x \leq (1+2\sqrt{x})^2$. Taking the square root on both sides of~\eqref{eq.noise_proj}, we have
% \begin{equation*}
%     \left\| \bar{\mathcal{N}} \times_1 \bm e_1^\mathsf{T} \cdots \times_I \bm e_I^\mathsf{T} \right\|_\mathrm{F} \geq \sigma_\mathsf{t}\left(1+2\sqrt{x}\right),
% \end{equation*}
% with probability less than $e^{-x}$. 
Setting $x=\max_{1\leq i\leq I}M_i$ and combining with \eqref{eq.error_bnd_terms} and \eqref{eq.probbound1} using the union bound yields
% \begin{multline*}
%     \left\| \hat{\mathcal{\bm Y}} - \mathcal{\bm Y} \right\|_\mathrm{F}\\\leq \sigma_\mathsf{t}\left(2\sum_{i=1}^I \left[3\sqrt{M_i} + \sqrt{\bar{M}/M_i}\right]+1 + 2\sqrt{\max_{1\leq i\leq I}M_i}\right),
% \end{multline*}
\eqref{eq.probabilistic} with probability exceeding $ 1-2\sum_{i=1}^Ie^{-M_i}-e^{-\max_{1\leq i\leq I}M_i} \geq 1-3\sum_{i=1}^Ie^{-M_i}$.
% \begin{equation*}
%     1-2\sum_{i=1}^Ie^{-M_i}-e^{-\max_{1\leq i\leq I}M_i} \geq 1-3\sum_{i=1}^Ie^{-M_i}.
% \end{equation*}
Hence, the proof is complete. 
% if
% \begin{multline}
%     \mathbb{P}\left( S \geq \sum_{i=1}^I 6\sqrt{M_i} + 2\left( \prod_{l=1,l\neq i}^I M_i \right)^{1/2} +1 + 2\sqrt{M} \right)\\
%     \leq \sum_{i=1}^I \mathbb{P}\left( 2\|\bm N_{(i)}\| \geq 6\sqrt{M_i} + \left(\prod_{l=1,l\neq i}^IM_i\right)^{1/2} \right)\\
%     +\mathbb{P}\left( \left\| \mathcal{N} \times_1 \bm P_{1} \cdots \times_I \bm P_{I} \right\|_\mathrm{F} \geq 1 + 2\sqrt{M} \right)\\
%     \leq 2 \sum_{i=1}^I e^{-M_i} + e^{-M} \leq 2Ie^{-cM} + e^{-M},
% \end{multline}
% indicating
% \begin{equation*}
% \left\| \hat{\mathcal{\bm Y}} - \mathcal{\bm Y} \right\|_\mathrm{F} \leq \left(6I\sqrt{C}+2\right)\sqrt{M} + 2\left( CM\right)^{(I-1)/2} +1
% \end{equation*}
% with probability at least $1 - 2Ie^{-cM} - e^{-M}$. We then have
% \begin{equation*}
%  \mathbb{P}\left(\left\| \hat{\mathcal{\bm Y}} - \mathcal{\bm Y} \right\|^2_\mathrm{F} \leq C_1M + C_2M^{I-1} \right) \geq 1 - C_3e^{-cM},
% \end{equation*}
% for positive constant $C_1$, $C_2$, and $C_3$ irrelevant to $\sigma_\mathrm{t}$ and $M_i$ for $i\in[I]$ but only $I$.

\bibliographystyle{IEEEtran}
\bibliography{refs}
\end{document}